\title{Aperiodic Weighted Automata and Weighted First-Order Logic}
\author{Manfred Droste}{Institut für Informatik, Universität
Leipzig}{droste@informatik.uni-leipzig.de}{}
{Partly supported by a visiting professorship at ENS Paris-Saclay}
\author{Paul Gastin}{LSV, ENS Paris-Saclay, CNRS, Universit{\'e}
Paris-Saclay}{paul.gastin@ens-paris-saclay.fr}{https://orcid.org/0000-0002-1313-7722}
{Partly supported by the DFG Research Training Group QuantLA.}
\authorrunning{M.\ Droste, P.\ Gastin} %mandatory. 
\keywords{Weighted automata, weighted logic, aperiodic automata, first-order 
logic, unambiguous, finitely ambiguous, polynomially ambiguous}
\theoremstyle{plain}
\theoremstyle{remark}
\newtheorem{myremark}{Remark}
\newcommand{\A}{\ensuremath{\mathcal{A}}\xspace}
\newcommand{\B}{\ensuremath{\mathcal{B}}\xspace}
\newcommand{\AphiVp}{\ensuremath{\mathcal{A}_{\varphi,\Variables'}}\xspace}
\newcommand{\BphiV}{\ensuremath{\mathcal{B}_{\varphi,\Variables}}\xspace}
\newcommand{\ApsiV}{\ensuremath{\mathcal{A}_{\Psi,\Variables}}\xspace}
\newcommand\N{\ensuremath{\mathbb{N}}\xspace}
\newcommand\R{\ensuremath{\mathbb{R}}\xspace}
\newcommand{\FO}{\ensuremath{\mathsf{FO}}\xspace}
\newcommand{\stepFO}{\ensuremath{\mathsf{step\text{-}wFO}}\xspace}
\newcommand{\wFO}{\ensuremath{\mathsf{wFO}}\xspace}
\newcommand{\wA}{\ensuremath{\mathsf{wA}}\xspace}
\newcommand{\coreMSO}{\ensuremath{\mathsf{core\text{-}wMSO}}\xspace}
\newcommand{\wMSO}{\ensuremath{\mathsf{wMSO}}\xspace}
\newcommand{\True}{\top}
\newcommand{\False}{\bot}
\def\phi{\varphi}
\renewcommand{\bar}[1]{\overline{#1}}
\newcommand\zero{\mathbf{0}}
\newcommand\Ifthenelse[3]{{#1}\,?\,{#2}:{#3}}
\newcommand\Sum[1]{\textstyle{\sum_{#1}}}
\newcommand\Prod[1]{\textstyle{\prod_{#1}}}
\newcommand\alphabet{\ensuremath{\Sigma}}
\newcommand\Variables{\ensuremath{\mathcal V}}
\newcommand\Weights{\mathsf{R}}
\newcommand\Multiset[1]{\N\langle #1 \rangle}
\newcommand\emptymultiset{\emptyset}
\newcommand\multiset[1]{\{\!\{#1\}\!\}}
\newcommand\stepmultiset[1]{{#1}}
\newcommand\stepsem[1]{[\![ #1 ]\!]}
\newcommand\sem[1]{[\![ #1 ]\!]}
\newcommand\usem[1]{\{\!| #1 |\!\}}
\newcommand\Aggr{\mathsf{aggr}}
\newcommand\Lang[1]{\mathcal{L}(#1)}
\newcommand\pos[1]{\mathsf{pos}(#1)}
\newcommand\dom[1]{\mathsf{supp}(#1)}
\newcommand\wgt{\mathsf{wt}}
\newcommand\Tr{\mathsf{Tr}}
\begin{document}

\maketitle
\begin{abstract}
  By fundamental results of Sch\"utzenberger, McNaughton and Papert from the
  1970s, the classes of first-order definable and aperiodic languages coincide.
  Here, we extend this equivalence to a quantitative setting.  For this, weighted
  automata form a general and widely studied model.  We define a suitable notion
  of a weighted first-order logic.  Then we show that this weighted first-order
  logic and aperiodic polynomially ambiguous weighted automata have the same
  expressive power.  Moreover, we obtain such equivalence results for suitable
  weighted sublogics and finitely ambiguous or unambiguous
  aperiodic weighted automata.  Our results hold for general weight structures,
  including all semirings, average computations of costs, bounded lattices, 
  and others.
\end{abstract}

%%%%%%%%%%%%%%%%%%%%%%%%%%%
\section{Introduction}\label{sec:intro}

Fundamental results of Sch\"utzenberger, McNaughton and Papert established that
aperiodic, star-free and first-order definable languages, respectively, coincide
\cite{Schutzenberger_1965,mp71}.  In this paper, we develop such an equivalence
in a quantitative setting, i.e., for suitable notions of aperiodic weighted
automata and weighted first-order logic.

Already Sch\"utzenberger \cite{Schutzenberger61-ic} investigated weighted
automata and characterized their behaviors as rational formal power series.
Weighted automata can be viewed as classical finite automata in which the
transitions are equipped with weights.  These weights could model, e.g., the
cost, reward or probability of executing a transition.  The wide flexibility of
this automaton model soon led to a wealth of extensions and applications, cf.\
\cite{Salomaa_1978,Kuich_1986,BerstelR1988,Sakarovitch09,DrosteKV2009handbook}
for monographs and surveys.  Whereas traditionally weights are taken from a
semiring, recently, motivated by practical examples, also average and
discounted computations of weights were considered, cf.\
\cite{ChatterjeeDH10tocl,ChatterjeeDH2010lmcs}.

In the boolean setting, the seminal B\"uchi-Elgot-Trakhtenbrot theorem
\cite{Buchi60,Elgot61,Trakhtenbrot61} established the expressive equivalence of
finite automata and monadic second-order logic (MSO).  A weighted monadic
second-order logic with the same expressive power as weighted automata was
developed in \cite{DrosteG-ICALP2005,DrosteG07}.  This led to various extensions
to weighted automata and weighted logics on trees \cite{DrosteV06}, infinite
words \cite{DrosteR10}, timed words \cite{Quaas11}, pictures \cite{Fichtner11},
graphs \cite{DrosteD15}, nested words \cite{DrosteD17}, and data words
\cite{BabariDP18}, but also for more complicated weight structures including
average and discounted calculations \cite{DrosteM12} or multi-weights
\cite{DrosteP16}.  Recently, in \cite{GastinM18}, weighted MSO logic was
revisited with a more structured syntax, called \coreMSO, and shown to be
expressively equivalent to weighted automata, while permitting a uniform
approach to semirings and these more complicated weight structures.

Here, we consider the first-order fragment \wFO of this weighted logic.  It
extends the full classical boolean first-order logic quantitatively by adding
weight constants and if-then-else applications, followed by a first-order
(universal) product and then further if-then-else applications, finite sums, or
first-order (existential) sums.  We will show that its expressive power leads to
aperiodic weighted automata which, moreover, are polynomially ambiguous.
Natural subsets of connectives will correspond to unambiguous or finitely
ambiguous aperiodic weighted automata.  These various levels of ambiguity are
well-known from classical automata theory
\cite{ibarra1986,Weber_1991,KlimannLMP04,Kirsten_2008}.

Following the approach of \cite{GastinM18}, we take an \emph{arbitrary} set $R$
of weights.
A path in a weighted automaton over $R$ then has the sequence of weights of its
transitions as its value.  The \emph{abstract semantics} of the weighted
automaton is defined as the function mapping each non-empty word to the multiset
of weight sequences of the successful paths executing the given word.
Correspondingly, we will define the abstract semantics of \wFO sentences also
as functions mapping non-empty words to multisets of sequences of weights.  Our
main result will be the following.

\begin{theorem}\label{thm:main}
  Let $\Sigma$ be an alphabet and $\Weights$ a set of weights. Then the
  following classes of weighted automata and weighted first-order logics
  are expressively equivalent:
  \begin{enumerate}
    \item Aperiodic polynomially ambiguous weighted automata (\wA) and \wFO sentences,
    
    \item Aperiodic finitely ambiguous \wA and \wFO sentences 
    without first-order sums,
    
    \item Aperiodic unambiguous \wA and \wFO sentences without 
    binary or first-order sums.
  \end{enumerate}
\end{theorem}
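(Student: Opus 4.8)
The plan is to establish each of the three equivalences by proving both inclusions, and to treat all three ambiguity levels uniformly by tracking which weighted connectives a given \wFO formula uses. The guiding principle is that a universal product $\prod_{x}$ of a step function computes exactly the weight of a single deterministic run, a binary sum superimposes finitely many such runs, and a first-order sum $\sum_{x}$ over positions spawns linearly many runs; these three layers correspond precisely to unambiguous, finitely ambiguous, and polynomially ambiguous behaviour.

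For the direction from logic to automata I would argue by structural induction on the \wFO formula. In the base case, a boolean \FO formula is star-free, hence by the Sch\"utzenberger--McNaughton--Papert theorem its minimal deterministic automaton is aperiodic, and being deterministic it is unambiguous. The innermost layer, a universal product $\prod_{x}\psi(x)$ of a step function, is realised by an aperiodic \wA that reads the word and, at each position, emits the weight prescribed by the \FO-definable type of that position, multiplying these weights along its unique run; this is again unambiguous. An if-then-else guarded by a boolean \FO formula merely restricts runs according to the \FO-definable guard, destroying neither aperiodicity nor unambiguity. A binary sum is realised by a disjoint union, raising the ambiguity from unambiguous to finitely ambiguous while preserving aperiodicity. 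The crucial step is the first-order sum: the resulting \wA nondeterministically guesses the distinguished position $x$ and then simulates the subautomaton in the context of that marked position, yielding at most linearly many runs, so that nesting such sums gives exactly polynomial ambiguity.

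For the converse I would exploit structural decompositions of aperiodic automata according to their ambiguity. An unambiguous aperiodic \wA assigns to each word the weight of its unique accepting run; by aperiodicity (via McNaughton--Papert applied to the underlying boolean automaton) the predicate ``position $x$ is traversed in state $q$ by the accepting run'' is \FO-definable, so the sequence of transition weights along the run is a step function and its product is a universal product $\prod_{x}$, yielding a \wFO sentence with neither binary nor first-order sums, which gives item~(3). A finitely ambiguous aperiodic \wA decomposes into a finite sum of unambiguous aperiodic ones; applying the previous case to each summand and combining with binary sums gives a \wFO sentence without first-order sums, establishing item~(2). A polynomially ambiguous aperiodic \wA decomposes into a finite sum of ``chain'' automata, each threading boundedly many unambiguous aperiodic segments at a sequence of cut positions; quantifying these cut positions by a nesting of first-order sums $\sum_{x}$ and expressing each segment as above produces a full \wFO sentence, giving item~(1).

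The main obstacle, in my view, lies in this converse direction: obtaining the ambiguity decompositions \emph{while preserving aperiodicity}, and faithfully aligning the chain of unambiguous segments of a polynomially ambiguous automaton with the position variables ranged over by the first-order sums. The classical ambiguity decompositions do not a priori respect aperiodicity, so one must either redo them inside the variety of aperiodic (group-free) monoids or show that aperiodicity of the whole automaton is inherited by the components. Dually, in the forward direction the delicate point is that the first-order sum construction must keep the transition monoid group-free despite introducing a marked position; this forces the marking to be implemented in an \FO-compatible, order-aware fashion rather than by a symmetric nondeterministic choice, which would in general create periodic behaviour.
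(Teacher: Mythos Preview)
Your plan is correct and aligns with the paper's approach in all essentials. The paper proceeds exactly along your outline: structural induction for the logic-to-automaton direction (with the hard case being an aperiodicity-preserving construction for $\Prod{x}\Psi$, and a projection lemma for $\Sum{x}$ whose aperiodicity proof exploits precisely the one-directional $0\to 1$ marking you anticipate); and for the converse, the unambiguous case uses \FO-definability of ``the run is in state $q$ at position $x$'' via Sch\"utzenberger--McNaughton--Papert, the finitely ambiguous case is reduced to the unambiguous one by an explicit aperiodicity-preserving decomposition into $K$ unambiguous automata, and the polynomially ambiguous case is handled by summing over all sequences of SCC-switching transitions and using $\Sum{x}$ to quantify the switch positions---your ``chain automata with cut positions'' is exactly this picture.
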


Note that these characterizations only need aperiodicity of the underlying 
input automaton and hold without any restrictions on the weights.
The above result applies not only to the abstract semantics.  As immediate
consequence, we obtain corresponding expressive equivalence results for
classical weighted automata over arbitrary (even non-commutative) semirings, or
with average or discounted calculations of weights, or bounded lattices as in
multi-valued logics.  
All our constructions are effective. In fact, given a \wFO sentence
and deterministic aperiodic automata for its boolean subformulas,
we can construct an equivalent aperiodic weighted automaton 
of exponential size.
We give typical examples for our constructions.  We also show that the class
of arbitrary aperiodic weighted automata and its subclasses of polynomially
resp.\ finitely ambiguous or unambiguous weighted automata form a proper
hierarchy for each of the following semirings: natural numbers
$\N_{+,\times}$, the max-plus-semiring $\N_{\max,+}$ and the min-plus semiring
$\N_{\min,+}$.
Results are summarized in the Table~\ref{tbl:summary}. The second column lists existing 
characterizations for general weighted automata. The last column contains our 
results concerning \emph{aperiodic} weighted automata. Examples referred to in 
the table separate the classes.

\begin{table}[t]
  \centering
  \begin{tabular}{|c|l|l|l|}
    \hline
    ambiguity & \multicolumn{1}{c|}{general WA} & \multicolumn{1}{c|}{aperiodic WA} \\
    \hline
    exponential & \wMSO: \cite{DrosteG-ICALP2005,DrosteG07} 
    & {Ex.~\ref{ex:plustimes-exponentially-ambiguous} for $\N_{+,\times}$,
    Ex.~\ref{ex:maxplus-exponentially-ambiguous} for $\N_{\max,+}$}
    \\
    & {Ex.~\ref{ex:minplus-exponentially-ambiguous} for $\N_{\min,+}$: \cite{mazowieckiR2018}}
    & {Ex.~\ref{ex:minplus-exponentially-ambiguous} for $\N_{\min,+}$}
    \\ \hline
    & $\wMSO$ without $\sum_{X}$: \cite{KreutzerR2013}  
    &  \wFO: Thm.~\ref{thm:main2}, Thm.~\ref{thm:tmp12}  
    \\
    polynomial & 
    & {Ex.~\ref{ex:plustimes-polynomially-ambiguous} for $\N_{+,\times}$,
    Ex.~\ref{ex:maxplus-polynomially-ambiguous} for $\N_{\max,+}$}
    \\
    & {Ex.~\ref{ex:minplus-polynomially-ambiguous} for $\N_{\min,+}$: \cite{Kirsten_2008,mazowieckiR2018}}
    & {Ex.~\ref{ex:minplus-polynomially-ambiguous} for $\N_{\min,+}$}
    \\ \hline
    & $\wMSO$ without $\sum_{X},\sum_{x}$: \cite{KreutzerR2013}  
    &  $\wFO$ without $\sum_{x}$: 
    Cor.~\ref{cor:aperiodic-unambiguous-to-logic}(2), Thm.~\ref{thm:tmp12}  \\
    finite & & Ex.~\ref{ex:plustimes-finitely-ambiguous} for $\N_{+,\times}$ \\
    & Ex.\ref{ex:minmaxplus-finitely-ambiguous}: $\N_{\max,+}$ 
    \cite{KlimannLMP04}, $\N_{\min,+}$ \cite{mazowieckiR2018}  
    & Ex.~\ref{ex:minmaxplus-finitely-ambiguous} for $\N_{\max,+}$
    and $\N_{\min,+}$
    \\ \hline
    unambiguous & $\wMSO$ without $\sum_{X},\sum_{x},{+}$: \cite{KreutzerR2013}  &  $\wFO$ without 
    $\sum_{x},{+}$: 
    Cor.~\ref{cor:aperiodic-unambiguous-to-logic}(1), Thm.~\ref{thm:tmp12}  \\
    \hline
  \end{tabular}
  \caption{Summary of our main results.}
  \label{tbl:summary}
\end{table}

It should be noticed that standard constructions used to establish
equivalence between automata and MSO logic cannot be applied.  Indeed, starting
from an automaton $\A$, one usually constructs an \emph{existential} MSO sentence
where the existential set quantifications are used to guess an accepting run and
the easy first-order kernel is used to check that this guess indeed defines an
accepting run.  Here, we cannot use quantifications $\exists X$ over set
variables $X$, or their weighted equivalent $\sum_{X}$.  Instead, we take
advantage of the fine structure of possible paths of polynomially ambiguous
automata, namely the fact that it must be unambiguous on strongly connected
components (SCC-unambiguous), as employed for different goals already in
\cite{ibarra1986,Weber_1991}.  We first give a new construction of a
$\wFO$ sentence without sums starting from an aperiodic and 
unambiguous automaton. Then, we extend the construction to 
polynomially ambiguous aperiodic automata using first-order sums $\Sum{x}$ to 
guess positions where the run switches between the unambiguous SCCs.
For part 2 of Theorem~\ref{thm:main}, we also prove that for each \emph{aperiodic}
finitely ambiguous weighted automaton we can construct finitely many \emph{aperiodic}
unambiguous weighted automata whose disjoint union has the same semantics.

Again, for the implication from weighted formulas to weighted automata, we
cannot simply use standard constructions which crucially rely on the fact that
functions defined by weighted automata are closed under morphic images.  This
was used to handle first-oder sums $\Sum{x}$ and second-order sums $\Sum{X}$,
but also in the more involved proof for the first-order product $\Prod{x}$
applied to finitely valued weighted automata.  But it is well-known that
aperiodic languages are not closed under morphic images.  Handling the
first-order product $\Prod{x}$ requires a completely new and highly non-trivial
proof preserving aperiodicity properties.

\smallskip\noindent
\textbf{Related work}.
In \cite{KreutzerR2013}, polynomially ambiguous, finitely ambiguous
and unambiguous weighted automata (without assuming aperiodicity) over commutative semirings
were shown to be expressively equivalent to suitable fragments of weighted
monadic second order logic. This was further extended  in \cite{Paul_2016}
to cover polynomial degrees and weighted tree automata.

A hierarchy of these classes of weighted automata (again without assuming aperiodicity)
over the max-plus semiring was described in \cite{KlimannLMP04}.
As a consequence of pumping lemmas for weighted automata, a similar hierarchy
was obtained in \cite{mazowieckiR2018} for the min-plus semiring.

We note that in \cite{DrosteG07,DrosteV12}, an equivalence result for
full weighted first-order logic was given, but only for very particular classes
of semirings or strong bimonoids as weight structures.

A characterization of the full weighted first-order logic with transitive closure
by weighted pebble automata was obtained in \cite{BolligGMZ14}.
An equivalence result for fragments of weighted first-order logic, weighted LTL
and weighted counter-free automata over the max-plus semiring with discounting
was given in \cite{Mandrali_2014}.
Various further equivalences to boolean first-order definability of languages were
described in the survey \cite{DiGa08Thomas}. Due to its possible applications
for quantitative verification questions, it remains a challenging problem to develop
a weighted linear temporal logic for general classes of semirings with sufficiently
large expressive power.

%%%%%%%%%%%%%%%%%%%%%%%%%%%
\section{Preliminaries}\label{sec:prelim}

A non-deterministic automaton is a tuple $\A=(Q,\Sigma,\Delta)$ where $Q$ is a
finite set of states, $\Sigma$ is a finite alphabet, and $\Delta\subseteq
Q\times\Sigma\times Q$ is the set of transitions.  The automaton $\A$ is
complete if $\Delta(q,a)\neq\emptyset$ for all $q\in Q$ and $a\in\Sigma$.
A run $\rho$ of $\A$ is a nonempty sequence of transitions
$\delta_1=(p_1,a_1,q_1)$, $\delta_2=(p_2,a_2,q_2)$, \ldots,
$\delta_n=(p_n,a_n,q_n)$ such that $q_i=p_{i+1}$ for all $1\leq i<n$.  We say
that $\rho$ is a run from state $p_1$ to state $q_n$ and that $\rho$ reads, or 
has label, the word $a_1a_2\cdots a_n\in\Sigma^+$.  We denote by
$\Lang{\A_{p,q}}\subseteq\Sigma^*$ the set of labels of runs of $\A$ from $p$ to 
$q$. When $p=q$, we include the empty word $\varepsilon$ in $\Lang{\A_{p,q}}$ 
and say that $\varepsilon$ labels the empty run from $p$ to $p$.

An automaton with accepting conditions is a tuple $\A=(Q,\Sigma,\Delta,I,F)$
where $(Q,\Sigma,\Delta)$ is a non-deterministic automaton, $I,F\subseteq Q$ are
the sets of initial and final states respectively.  The language defined by the
automaton is $\Lang{\A}=\Lang{\A_{I,F}}=\bigcup_{p\in I,q\in F}\Lang{\A_{p,q}}$.
Subsequently, we also consider automata with several accepting sets $F,G,\ldots$ so that
the same automaton may define several languages $\Lang{\A_{I,F}}$,
$\Lang{\A_{I,G}}$, \ldots An automaton $\A=(Q,\Sigma,\Delta,I,F)$ is
\emph{deterministic} if $I=\{\iota\}$ is a singleton and the set $\Delta$ of
transitions is a partial function: for all $(p,a)\in Q\times\Sigma$ there is at
most one state $q\in Q$ such that $(p,a,q)\in\Delta$.

Next, we consider degrees of ambiguity of automata.  A run in an automaton is
\emph{successful}, if it leads from an initial to a final state.  The automaton
$\A$ is called \emph{polynomially ambiguous} if there is a polynomial $p$ such
that for each $w\in\Sigma^+$ the number of successful paths in $\A$ for $w$ is
at most $p(|w|)$.  Then, $\A$ is \emph{finitely ambiguous} if $p$ can be taken to be a constant.
Further, for an integer $k\geq1$, $\A$ is $k$-ambiguous if $p=k$, and
\emph{unambiguous} means 1-ambiguous.  Notice that $k$-ambiguous implies
$(k+1)$-ambiguous.  An automaton $\A$ is at most \emph{exponentially
ambiguous}.

A non-deterministic automaton $\A=(Q,\Sigma,\Delta)$ is \emph{aperiodic} if
there exists an integer $m\geq1$, called aperiodicity index, such that for all
states $p,q\in Q$ and all words $u\in\Sigma^+$, we have $u^m\in\Lang{\A_{p,q}}$
iff $u^{m+1}\in\Lang{\A_{p,q}}$.
In other words, the non-deterministic automaton $\A$ is aperiodic iff its
transition monoid $\Tr(\A)$ is aperiodic.  
It is well-known that aperiodic languages coincide with
first-order definable languages, cf.~\cite{Schutzenberger_1965,mp71,DiGa08Thomas}.

The syntax of first-order logic is given in Section~\ref{sec:wFO}
\eqref{eq:fo}.  The semantics is defined by structural
induction on the formula and requires an interpretation of the free variables.
Let $\Variables=\{y_1,\ldots,y_n\}$ be a finite set of first-order variables.
Given a nonempty word $u\in\Sigma^+$, we let $\pos{u}=\{1,\ldots,|u|\}$ be the
set of positions of $u$.  A valuation or interpretation is a map
$\sigma\colon\Variables\to\pos{u}$ assigning positions of $u$ to variables in
$\Variables$.  For a first-order formula $\varphi$ having free variables
contained in $\Variables$, we write $u,\sigma\models\varphi$ when the word $u$
satisfies $\varphi$ under the interpretation defined by $\sigma$.
When $\varphi$ is a \emph{sentence}, the valuation $\sigma$ is not needed and we
simply write $u\models\varphi$.

We extend the classical semantics by defining when the empty word $\varepsilon$
satisfies a sentence.  We have $\varepsilon\models\True$ and if $\forall x\psi$
is a sentence then $\varepsilon\models\forall x\psi$.  The semantics
$\varepsilon\models\varphi$ is extended to all sentences $\varphi$ since they are
boolean combinations of the basic cases above.  Notice that if $\varphi$ has
free variables then $\varepsilon\models\varphi$ is not defined. When $\varphi$ 
is a sentence we denote by $\Lang{\varphi}\subseteq\Sigma^{*}$ the set of words 
satisfying $\varphi$. Notice that $\Lang{\forall x\False}=\{\varepsilon\}$ 
where $\False=\neg\True$.

\begin{theorem}[\cite{Schutzenberger_1965,mp71,DiGa08Thomas}]\label{thm:ap2fo}
  Let $\A$ be an aperiodic non-deterministic automaton.  For
  all states $p,q$ of $\A$ we can construct a first-order sentence
  $\varphi_{p,q}$ such that $\Lang{\A_{p,q}}=\Lang{\varphi_{p,q}}$.
\end{theorem}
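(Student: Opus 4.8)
The plan is to prove Theorem~\ref{thm:ap2fo} by reducing the statement about an \emph{aperiodic} automaton to the classical Sch\"utzenberger--McNaughton--Papert theorem that aperiodic (equivalently, counter-free or star-free) languages are exactly the first-order definable ones. The key observation is that for fixed states $p,q$, the set $\Lang{\A_{p,q}}$ is itself a regular language, and what must be checked is that it is aperiodic in the usual language-theoretic sense, so that the existing equivalence yields a first-order sentence $\varphi_{p,q}$ with $\Lang{\varphi_{p,q}}=\Lang{\A_{p,q}}$.

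First I would recall that a language $L\subseteq\Sigma^*$ is \emph{aperiodic} (as a language) exactly when its syntactic monoid is aperiodic, and that by the cited results this is equivalent to $L$ being first-order definable. So the whole proof amounts to showing: if the non-deterministic automaton $\A$ is aperiodic in the sense defined above (its transition monoid $\Tr(\A)$ is aperiodic), then each language $\Lang{\A_{p,q}}$ is an aperiodic language. The natural route is through the transition monoid. Each word $u\in\Sigma^*$ induces a transition matrix $M(u)\in\{0,1\}^{Q\times Q}$, where $M(u)_{p,q}=1$ iff $u\in\Lang{\A_{p,q}}$; this is exactly the Boolean-matrix representation of $\Tr(\A)$. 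The automaton being aperiodic means precisely that there is an $m$ with $M(u^m)=M(u^{m+1})$ for all $u$, i.e.\ $\Tr(\A)$ satisfies the aperiodicity identity $x^m=x^{m+1}$.

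Next I would connect $\Lang{\A_{p,q}}$ to this monoid. Fixing $p,q$, membership of $u$ in $\Lang{\A_{p,q}}$ depends only on the image $M(u)$ in $\Tr(\A)$ (together with the convention that $\varepsilon$ labels the empty run, so the identity matrix records $p=q$). Hence $\Lang{\A_{p,q}}$ is recognized by the finite monoid $\Tr(\A)$ via the morphism $u\mapsto M(u)$, with accepting set $\{M\in\Tr(\A)\mid M_{p,q}=1\}$. A standard fact is that the syntactic monoid of a language divides any monoid recognizing it, and a divisor (submonoid of a quotient) of an aperiodic monoid is again aperiodic. Since $\Tr(\A)$ is aperiodic, the syntactic monoid of $\Lang{\A_{p,q}}$ is aperiodic, so $\Lang{\A_{p,q}}$ is an aperiodic language. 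Applying the Sch\"utzenberger--McNaughton--Papert equivalence then produces the desired first-order sentence $\varphi_{p,q}$, and the construction is effective because $\Tr(\A)$ is finite and computable and the passage from an aperiodic language to a first-order formula is effective.

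The main obstacle, and the step needing the most care, is the bookkeeping around the empty word and the chosen semantics on $\varepsilon$: the paper stipulates $\varepsilon\in\Lang{\A_{p,q}}$ exactly when $p=q$, and that first-order sentences have a prescribed meaning on $\varepsilon$ (e.g.\ $\varepsilon\models\True$ and $\varepsilon\models\forall x\,\psi$). I would therefore verify that the first-order sentence obtained from the classical theorem can be adjusted to match this $\varepsilon$-convention, for instance by taking a boolean combination that handles the empty word separately (using $\forall x\,\False$, whose language is $\{\varepsilon\}$, to toggle membership of $\varepsilon$). Everything else is an essentially mechanical transfer through the transition-monoid representation; the only genuinely content-bearing facts are that division preserves aperiodicity and that the classical first-order/aperiodic equivalence is effective.
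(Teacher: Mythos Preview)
The paper does not prove this theorem; it states it with citations to Sch\"utzenberger, McNaughton--Papert, and the survey \cite{DiGa08Thomas} and uses it as a black box. Your proposal is a correct and standard derivation: recognizing each $\Lang{\A_{p,q}}$ via the aperiodic transition monoid $\Tr(\A)$, transferring aperiodicity to the syntactic monoid by division, and then invoking the classical equivalence (with the $\varepsilon$-adjustment you note) is exactly how one would flesh out the cited result, so there is nothing to compare against and no gap to report.
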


For the converse of Theorem~\ref{thm:ap2fo}, we need a stronger statement to
deal with formulas having free variables.  As usual, we encode a pair
$(u,\sigma)$ where $u\in\Sigma^+$ is a nonempty word and
$\sigma\colon\Variables\to\pos{u}$ is a valuation by a word $\bar{u}$ over the
extended alphabet $\Sigma_\Variables=\Sigma\times\{0,1\}^\Variables$.  A word
$\bar{u}$ over $\Sigma_\Variables$ is a \emph{valid} encoding if for each
variable $y\in\Variables$, its projection on the $y$-component belongs to
$0^*10^*$.  Throughout the paper, we identify a valid word $\bar{u}$ with its
encoded pair $(u,\sigma)$.

\begin{theorem}[\cite{Schutzenberger_1965,mp71,DiGa08Thomas}]\label{thm:fo2ap}
  For each \FO-formula $\varphi$ having free variables contained in
  $\Variables$, we can build a deterministic, complete and aperiodic automaton
  $\A_{\varphi,\Variables}=(Q,\Sigma_\Variables,\Delta,\iota,F,G)$ over the
  extended alphabet $\Sigma_\Variables$ such that for all words
  $\bar{u}\in\Sigma_\Variables^{+}$ we have:
  \begin{itemize}
    \item $\Delta(\iota,\bar{u}) \in F$ iff $\bar{u}$
    is a valid encoding of a pair $(u,\sigma)$ with $(u,\sigma)\models\varphi$,
  
    \item $\Delta(\iota,\bar{u}) \in G$ iff $\bar{u}$
    is a valid encoding of a pair $(u,\sigma)$ with 
    $(u,\sigma)\models\neg\varphi$,

    \item $\Delta(\iota,\bar{u}) \notin F\cup G$ otherwise, i.e., 
    iff $\bar{u}$ is not a valid encoding of a pair $(u,\sigma)$.
    \end{itemize}
\end{theorem}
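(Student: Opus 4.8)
The plan is to reduce the statement to the classical equivalence ``\FO-definable $=$ aperiodic'' (equivalently, syntactic monoid aperiodic), which we may invoke since it is precisely the content behind Theorem~\ref{thm:ap2fo}. The first step is to absorb the valuation into the alphabet so that a formula with free variables becomes a sentence over $\Sigma_\Variables$. For each $y\in\Variables$ I introduce the \FO-predicate $\mathrm{one}_y(z)$, true when the $y$-component of the letter at position $z$ is $1$; this is a union of label predicates of $\Sigma_\Variables$. I reinterpret each original label predicate $a(x)$ as the disjunction over all letters of $\Sigma_\Variables$ whose $\Sigma$-projection is $a$, obtaining from $\varphi$ a formula $\varphi^\ast$ over $\Sigma_\Variables$. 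I then write a sentence $\mathrm{valid}$ asserting that for every $y\in\Variables$ there is exactly one position with $\mathrm{one}_y$; this is first-order, as it only states existence and uniqueness. Setting (after renaming bound variables to avoid clashes) $\hat\varphi=\mathrm{valid}\wedge\exists y_1\cdots\exists y_n\bigl(\bigwedge_{i}\mathrm{one}_{y_i}(y_i)\wedge\varphi^\ast\bigr)$, the quantifiers pick out the marked positions, which are unique on valid encodings, so $\hat\varphi$ holds on a word $\bar u\in\Sigma_\Variables^+$ exactly when $\bar u$ validly encodes a pair $(u,\sigma)$ with $(u,\sigma)\models\varphi$.

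Next I put $L_F=\Lang{\hat\varphi}\cap\Sigma_\Variables^+$ and $L_G=\mathrm{Valid}\setminus L_F$, where $\mathrm{Valid}=\Lang{\mathrm{valid}}\cap\Sigma_\Variables^+$. Both are \FO-definable over $\Sigma_\Variables$ (indeed $L_G=\Lang{\mathrm{valid}\wedge\exists y_1\cdots\exists y_n(\bigwedge_i\mathrm{one}_{y_i}(y_i)\wedge\neg\varphi^\ast)}$), so by Sch\"utzenberger--McNaughton--Papert their syntactic monoids are aperiodic, and hence their minimal automata $M_F$ and $M_G$ are deterministic, complete and aperiodic. By design $L_F$ and $L_G$ are disjoint with $L_F\cup L_G=\mathrm{Valid}$, so the words outside $L_F\cup L_G$ are exactly the invalid encodings. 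I then take $\A_{\varphi,\Variables}$ to be the accessible part of the product $M_F\times M_G$, with initial state the pair of initial states; I set $F$ to be the pairs whose first component accepts in $M_F$, and $G$ to be the pairs whose first component does \emph{not} accept in $M_F$ but whose second component accepts in $M_G$. Then $F\cap G=\emptyset$, a run on $\bar u$ ends in $F$ iff $\bar u\in L_F$, in $G$ iff $\bar u\in L_G$ (using disjointness), and in neither iff $\bar u$ is invalid, which is the required behaviour.

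It remains to check that $\A_{\varphi,\Variables}$ has the stated structural properties. Determinism and completeness are preserved by products and by restriction to accessible states. For aperiodicity, the transition monoid of the product divides $\Tr(M_F)\times\Tr(M_G)$; a finite product of aperiodic monoids is aperiodic (take the maximum of the two aperiodicity indices, since $(x,y)^m=(x^m,y^m)$), and aperiodicity passes to divisors, so $\Tr(\A_{\varphi,\Variables})$ is aperiodic, as required by the definition of an aperiodic automaton.

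The genuine mathematical weight of this statement sits entirely in the invoked \FO$=$aperiodic equivalence; the remainder is bookkeeping around the encoding and the product. The step I expect to be most delicate is the decision to route through that semantic characterization in the first place: a naive structural induction on $\varphi$ would founder on the existential quantifier, since eliminating $\exists x$ amounts to projecting away the $x$-component of the alphabet and then determinizing, and such morphic images do \emph{not} preserve aperiodicity in general. Appealing to Sch\"utzenberger--McNaughton--Papert circumvents this obstacle cleanly, at the price of not controlling the size of $\A_{\varphi,\Variables}$ explicitly; an effective bound, if desired, would require handling the quantifier case directly, e.g.\ through a counter-free or star-free normal form.
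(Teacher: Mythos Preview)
The paper does not give a proof of this theorem; it is stated as a classical background result with citations to Sch\"utzenberger, McNaughton--Papert, and the survey \cite{DiGa08Thomas}, and is then used as a black box throughout Section~\ref{sec:wfo2wa}. There is therefore nothing to compare your argument against.

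That said, your proof is correct and is essentially the standard reduction one would extract from the cited sources. The key move---absorbing the free variables into the alphabet so that the statement for formulas reduces to the sentence case, and then invoking the \FO$=$aperiodic equivalence on the resulting sentences $\hat\varphi$ and $\mathrm{valid}\wedge\neg\hat\varphi$---is exactly right. Your remark that a naive structural induction fails at the existential quantifier because projection does not preserve aperiodicity is on point and is precisely why the detour through the semantic characterisation is needed. The product construction and the verification that determinism, completeness, and aperiodicity survive are routine and correctly handled; your explicit exclusion of $F$ in the definition of $G$ is harmless (disjointness of $L_F$ and $L_G$ already forces it on accessible states) but makes $F\cap G=\emptyset$ evident at the state level.
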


Given $u\in\Sigma^{+}$ and integers $k,\ell$, we denote by $u[k,\ell]$ the
factor of $u$ between positions $k$ and $\ell$.  By convention
$u[k,\ell]=\varepsilon$ is the empty word when $\ell<k$ or $\ell=0$ or $k>|u|$.
 
We will apply the equivalence of Theorem~\ref{thm:ap2fo} to prefixes, infixes or suffixes of words.
Towards this, we use the classical \emph{relativization} of sentences.  Let
$\varphi$ be a first-order sentence and let $x,y\in\Variables$ be first-order
variables.  We define below the relativizations $\varphi^{<x}$, $\varphi^{(x,y)}$ and
$\varphi^{>y}$ so that for all words $u\in\Sigma^+$, and all positions
$i,j\in\pos{u}=\{1,\ldots,|u|\}$ we have
\begin{align*}
  u,x\mapsto i & \models \varphi^{<x} & \text{iff} && u[1,i-1] & \models \varphi \\
  u,x\mapsto i,y\mapsto j & \models \varphi^{(x,y)} & \text{iff} && u[i+1,j-1] & \models \varphi \\
  u,x\mapsto j & \models \varphi^{>x} & \text{iff} && u[j+1,|u|] & \models \varphi 
\end{align*}
Notice that, when $i=1$ or $j\leq i+1$ or $j=|u|$, the relativization is on the
empty word, this is why we had to define when $\varepsilon\models\psi$ for
sentences $\psi$.  The relativization is defined by structural induction on the
formulas as follows:
\begin{align*}
  \True^{<x} & =\True & (P_a(z))^{<x} & =P_a(z) & (y\leq z)^{<x} & = (y\leq z) \\
  (\neg\psi)^{<x} & = \neg(\psi^{<x}) & (\psi_1\wedge\psi_2)^{<x} & = 
  \psi_1^{<x} \wedge \psi_2^{<x} & (\forall z\psi)^{<x} & = \forall z(z<x 
  \implies \psi^{<x})
\end{align*}
The relativizations $\varphi^{(x,y)}$ and $\varphi^{>x}$ are defined similarly.
Notice that when $\varphi$ is a sentence, i.e., a boolean combination of 
formulas of the form $\True$ or $\forall z\psi$, then the above equivalences 
hold even when $i=1$ for $\phi^{<x}$, or when $i=|u|$ for $\phi^{>x}$, or when 
$j\leq i+1$ for $\varphi^{(x,y)}$.

%%%%%%%%%%%%%%%%%%%%%%%%%%%
\section{Weighted Automata}\label{sec:wA}

Given a set $X$, we let $\Multiset{X}$ be the collection of all finite multisets
over $X$, i.e., all functions $f\colon X \rightarrow \N$ such that $f(x) \ne 0$
only for finitely many $x \in X$.  The multiset union $f \uplus g$ of two
multisets $f,g \in \Multiset{X}$ is defined by pointwise addition of functions:
$(f \uplus g)(x) = f(x) + g(x)$ for each $x \in X$.

For a set $\Weights$ of weights, an $\Weights$-weighted automaton over $\Sigma$
is a tuple $\A=(Q,\Sigma,\Delta,\wgt)$ where $(Q,\Sigma,\Delta)$ is a
non-deterministic automaton and $\wgt\colon\Delta\to\Weights$ assigns a weight
to every transition.
The weight sequence of a run $\rho=\delta_1\delta_2\cdots\delta_n$ is  
$\wgt(\rho)=\wgt(\delta_1)\wgt(\delta_2)\cdots\wgt(\delta_n)\in\Weights^+$. The 
\emph{abstract semantics} of $\A$ from state $p$ to state $q$ is the map 
$\usem{\A_{p,q}}\colon\Sigma^+\to\Multiset{\Weights^+}$ which assigns to a 
word $u\in\Sigma^+$ the multiset of weight sequences of runs from $p$ to $q$ 
with label $u$:
\[
\usem{\A_{p,q}}(u)=\multiset{\wgt(\rho)\mid\rho \text{ is a run from $p$ to $q$ 
with label } u}\,.
\]
Notice that $\usem{\A_{p,q}}(u)=\emptymultiset$ is the empty multiset when there
are no runs of $\A$ from $p$ to $q$ with label $u$, i.e., when
$u\notin\Lang{\A_{p,q}}$.  When we consider a weighted automaton
$\A=(Q,\Sigma,\Delta,\wgt,I,F)$ with initial and final sets of states, for all
$u\in\Sigma^+$ the semantics $\usem{\A}$ is defined as the multiset union:
$\usem{\A}(u)=\biguplus_{p\in I,q\in F}\usem{\A_{p,q}}(u)$.  Hence, $\usem{\A}$
assigns to every word $u\in\Sigma^{+}$ the multiset of all weight sequences of
accepting runs of $\A$ reading $u$.
The support of $\A$ is the set of words
$u\in\Sigma^+$ such that $\usem{\A}(u)\neq\emptymultiset$, i.e.,
$\dom{\A}=\Lang{\A}$.
\begin{gpicture}[name=WA1,ignore]
  \node[Nmarks=i](1)(0,0){1}
  \node(2)(20,0){2}
  \node(3)(40,0){3}
  \node[Nmarks=f](4)(60,0){4}
  \drawloop(1){$a \mid 2$}
  \drawedge(1,2){$a \mid 1$}
  \drawloop(2){$a \mid 3$}
  \drawedge[curvedepth=2](2,3){$a \mid 5$}
  \drawloop(3){$b \mid 5$}
  \drawedge[curvedepth=2](3,2){$b \mid 3$}
  \drawedge(3,4){$b \mid 1$}
  \drawloop(4){$b \mid 2$}
\end{gpicture}
\begin{figure}[tbp]
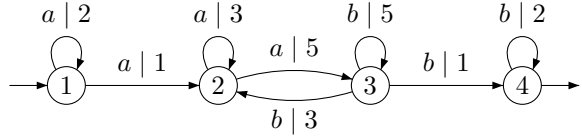

  \centerline{\gusepicture{WA1}}
  \caption{A weighted automaton, which is both aperiodic and polynomially ambiguous.}
  \protect\label{fig:WA-SCC-unambiguous}
\end{figure}

For instance, consider the weighted automaton $\A$ of 
Figure~\ref{fig:WA-SCC-unambiguous}. We have $\dom{\A}=a^+a(a+b)^*b^+$. 
Moreover, consider $w=a^m(ba)^nb^p$ with $m>1$ and $p>0$. We have 
$w\in\dom{\A}$ and
\[
\usem{\A}(w)=\multiset{2^{k-1}\cdot 1\cdot 3^{m-k-1}\cdot 5\cdot (3\cdot 5)^n\cdot 
5^{\ell-1}\cdot 1\cdot 2^{p-\ell} \mid 1\leq k<m \text{ and } 1\leq\ell\leq p} \,.
\]

A concrete semantics over semirings, or valuation monoids, or valuation 
structures can be obtained from the abstract semantics defined above by 
applying the suitable aggregation operator 
$\Aggr\colon\Multiset{\Weights^+}\to S$ as explained in \cite{GastinM18}.
For convenience, we include a short outline.

A \emph{semiring} is a structure  $(S,+,\times,0,1)$  where
$(S,+,0)$ is a commutative monoid, $(S,\times,1)$  is a monoid,
multiplication distributes over addition, and $0\times s = s \times 0 = 0$
for each  $s \in S$. If the multiplication is commutative, we say that
$S$  is \emph{commutative}. If the addition is idempotent,
the semiring is called \emph{idempotent}.
Important examples of semirings include:
\begin{itemize}
  \item the natural numbers  $\N_{+,\times}=(\N,+,\times,0,1)$  with the usual addition and multiplication;
  \item the Boolean semiring   $\B = (\{0,1\},\vee,\wedge,0,1)$;
  \item the min-plus (or tropical) semiring  $\N_{\min,+}=(\N \cup \{\infty\},\min,+,\infty,0)$;
  \item the max-plus (or arctical) semiring  $\N_{\max,+}=((\N \cup \{-\infty\},\max,+,-\infty,0)$;
  \item the semiring of languages 
  $(\mathcal{P}(\Sigma^*),\cup,\cdot,\emptyset,\{\varepsilon\})$
  where  $\cdot$  denotes concatenation of languages;
  \item the semiring of multisets of sequences
  $(\Multiset{\Weights^*},\uplus,\cdot,\emptyset,\multiset{\varepsilon})$.
  
  Here, 
  $\cdot$ denotes the concatenation of multisets (Cauchy product), cf.\
  \cite{GastinM18}.
\end{itemize}

Let  $(S,+,\times,0,1)$ be a semiring and  $\A=(Q,\Sigma,\Delta,\wgt)$ be an
$S$-weighted automaton over  $\Sigma$. The value of a run
$\rho = \delta_1 \delta_2 \cdots \delta_n$  is then defined as
$\mathsf{val}(\rho) = \wgt(\delta_1) \times \wgt(\delta_2) \times\cdots\times \wgt(\delta_n)$.
The \emph{concrete semantics} of  $\A$  is the function
$\sem{\A}\colon \Sigma^+ \rightarrow S$  given by
$\sem{\A}(w) = \sum_\rho val(\rho)$  where the sum is taken over all
successful paths  $\rho$  executing the word  $w$.

Let us define the \emph{aggregation function}
$\Aggr_{\mathsf{sp}}\colon\Multiset{\Weights^+}\to S$ by letting 
$\Aggr_{\mathsf{sp}}(f)$ be the sum over
all sequences $s_1s_2\cdots s_k$ in the multiset $f$ of the products $s_1\times
s_2\times\cdots\times s_k$ in $S$.  It follows that the concrete semantics of
$\A$ is the composition of the aggregation function and the abstract semantics
of $\A$, i.e., $\sem{\A}(w) = \Aggr_{\mathsf{sp}}(\usem{\A}(w))$ for all $w \in \Sigma^+$.
Also, the abstract semantics $\usem{\A}$ conicides with the concrete semantics
of $\A$ over the semiring of multisets of sequences over $S$, i.e., $\usem{\A} =
\sem{\A}$ (since the aggregation function is the identity function).

As another example, assume the weights of $\A$ are taken in $\R_{\geq 0} \cup
\{-\infty\})$, the weight of a run $\rho$ is computed as the average
$\mathsf{avg}(\rho) = (\wgt(\delta_1) + \cdots + \wgt(\delta_n))/n$ of the
weights in $\rho$, and the concrete semantics of $\A$ is defined for $w \in
\Sigma^+$ by $\sem{\A}(w) = \max_\rho \mathsf{avg}(\rho)$ where the maximum is
taken over all successful runs $\rho$ executing $w$, cf.\
\cite{ChatterjeeDH2010lmcs,ChatterjeeDH10tocl,DrosteM12}.  In this case, we define the
aggregation $\Aggr_{\mathsf{ma}}(M)$ of a multiset $M$ by taking the maximum of all averages
of sequences in $M$.  Again, we obtain $\sem{\A}(w) = \Aggr_{\mathsf{ma}}(\usem{\A}(w))$ for
all $w \in \Sigma^+$.  See \cite{GastinM18} for further discussion and examples.

Now, consider the natural semiring $(\N,+,\times,0,1)$ and the
sum-product aggregation operator $\Aggr_{\mathsf{sp}}$.  We continue the example
above with the automaton $\A$ of Figure~\ref{fig:WA-SCC-unambiguous} and the
word $w=a^m(ba)^nb^p$ with $m>1$ and $p>0$.  The concrete semantics is given by
\[
\sem{\A}(w)=\Aggr_{\mathsf{sp}}(\usem{\A}(w))
=\sum_{1\leq k<m} \sum_{1\leq\ell\leq p}
2^{k-1+p-\ell} 3^{m-k-1+n} 5^{n+\ell}  \,.
\]

%%%%%%%%%%%%%%%%%%%%%%%%%%%
\section{Finitely ambiguous Weighted Automata}\label{sec:fa-wa}
\begin{gpicture}[name=3ambiguous,ignore]
  \node[Nmarks=i,iangle=180](1)(0,0){1}
  \node(2)(20, 7){2}
  \node(3)(20,-7){3}
  \node(4)(40, 7){4}
  \node(5)(40,-7){5}
  \node[Nmarks=f,fangle=0](6)(60,0){6}
  \drawloop[loopangle=90](1){$a \mid 2$}
  \drawedge[curvedepth=0](1,2){$a \mid 2$}
  \drawedge[curvedepth=0,ELside=r](1,3){$a \mid 1$}
  \drawedge[curvedepth=0](2,4){$a \mid 1$}
  \drawedge[curvedepth=0,ELdist=0](2,5){$a \mid 3$}
  \drawedge[curvedepth=0,ELside=r](3,5){$a \mid 5$}
  \drawedge[curvedepth=0](4,6){$a \mid 4$}
  \drawedge[curvedepth=0,ELside=r](5,6){$b \mid 3$}
  \drawloop[loopangle=90](6){$b \mid 3$}
\end{gpicture}
In this section, we investigate finitely ambiguous weighted automata.  It
was shown in \cite{KlimannLMP04} that over the max-plus semiring $\N_{\max,+}$
they are expressively 
equivalent to finite disjoint unions of unambiguous weighted automata.
Moreover, it was proved in \cite{Sakarovitch_2009} that a $K$-valued rational
transducer can be decomposed into $K$ unambiguous transducers.  In particular
this implies that a $K$-ambiguous weighted automaton can be decomposed into $K$
unambiguous weighted automata.
Here we show that the same holds for aperiodic weighted automata.

\begin{theorem}\label{thm:finitely-ambiguous}
  Let $K\geq1$.  
  Given an aperiodic $K$-ambiguous weighted automaton $\A$, we can 
  construct aperiodic unambiguous weighted automata $\B_1,\ldots,\B_K$ such that
  $\usem{\A}=\usem{\B_1\uplus\cdots\uplus\B_K}=\usem{\B_1}\uplus\cdots\uplus\usem{\B_K}$. 
\end{theorem}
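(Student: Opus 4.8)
The plan is to decompose a $K$-ambiguous aperiodic weighted automaton $\A=(Q,\Sigma,\Delta,\wgt,I,F)$ by distinguishing, for each word $u$ and each of the (at most $K$) successful runs on $u$, a way to pick out exactly one run via an unambiguous sub-automaton. The classical decomposition results cited (\cite{Sakarovitch_2009,KlimannLMP04}) already give such a split into $K$ unambiguous automata; the entire content of the theorem is that, \emph{starting from an aperiodic $\A$, the resulting unambiguous automata can be chosen aperiodic as well}. So I will not reprove the decomposition from scratch but rather track aperiodicity through a concrete construction.

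First I would fix a total order on the transition set $\Delta$, which induces a lexicographic total order on runs of the same length reading the same word $u$. For a word $u$ with exactly $j$ successful runs ($1\le j\le K$), these runs can be enumerated $\rho_1<\rho_2<\cdots<\rho_j$ in lexicographic order. The automaton $\B_i$ should accept, with the correct weight sequence, precisely the $i$-th run $\rho_i$ on each word having at least $i$ successful runs, and nothing on words with fewer than $i$ successful runs. The states of $\B_i$ will track a \emph{rank} together with the current state of $\A$: intuitively, a state remembers the underlying state $q\in Q$ reached so far and the number of ``competing'' runs that are lexicographically smaller and still alive. This is the standard rank/counting construction that makes each $\B_i$ unambiguous by design, since the rank component uniquely determines which of the several runs of $\A$ is being simulated.

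The main obstacle, and the real work, is proving that each $\B_i$ is aperiodic. Aperiodicity of $\A$ means its transition monoid $\Tr(\A)$ is aperiodic, i.e.\ for all $p,q$ and all $u\in\Sigma^+$ we have $u^m\in\Lang{\A_{p,q}}$ iff $u^{m+1}\in\Lang{\A_{p,q}}$ for some index $m$. The rank component of $\B_i$ counts lexicographically-smaller live runs, and a priori such counters can introduce periodic behaviour (counting modulo something, or oscillating with the exponent of $u$). The key lemma I would isolate is that, because $\A$ is aperiodic and $K$-ambiguous, the number of successful runs on $u^{m}$ from a fixed pair of states, and more importantly the relative lexicographic positions of runs under the pumping $u\mapsto u^{m}\mapsto u^{m+1}$, stabilise for $m$ large enough: the finite bound $K$ on ambiguity prevents the rank counter from growing unboundedly, and aperiodicity of $\Tr(\A)$ forces the combinatorial pattern of which runs survive and in what order to become eventually constant in the exponent. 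Formally I would argue that $\Tr(\B_i)$ embeds into (a power of) a monoid built from $\Tr(\A)$ together with a bounded ordering-data component, and that aperiodic monoids are closed under the operations used, so $\Tr(\B_i)$ is aperiodic.

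Finally, once each $\B_i$ is shown to be unambiguous and aperiodic, the semantic identity is routine: every successful run $\rho$ of $\A$ on a word $u$ is the $i$-th run for exactly one $i\le K$, so it is simulated by exactly one $\B_i$, with the same weight sequence $\wgt(\rho)$; conversely each $\B_i$ contributes at most one run per word, namely a simulation of a genuine run of $\A$. Hence
\[
\usem{\A}(u)=\biguplus_{i=1}^{K}\usem{\B_i}(u)=\usem{\B_1\uplus\cdots\uplus\B_K}(u)
\]
for every $u\in\Sigma^+$, which is the claimed equality $\usem{\A}=\usem{\B_1}\uplus\cdots\uplus\usem{\B_K}$. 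I expect the lexicographic enumeration and the semantic bookkeeping to be straightforward, and essentially all the difficulty to be concentrated in the aperiodicity-preservation lemma for the rank component.
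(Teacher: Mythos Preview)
Your high-level strategy---fix a lexicographic order on runs and let $\B_i$ pick out the $i$-th accepting run---matches the paper's. But your concrete construction diverges from the paper's and has a real gap. You propose that a state of $\B_i$ records the current $\A$-state together with ``the number of competing runs that are lexicographically smaller and still alive.'' This counter is not bounded by $K$: the $K$-ambiguity hypothesis bounds only the number of \emph{accepting} runs on a complete word, not the number of partial runs from $I$ after a prefix. Summed over all target states, that number can be as large as $K\cdot|Q|$ even in a trimmed automaton, and deciding which partial runs will eventually be accepting (hence ``competing'') requires the future. So as stated your $\B_i$ is not a finite automaton, and the step ``makes each $\B_i$ unambiguous by design'' is not justified.

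The paper sidesteps this by a different device: instead of one run with a rank counter, it simulates $k$ runs of $\A$ in parallel, storing the $k$-tuple of current states together with a bit vector $(c^1,\ldots,c^{k-1})$ where $c^\ell=1$ records that runs $\ell$ and $\ell{+}1$ have already diverged. Acceptance requires all $k$ runs accepting and all bits equal to $1$, giving a bijection with strictly ordered $k$-tuples of accepting runs. The crucial point for aperiodicity---and this is exactly the ``key lemma'' you gesture at but do not supply---is that the bit vector is \emph{monotone}: each $c^\ell$ can flip at most once from $0$ to $1$. Hence on any run over $w^{k(m+1)}$ there is a block $w^{m+1}$ during which the bit vector is constant, and one applies aperiodicity of $\A$ componentwise to replace that block by $w^m$ (or $w^{m+2}$). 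Your sketch that $\Tr(\B_i)$ ``embeds into a power of a monoid built from $\Tr(\A)$ together with a bounded ordering-data component'' is not a proof: the ordering data is not obviously computed by a monoid morphism from $\Tr(\A)$, and ``aperiodic monoids are closed under the operations used'' names no operations. The monotone-bit-vector trick is the missing idea.
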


We give below a direct and simple construction which works for arbitrary
(possibly non-aperiodic) $K$-ambiguous weighted automata.  Then, we show that
our construction preserves aperiodicity.  Our proof is based on lexicographic
ordering of runs.  The proof of \cite{Sakarovitch_2009} uses lexicographic
coverings.  It would be interesting to see whether this proof also preserves
aperiodicity and to compare the complexity of the constructions.

\begin{figure}[tbp]
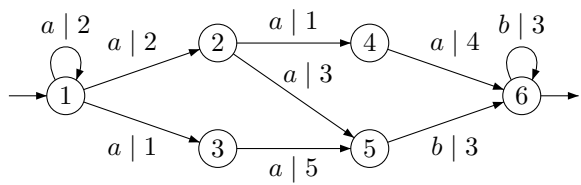

  \centerline{\gusepicture{3ambiguous}}
  \caption{A 3-ambiguous weighted automaton.}
  \protect\label{fig:3ambiguous}
\end{figure}

We first explain our construction on an example.
Consider the 3-ambiguous weighted automaton $\A$ of
Figure~\ref{fig:3ambiguous} over the alphabet $\Sigma=\{a,b\}$ and the semiring
$\N_{+,\times}$ of natural numbers.  Clearly, the support of $\A$ is
$a^{*}(a^{3}+a^{2}b)b^{*}$ and $\sem{\A}(a^{n}a^{3}bb^{p})=2^{n}\cdot(
2\cdot 1\cdot 4\cdot 3
+2\cdot 2\cdot 3\cdot 3
+2\cdot 1\cdot 5\cdot 3)\cdot 3^{p}$ for $n,p\geq0$.  
We construct in the proof an automaton $\A_{\geq3}$
which checks that $\A$ has 3 accepting runs on a given word.  Hence, we will
have $\Lang{\A_{\geq3}}=a^{*}a^{3}bb^{*}$.  To do so, $\A_{\geq3}$ will run
three copies of $\A$, make sure that the three runs are lexicographically 
ordered (to be unambiguous) and accept if the three runs are
accepting and pairwise distinct.
The set of states is $Q'=Q^{3}\times\{0,1\}^{2}$ where $Q=\{1,\ldots,6\}$ is the
set of states of $\A$.  The initial state is $(1,1,1,0,0)$ and the booleans turn to
1 when the runs differ.  The accepting state is $(6,6,6,1,1)$.  The unique
accepting run of $\A_{\geq3}$ on the word $a^{3}b^{2}$ is
$$
(1,1,1,0,0)\xrightarrow{a}(1,1,2,0,1)\xrightarrow{a}(2,3,4,1,1)
\xrightarrow{a}(5,5,6,1,1)\xrightarrow{b}(6,6,6,1,1)\xrightarrow{b}(6,6,6,1,1) \,.
$$

\begin{proof}
  First, let $\A=(Q,\Sigma,\Delta,I,F,\wgt)$ be an arbitrary weighted automaton.
  We may assume that $\A$ has a single initial state $q_0$.
  For $k\geq1$, we construct an automaton
  $\A_{\geq k}=(Q',\Sigma,\Delta',I',F')$ which accepts the set of words
  $w=a_1a_2\cdots a_n\in\Sigma^{+}$ having at least $k$ accepting runs in $\A$.
  Moreover, if $\A$ is aperiodic then so is $\A_{\geq k}$.
  
  Fix a strict total order $\prec$ on $Q$.  We write $\preceq$ for the induced
  lexicographic order on $Q^{+}$ and $\prec$ for the strict order.  A run of
  $\A$ on $w$ induces a sequence of states
  $\rho=q_0q_1q_2\cdots q_n\in Q^{+}$ with $(q_{i-1},a_i,q_i)\in\Delta$ for all
  $1\leq i\leq n$.  Overloading our terminology, such a sequence is also called 
  a run below. Runs of $\A$ on $w$ are lexicographically ordered.  For
  $0\leq j\leq n$, we denote by $\rho[j]=q_0q_1q_2\cdots q_j$ the prefix of
  length $j$ of $\rho$.
  
  The idea is that $\A_{\geq k}$ will guess $k$ runs
  $\rho^{1}\preceq\rho^{2}\preceq\cdots\preceq\rho^{k}$ of $\A$ on $w$.  For
  $1\leq \ell\leq k$, we let $\rho^{\ell}=q^{\ell}_0q^{\ell}_1q^{\ell}_2\cdots
  q^{\ell}_n$.  Now, after reading the prefix $w[j]=a_1a_2\cdots a_j$, the state
  of $\A_{\geq k}$ will consist of the tuple $(q^{1}_j,\ldots,q^{k}_j)$ of states reached
  by the prefixes $\rho^{1}[j]\preceq\cdots\preceq\rho^{k}[j]$ together with a bit
  vector $(c^{1}_j,\ldots,c^{k-1}_j)$ such that for all $1\leq\ell<k$,
  $c^{\ell}_j=1$ iff $\rho^{\ell}[j]\prec\rho^{\ell+1}[j]$.  The automaton $\A_{\geq k}$
  will accept if all states $q^{\ell}_n\in F$ are final in $\A$ and the bit vector
  contains only 1's.  This ensures that
  $\rho^{1}\prec\rho^{2}\prec\cdots\prec\rho^{k}$ are distinct accepting runs for
  $w$ in $\A$.
  
  We turn now to the formal definition of $\A_{\geq k}$.  Let
  $Q'=Q^{k}\times\{0,1\}^{k-1}$, $I'=\{q_0\}^{k}\times\{0\}^{k-1}$ and
  $F'=F^{k}\times\{1\}^{k-1}$.  We write tuples with superscripts: 
  $(\bar{q},\bar{c})\in Q'$ with $\bar{q}=(q^{1},\ldots,q^{k})$ and 
  $\bar{c}=(c^{1},\ldots,c^{k-1})$.
  Now, $((\bar{q},\bar{c}),a,(\bar{q}',\bar{c}'))$ is a
  transition of $\A_{\geq k}$ if the following conditions hold:
  \begin{itemize}
    \item $(q^{\ell},a,q'^{\ell})\in\Delta$ for all $1\leq\ell\leq k$ (the $k$ runs
    are non-deterministically guessed),
  
    \item and the bit vector is deterministically updated as follows: for
    $1\leq\ell<k$ we have either ($c^{\ell}=0$, $q'^{\ell}=q'^{\ell+1}$ and
    $c'^{\ell}=0$), or (($c^{\ell}=1$ or $q'^{\ell}\prec q'^{\ell+1}$) and
    $c'^{\ell}=1$). Notice that $c^{\ell}=0$ and $q'^{\ell+1}\prec q'^{\ell}$ 
    is not allowed.
  \end{itemize}
  When $k=1$ then the accessible part of $\A_1$ is equal to $\A$.
  We will now state formally the main properties of $\A_{\geq k}$.  
  
  \begin{claim}\label{claim:bij-run}
    For each $w\in\Sigma^{+}$, there is a bijection between the accepting runs
    $\bar{\rho}$ of $\A_{\geq k}$ on $w$ and the tuples
    $(\rho^{1},\ldots,\rho^{k})$ of accepting runs of $\A$ on $w$ such that
    $\rho^{1}\prec\cdots\prec\rho^{k}$.
  \end{claim}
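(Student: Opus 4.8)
The plan is to make the bijection explicit in both directions and to control everything through one invariant on the bit vector. Given an accepting run $\bar\rho=(\bar q_0,\bar c_0)\cdots(\bar q_n,\bar c_n)$ of $\A_{\geq k}$ on $w=a_1\cdots a_n$, where $\bar q_j=(q^1_j,\ldots,q^k_j)$ and $\bar c_j=(c^1_j,\ldots,c^{k-1}_j)$, I define the projection $\Pi(\bar\rho)=(\rho^1,\ldots,\rho^k)$ with $\rho^\ell=q^\ell_0q^\ell_1\cdots q^\ell_n$. The first transition condition $(q^\ell_{j-1},a_j,q^\ell_j)\in\Delta$ immediately makes each $\rho^\ell$ a run of $\A$ on $w$. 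In the other direction, from runs $\rho^1\prec\cdots\prec\rho^k$ of $\A$ on $w$ I reconstruct $\bar\rho=\Theta(\rho^1,\ldots,\rho^k)$ by putting $\bar q_j=(q^1_j,\ldots,q^k_j)$ and, for $1\le\ell<k$, setting $c^\ell_j=1$ when $\rho^\ell[j]\prec\rho^{\ell+1}[j]$ and $c^\ell_j=0$ when $\rho^\ell[j]=\rho^{\ell+1}[j]$.

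The core of the argument is the invariant that, along \emph{any} run of $\A_{\geq k}$, for every $0\le j\le n$ and $1\le\ell<k$ one has $c^\ell_j=1$ exactly when $\rho^\ell[j]\prec\rho^{\ell+1}[j]$ and $c^\ell_j=0$ exactly when $\rho^\ell[j]=\rho^{\ell+1}[j]$ (in particular $\rho^\ell[j]\succ\rho^{\ell+1}[j]$ never occurs). I would prove this by induction on $j$. At $j=0$ all components equal $q_0$ and all bits are $0$, matching the initial set $I'$. For the inductive step I read off the two cases of the deterministic bit-update rule: if $c^\ell_{j-1}=0$, the hypothesis gives equal prefixes up to $j-1$, so either $q^\ell_j=q^{\ell+1}_j$ with $c^\ell_j=0$ (prefixes stay equal) or $q^\ell_j\prec q^{\ell+1}_j$ with $c^\ell_j=1$ (the first strict difference appears), while the rule forbids $q^{\ell+1}_j\prec q^\ell_j$, which is precisely the transition that would create $\rho^\ell[j]\succ\rho^{\ell+1}[j]$; if $c^\ell_{j-1}=1$, the strict difference is already fixed at an earlier position, only $c^\ell_j=1$ is allowed, and $\rho^\ell[j]\prec\rho^{\ell+1}[j]$ holds irrespective of the new states.

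With the invariant established, the two acceptance conditions match: $\bar q_n\in F^k$ says every $\rho^\ell$ is accepting, and $\bar c_n=1^{k-1}$ says $\rho^\ell[n]\prec\rho^{\ell+1}[n]$ for all $\ell$, i.e.\ $\rho^1\prec\cdots\prec\rho^k$. The invariant also shows that the bit vector is \emph{deterministically} fixed by the component states, so $\Pi$ is injective and $\Theta(\rho^1,\ldots,\rho^k)$ is a \emph{legal} run of $\A_{\geq k}$ --- the forbidden transition never arises because $\rho^\ell\prec\rho^{\ell+1}$ rules out $\rho^{\ell+1}[j]\prec\rho^\ell[j]$ --- and the two maps are mutually inverse by construction. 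I expect the main obstacle to be exactly this bookkeeping: checking that the three-way case split of the update rule (equal, strictly below, forbidden strictly above) reproduces, position by position, the lexicographic comparison of the prefixes, so that $\Theta$ returns the unique run projecting back to the given tuple.
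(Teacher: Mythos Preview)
Your proof is correct and follows essentially the same approach as the paper: both arguments define the projection to component runs, establish by induction on $j$ the invariant that $c^\ell_j=0$ iff $\rho^\ell[j]=\rho^{\ell+1}[j]$ and $c^\ell_j=1$ iff $\rho^\ell[j]\prec\rho^{\ell+1}[j]$, and then read off the bijection from this invariant together with the acceptance condition. Your version is somewhat more explicit in naming the two maps $\Pi$ and $\Theta$ and in spelling out why injectivity follows from the deterministic bit update, but the underlying argument is the same.
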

  
  \begin{claimproof}
    Consider a word
    $w=a_1a_2\cdots a_n\in\Sigma^{+}$ and a run $\bar{\rho}$ of $\A_{\geq k}$ on $w$
    starting from its initial state.  Write $(\bar{q}_j,\bar{c}_j)$ the $j$th state
    of $\bar{\rho}$.  For $1\leq\ell\leq k$, let $\rho^{\ell}$ be the projection of
    $\bar{\rho}$ on the $\ell$th component:
    $\rho^{\ell}=q^{\ell}_0q^{\ell}_1q^{\ell}_2\cdots q^{\ell}_n$. Clearly, 
    $\rho^{\ell}$ is a run of $\A$ on $w$. Moreover, we can easily check by 
    induction on $0\leq j\leq n$ that for all $1<\ell\leq k$ we have 
    $\rho^{\ell}[j]=\rho^{\ell+1}[j]$ if $c^{\ell}_j=0$ and
    $\rho^{\ell}[j]\prec\rho^{\ell+1}[j]$ if $c^{\ell}_j=1$. We deduce that if 
    $\bar{\rho}$ is accepting in $\A_{\geq k}$ then each $\rho^{\ell}$ is accepting in 
    $\A$ and $\rho^{1}\prec\cdots\prec\rho^{k}$. Therefore, every word accepted by 
    $\A_{\geq k}$ admits at least $k$ accepting runs in $\A$.
    
    Conversely, assume that $w\in\Sigma^{+}$ has at least $k$ accepting runs
    $\rho^{1}\prec\cdots\prec\rho^{k}$ in $\A$.  We can easily construct an
    accepting run $\bar{\rho}$ of $\A_{\geq k}$ on $w$ such that the $\ell$th projection of
    $\bar{\rho}$ is $\rho^{\ell}$ for each $1\leq\ell\leq k$.  We deduce that $\A_{\geq k}$
    accepts exactly the set of words $w\in\Sigma^{+}$ having at least $k$ accepting
    runs in $\A$.
  \end{claimproof}
  
  We deduce from Claim~\ref{claim:bij-run} that if $\A$ is $k$-ambiguous then
  $\A_{\geq k}$ is unambiguous and accepts exactly the words accepted by $\A$ with
  ambiguity $k$.
  
  \begin{claim}\label{claim:k-ambiguous-aperiodic}
    If $\A$ is aperiodic with index $m$ then $\A_{\geq k}$ is aperiodic with index 
    $k(m+1)$.
  \end{claim}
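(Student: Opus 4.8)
The plan is to exploit the rigid structure of the bit-vector component of $\A_{\geq k}$. First I would record the monotonicity that is built into the construction: along any run each coordinate $c^{\ell}$ of the bit vector can only switch from $0$ to $1$ and never back, so the whole vector $\bar c$ is non-decreasing and flips at most $k-1$ times in total. Reading a power $u^{N}$ and looking at the bit vectors $\bar c_0\preceq\bar c_1\preceq\cdots\preceq\bar c_N$ reached at the $N+1$ block boundaries, this means that at most $k-1$ of the $N$ blocks are \emph{jump blocks} (where $\bar c_{i-1}\neq\bar c_i$), since every jump block contains at least one of the at most $k-1$ flips. The remaining \emph{stable blocks} (where the bit vector is constant throughout, by monotonicity) are separated by the jump blocks into at most $k$ maximal runs of consecutive blocks, each carrying one fixed bit vector; I call these runs \emph{phases}.

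The heart of the argument is a single phase. Fix a phase with constant bit vector $\bar c$. The update rule forces $q^{\ell}=q^{\ell+1}$ at every position whenever $c^{\ell}=0$, whereas a coordinate with $c^{\ell}=1$ stays $1$ and imposes no relation between the neighbouring components. Hence the $k$ coordinates split into $g=1+|\{\ell: c^{\ell}=1\}|\leq k$ groups of consecutive, forced-equal components, and distinct groups evolve completely independently. Therefore the transitions of $\A_{\geq k}$ that keep the bit vector equal to $\bar c$ are precisely those of the direct product $\A^{g}$ of $g$ copies of $\A$: on any word $w$ a group goes from $p$ to $q$ exactly when $w\in\Lang{\A_{p,q}}$. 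The consequence I extract is that this phase automaton is aperiodic with index at most $m$: since $u^{m}$ and $u^{m+1}$ induce the same reachability relation in $\A$ (aperiodicity of index $m$), so do $u^{j}$ and $u^{j+1}$ for every $j\geq m$, hence $u^{j}$ and $u^{j+1}$ induce the same reachability in each group and therefore in the whole phase for all $j\geq m$.

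With this in hand I would prove both required implications by pumping, with $M=k(m+1)$. If every phase of a successful run on $u^{M}$ spanned at most $m$ blocks, then, there being at most $k$ phases, at most $km$ blocks would be stable; but at most $k-1$ blocks are jump blocks, so at least $M-(k-1)=km+1$ blocks are stable, a contradiction. Hence some phase spans $n\geq m+1$ blocks, and there $u^{n}$ and $u^{n+1}$ have the same reachability (as $n\geq m$), so I can insert one extra copy of $u$ strictly inside that phase while keeping its entry and exit states and leaving all jump blocks and other phases untouched; this turns a run on $u^{M}$ from $s$ to $t$ into one on $u^{M+1}$. The same count applied to a run on $u^{M+1}$ yields at least $km+2$ stable blocks, hence again a phase of length $n\geq m+1$, inside which $u^{n}$ and $u^{n-1}$ agree, letting me delete one copy of $u$ and obtain a run on $u^{M}$. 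As $s,t,u$ are arbitrary, this gives $u^{M}\in\Lang{(\A_{\geq k})_{s,t}}\iff u^{M+1}\in\Lang{(\A_{\geq k})_{s,t}}$, i.e.\ aperiodicity with index $k(m+1)$.

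The main obstacle I anticipate is the phase analysis of the second paragraph: reading off, from the terse bit-vector update, that a constant bit vector genuinely decouples $\A_{\geq k}$ into independent copies of $\A$ with \emph{no} residual ordering constraint during the phase, and that the forced equalities survive the non-deterministic transitions. Once the product-of-copies description is secured, aperiodicity of the phase (products and quotients of aperiodic structures are aperiodic) and the counting are routine; the only care needed is to place the inserted or deleted copy of $u$ strictly inside a long phase so that the jump blocks, where the bit vector actually moves, are never disturbed.
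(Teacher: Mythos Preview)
Your proposal is correct and follows essentially the same route as the paper: exploit the componentwise monotonicity of the bit vector to locate a stretch of at least $m+1$ consecutive $u$-blocks along which $\bar c$ is constant, observe that on such a stretch the coordinates with $c^{\ell}=0$ are glued together while those separated by $c^{\ell}=1$ evolve independently with no order constraint, and then use the aperiodicity of $\A$ componentwise to insert or delete one copy of $u$ inside that stretch. The paper phrases the middle step slightly more concretely (it picks runs $\sigma^{\ell}$ with $\sigma^{\ell}=\sigma^{\ell+1}$ whenever $\rho^{\ell}=\rho^{\ell+1}$, rather than naming the quotient automaton $\A^{g}$), and it argues both pumping directions starting from $w^{k(m+1)}$ (replacing $w^{m+1}$ by $w^{m}$ or by $w^{m+2}$) instead of treating $u^{M}$ and $u^{M+1}$ separately, but the argument is the same.
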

  
  \begin{claimproof}
    Consider a word $w\in\Sigma^{+}$ and a run $\bar{\rho}$ of $\A_{\geq k}$ reading
    $w^{k(m+1)}$.  The sequence of bit vectors along $\bar{\rho}$ is monotone
    component-wise.  Hence, its value can change at most $k-1$ times.  We deduce
    that we can write $\bar{\rho}=\bar{\rho}'\bar{\rho}''\bar{\rho}'''$ where
    $\bar{\rho}''$ reads $w^{m+1}$ with the bit vector unchanged.  Let
    $(\bar{p},\bar{c})$ and $(\bar{q},\bar{c})$ be the source and target states of
    $\rho''$.  The projections $\rho^{1},\ldots,\rho^{k}$ of $\bar{\rho}''$ are
    runs reading $w^{m+1}$ in $\A$.  Since $\A$ is aperiodic with index $m$, we
    find runs $\sigma^{1},\ldots,\sigma^{k}$ reading $w^{m}$ in $\A$ from states
    $p^{1},\ldots,p^{k}$ to $q^{1},\ldots,q^{k}$ respectively.  We may assume that
    for all $1\leq\ell<k$ we have $\sigma^{\ell}=\sigma^{\ell+1}$ if
    $\rho^{\ell}=\rho^{\ell+1}$. Consider the run $\bar{\sigma}$ of $\A_{\geq k}$ 
    starting from $(\bar{p},\bar{c})$ whose projections are 
    $\sigma^{1},\ldots,\sigma^{k}$. It reaches a state $(\bar{q}',\bar{c}')$. 
    Clearly, we have $\bar{q}'=\bar{q}$. We show that $\bar{c}'=\bar{c}$. Let 
    $1\leq\ell<k$. If $c^{\ell}=1$ then $c'^\ell=1$ by definition of $\A_{\geq k}$. If 
    $c^{\ell}=0$ then $\rho^{\ell}=\rho^{\ell+1}$ by definition of $\A_{\geq k}$. We 
    deduce that $\sigma^{\ell}=\sigma^{\ell+1}$ and $c'^{\ell}=0$. Finally, we 
    conclude that $\bar{\rho}'\bar{\sigma}\bar{\rho}'''$ is a run of $\A_{\geq k}$ 
    reading $w^{k(m+1)-1}$ with the same source (resp.\ target) state as 
    $\bar{\rho}$.
    
    By choosing runs $\sigma^{1},\ldots,\sigma^{k}$ reading the word $w^{m+2}$
    instead of $w^{m}$, we otain a run of $\A_{\geq k}$ reading $w^{k(m+1)+1}$ with the
    same source (resp.\ target) state as $\bar{\rho}$.
  \end{claimproof}
  
  Now, let $\A_{\leq k}$ be the minimal automaton for the complement of the 
  language accepted by $\A_{\geq k+1}$. Notice that $\A_{\leq k}$ is 
  deterministic, complete. Moreover, it is aperiodic if $\A$ is aperiodic.
  
  For each $1\leq\ell\leq k$, define the weighted automaton $\A_{\geq 
  k}^{\ell}=(\A_{\geq k},\wgt^{\ell})$ where the weight function corresponds to 
  the $\ell$th path computed by $\A_{\geq k}$. More precisely, we set 
  $\wgt^{\ell}((\bar{q},\bar{c}),a,(\bar{q}',\bar{c}'))=\wgt(q^{\ell},a,q'^{\ell})$.
  Finally, let $\A_k^{\ell}=\A_{\leq k}\times\A_{\geq k}^{\ell}$. It is not 
  difficult to see that $\A_k^{\ell}$ has the following properties.
  
  \begin{claim}\label{claim:Akl}
    The automaton $\A_k^{\ell}$ is unambiguous. A word $w\in\Sigma^{+}$ is 
    in the support of $\A_k^{\ell}$ iff it admits exactly $k$ accepting runs 
    $\rho^{1}\prec\ldots\prec\rho^{k}$ in $\A$. Moreover, in this case, 
    $\usem{\A_k^{\ell}}(w)=\multiset{\wgt(\rho^{\ell}}$. Also, if $\A$ is 
    aperiodic then so is $\A_k^{\ell}$.
  \end{claim}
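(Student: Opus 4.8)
The plan is to analyze $\A_k^{\ell}=\A_{\leq k}\times\A_{\geq k}^{\ell}$ by combining what we already know about its two factors. Since $\A_{\geq k+1}$ accepts exactly the words with at least $k+1$ accepting runs in $\A$ (Claim~\ref{claim:bij-run}), its complement — which $\A_{\leq k}$ recognizes — is the set of words with at most $k$ accepting runs. On the other hand $\A_{\geq k}$ accepts the words with at least $k$ accepting runs. Thus the underlying automaton $\A_{\leq k}\times\A_{\geq k}$ of $\A_k^{\ell}$ accepts precisely $\Lang{\A_{\leq k}}\cap\Lang{\A_{\geq k}}$, which is exactly the set of words admitting exactly $k$ accepting runs $\rho^{1}\prec\cdots\prec\rho^{k}$ in $\A$. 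This already yields the support characterization.

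For unambiguity, I would count the accepting runs of the product on a word $w$ as the product of the numbers of accepting runs of each factor. Because $\A_{\leq k}$ is deterministic and complete, it contributes exactly one run on $w$, which is accepting iff $w$ has at most $k$ accepting runs in $\A$. By Claim~\ref{claim:bij-run}, the accepting runs of $\A_{\geq k}$ on a word with exactly $j$ accepting runs in $\A$ correspond bijectively to the strictly increasing $k$-tuples drawn from those $j$ runs, so their number is $\binom{j}{k}$. Hence on a support word (where $j=k$) there is exactly $\binom{k}{k}=1$ accepting run of $\A_{\geq k}$, so the product has a single accepting run; and on words outside the support either $\A_{\leq k}$ rejects (when $j>k$) or $\A_{\geq k}$ has no accepting run (when $j<k$, since $\binom{j}{k}=0$). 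This shows $\A_k^{\ell}$ is unambiguous with the claimed support. Note that $\A_{\geq k}$ need not be unambiguous on its own; it is the product with $\A_{\leq k}$ that pins down $j=k$.

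For the weight, recall that $\A_{\leq k}$ is a boolean automaton serving only to restrict the support, so the weight sequence of a run of $\A_k^{\ell}$ is inherited from its $\A_{\geq k}^{\ell}$ component. On a support word $w$, the unique accepting run $\bar{\rho}$ of $\A_{\geq k}$ corresponds under Claim~\ref{claim:bij-run} to the unique tuple $\rho^{1}\prec\cdots\prec\rho^{k}$, and its $\ell$th projection is $\rho^{\ell}$. Since $\wgt^{\ell}$ reads off exactly the $\ell$th component, we get $\wgt^{\ell}(\bar{\rho})=\wgt(\rho^{\ell})$, whence $\usem{\A_k^{\ell}}(w)=\multiset{\wgt(\rho^{\ell})}$.

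Finally, for aperiodicity both factors are aperiodic: $\A_{\leq k}$ was noted to be aperiodic (being the minimal automaton of an aperiodic language), and $\A_{\geq k}^{\ell}$ shares the underlying automaton of $\A_{\geq k}$, which is aperiodic by Claim~\ref{claim:k-ambiguous-aperiodic}. I would conclude using the standard fact that $\Tr(\A_{\leq k}\times\A_{\geq k})$ embeds into $\Tr(\A_{\leq k})\times\Tr(\A_{\geq k})$, and that a submonoid of a product of aperiodic monoids is aperiodic; concretely the product is aperiodic with index the maximum of the two factors' indices. I do not expect any genuine obstacle here — the substance was established in the earlier claims — so the only real care is bookkeeping: correctly identifying "complement of $\geq k+1$" with "$\leq k$", and using the binomial count $\binom{j}{k}$ to read off the ambiguity of $\A_{\geq k}$ as a function of the number of runs of $\A$.
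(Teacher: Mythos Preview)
Your proposal is correct and is precisely the argument the paper has in mind: the paper states Claim~\ref{claim:Akl} without proof, saying only ``it is not difficult to see,'' and your bookkeeping with $\Lang{\A_{\leq k}}\cap\Lang{\A_{\geq k}}$, the binomial count $\binom{j}{k}$ from Claim~\ref{claim:bij-run}, and the product-of-aperiodic-monoids fact is exactly the intended expansion. Your observation that $\A_{\geq k}$ alone need not be unambiguous (only becoming so after intersecting with $\A_{\leq k}$ to force $j=k$) is the one point worth making explicit, and you handle it correctly.
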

  
  Finally, to conclude the proof of Theorem~\ref{thm:finitely-ambiguous}, we 
  define for each $1\leq\ell\leq K$ the weighted automaton 
  $\B_\ell=\A_\ell^{\ell}\uplus\cdots\uplus\A_K^{\ell}$. Since the automata 
  $(\A_k^{\ell})_{\ell\leq k\leq K}$ have pairwise disjoint supports, we deduce 
  that $\B_\ell$ is unambiguous. Moreover, using Claim~\ref{claim:Akl} we can 
  easily show that $\usem{\A}=\usem{\B_1\uplus\cdots\uplus\B_K}$.
\end{proof}

%%%%%%%%%%%%%%%%%%%%%%%%%%%
\section{Weighted First-Order Logic}\label{sec:wFO}

In this section, we define the syntax and semantics of our weighted first-order
logic.  In \cite{DrosteG-ICALP2005,DrosteG07}, weighted MSO used the
classical syntax of MSO logic; only the semantics over a semiring was changed to
use sums for disjunction and existential quantifications, and products for
conjunctions and universal quantifications.  The possibility to express boolean
properties in \wMSO was obtained via so-called unambiguous formulae.  To improve
readability, a more structured syntax was later used 
\cite{BolligG-DLT09,DrosteM12,KreutzerR2013}, separating a boolean MSO layer with classical boolean
semantics from the higher level of weighted formulas using \emph{products}
($\Prod{X}$, $\Prod{x}$ corresponding to $\forall X$, $\forall x$) and sums
($\Sum{X}$, $\Sum{x}$ corresponding to $\exists X$, $\exists x$) with
quantitative semantics.  As shown in \cite{DrosteG-ICALP2005,DrosteG07}, in
general, to retain equivalence with weighted automata, \wMSO has to be
restricted.  Products $\Prod{X}$ over set variables are disallowed, and
first-order products $\Prod{x}$ must be restricted to finitely valued series 
where the pre-image of each value is recognizable.
This basically means that first-order products cannot be nested or applied after
first-order or second-order sums $\Sum{x}$ or $\Sum{X}$.  This motivated the
equivalent and even more structured syntax of \coreMSO introduced in
\cite{GastinM18}.

As in Section~\ref{sec:wA}, we consider a set $\Weights$ of weights.
The syntax of \wFO is obtained from \coreMSO 
by removing set variables, set quantifications and set sums.  In addition to the
classical boolean first-order logic \eqref{eq:fo}, it has two weighted layers.
Step formulas defined in \eqref{eq:stepfo} consist of constants and if-then-else
applications, where the conditions are formulated in boolean first-order logic.
Finally, \wFO
builds on this by performing products of
step formulas and then applying if-then-else, finite sums, or existential sums.
  \begin{align*}
    \phi &::= \True \mid P_a(x) \mid x\leq y 
    \mid \lnot \phi \mid \phi\land\phi \mid \forall x \phi
    \tag{\FO}\label{eq:fo}
    \\
    \Psi &::= r \mid \Ifthenelse\phi\Psi\Psi 
    \tag{\stepFO}\label{eq:stepfo}
    \\
    \Phi &::= \zero \mid \Prod x \Psi \mid \Ifthenelse\phi\Phi\Phi 
    \mid \Phi+\Phi
    \mid \Sum x \Phi
    \tag{\wFO}\label{eq:wfo}
  \end{align*}
  with $a\in\alphabet$, $r\in\Weights$ and $x,y$ first-order variables.

The semantics of \stepFO formulas is defined inductively. As above, let 
$u\in\Sigma^+$ be a nonempty word and
$\sigma\colon\Variables\to\pos{u}=\{1,\ldots,|u|\}$ be a valuation. For 
\stepFO formulas whose free variables are contained in $\Variables$, we define 
the $\Variables$-semantics as
\begin{align*}
  \stepsem{r}_\Variables(u,\sigma) & =\stepmultiset{r} &  
  \stepsem{\Ifthenelse{\varphi}{\Psi_1}{\Psi_2}}_\Variables(u,\sigma) & =
  \begin{cases}
    \stepsem{\Psi_1}_\Variables(u,\sigma) & \text{if } u,\sigma\models\varphi \\
    \stepsem{\Psi_2}_\Variables(u,\sigma) & \text{otherwise.}
  \end{cases}
\end{align*}
Notice that the semantics of a \stepFO formula is always a single weight from $\Weights$.

For \wFO formulas $\Phi$ whose free variables are contained in $\Variables$,
we define the $\Variables$-semantics
$\usem{\Phi}_\Variables\colon\Sigma_\Variables^+\to\Multiset{\Weights^+}$.
First, we let $\usem{\Phi}_\Variables(\bar{u})=\emptymultiset$ be the empty
multiset when $\bar{u}\in\Sigma_\Variables^+$ is not a valid encoding of a pair
$(u,\sigma)$.  Assume now that $\bar{u}=(u,\sigma)$ is a valid encoding of a
nonempty word $u\in\Sigma^+$ and a valuation
$\sigma\colon\Variables\to\pos{u}$.
The semantics of \wFO formulas is also defined inductively:
$\usem{\zero}_\Variables(u,\sigma)=\emptymultiset$ is the empty multiset, and
\begin{align*}
  \usem{\Prod{x}\Psi}_\Variables(u,\sigma) & = \multiset{r_1r_2\cdots r_{|u|}}
  \text{ where } r_i=\stepsem{\Psi}_{\Variables\cup\{x\}}(u,\sigma[x\mapsto i])
  \text{ for } 1\leq i\leq |u|
  \\
  \usem{\Ifthenelse{\varphi}{\Phi_1}{\Phi_2}}_\Variables(u,\sigma) & =
  \begin{cases}
    \usem{\Phi_1}_\Variables(u,\sigma) & \text{if } u,\sigma\models\varphi \\
    \usem{\Phi_2}_\Variables(u,\sigma) & \text{otherwise}
  \end{cases}
  \\
  \usem{\Phi_1+\Phi_2}_\Variables(u,\sigma) & =
  \usem{\Phi_1}_\Variables(u,\sigma) \uplus \usem{\Phi_2}_\Variables(u,\sigma)
  \\
  \usem{\Sum{x}\Phi}_\Variables(u,\sigma) & =
  \biguplus_{i\in\pos{u}} \usem{\Phi}_{\Variables\cup\{x\}}(u,\sigma[x\mapsto i])
  \,.
\end{align*}
The semantics of the product (first line), is a singleton multiset which
consists of a weight sequence whose length is $|u|$.
We deduce that all weight sequences in a multiset
$\usem{\Phi}_\Variables(u,\sigma)$ have the same length 
and $\usem{\Phi}_\Variables(u,\sigma)\in\Multiset{\Weights^{|u|}}$.
{We simply write $\stepsem{\Psi}$ and $\usem{\Phi}$ when the set
$\Variables$ of variables is clear from the context.}

As explained in Section~\ref{sec:wA}, applying an aggregation function allows to
recover the semantics $\sem{\Phi}$ over semirings such as $\N_{+,\times}$,
$\N_{\max,+}$, etc.  For instance, consider the function
$f\colon\{a,b\}^{+}\to\N$ which assign to a word $w\in\{a,b\}^{+}$ the length of
the maximal $a$-block, i.e., $f(w)=n$ if $a^{n}$ is a factor of $w$ but
$a^{n+1}$ is not.  Over $\N_{\max,+}$, we have $f=\sem{\Phi}$ where
$$
\Phi=\Sum{y,z}\Ifthenelse{(\forall u\, (y\leq u\leq z)\rightarrow P_a(u))}
{(\Prod{x}\Ifthenelse{(y\leq x\leq z)}{1}{0})}{(\Prod{x}0)} \,.
$$
We refer to \cite{DrosteG07} for further examples of quantitative specifications
in weighted logic.

%%%%%%%%%%%%%%%%%%%%%%%%%%%
\section{From Weighted Automata to Weighted FO}\label{sec:wa2wfo}
\begin{gpicture}[name=WA2,ignore]
  \node[Nmarks=i,iangle=90](1)(0,0){1}
  \node[Nmarks=i,iangle=90](2)(30,0){2}
  \node[Nmarks=f,fangle=0](3)(15,-12){3}
  \drawloop[loopangle=180](1){$\begin{array}{c} a \mid 2 \\ b \mid 1 \end{array}$}
  \drawloop[loopangle=0](2){$\begin{array}{c} a \mid 3 \\ c \mid 1 \end{array}$}
  \drawedge[curvedepth=1](1,2){$b \mid 1$}
  \drawedge[curvedepth=1](2,1){$c \mid 1$}
  \drawedge[curvedepth=0,ELside=r](1,3){$b \mid 1$}
  \drawedge[curvedepth=0](2,3){$c \mid 1$}
\end{gpicture}
We say that a non-deterministic automaton $\A=(Q,\Sigma,\Delta)$ is
\emph{unambiguous from state $p$ to state $q$} if for all words $u\in\Sigma^+$,
there is at most one run of $\A$ from $p$ to $q$ with label $u$.

\begin{theorem}\label{thm:main1}
  Let $\A$ be an aperiodic weighted automaton which is unambiguous from 
  $p$ to $q$.  We can construct a \wFO sentence
  $\Phi_{p,q}=\Ifthenelse{\varphi_{p,q}}{\Prod{x}\Psi_{p,q}}\zero$ where
  $\varphi_{p,q}$ is a first-order sentence and $\Psi_{p,q}(x)$ is a \stepFO
  formula with a single free variable $x$ such that
  $\usem{\A_{p,q}}=\usem{\Phi_{p,q}}$. 
\end{theorem}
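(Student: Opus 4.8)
The plan is to read off the target shape directly from the two hypotheses. Since $\A$ is unambiguous from $p$ to $q$, for every $u\in\Sigma^+$ the multiset $\usem{\A_{p,q}}(u)$ is empty if $u\notin\Lang{\A_{p,q}}$ and is otherwise the singleton $\multiset{\wgt(\rho)}$, where $\rho=\delta_1\cdots\delta_{|u|}$ is the unique run from $p$ to $q$ labelled $u$ and $\wgt(\rho)=\wgt(\delta_1)\cdots\wgt(\delta_{|u|})$ is its weight sequence. Comparing this with the semantics of $\Ifthenelse{\varphi_{p,q}}{\Prod{x}\Psi_{p,q}}{\zero}$, I need two ingredients: (i) a first-order sentence $\varphi_{p,q}$ with $\Lang{\varphi_{p,q}}=\Lang{\A_{p,q}}$, and (ii) a \stepFO formula $\Psi_{p,q}(x)$ whose induced weight sequence reconstructs $\wgt(\rho)$. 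For (i) I simply invoke Theorem~\ref{thm:ap2fo}, which applies because $\A$ is aperiodic.

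For (ii) it suffices to arrange that $\stepsem{\Psi_{p,q}}(u,x\mapsto i)=\wgt(\delta_i)$, the weight of the $i$-th transition of the unique run, so that the sequence $r_1\cdots r_{|u|}$ produced by $\Prod{x}\Psi_{p,q}$ equals $\wgt(\rho)$. The idea is to recognise this transition by a purely local first-order test. Writing $\delta=(s,a,t)$, the $i$-th transition of the unique run is $\delta$ exactly when the letter at position $i$ is $a$, the prefix $u[1,i-1]$ admits a run from $p$ to $s$, and the suffix $u[i+1,|u|]$ admits a run from $t$ to $q$. Applying Theorem~\ref{thm:ap2fo} again to $\Lang{\A_{p,s}}$ and $\Lang{\A_{t,q}}$, and using the relativizations $\varphi^{<x}$ and $\varphi^{>x}$ to speak about these prefixes and suffixes, this test becomes a first-order formula $\chi_{(s,a,t)}(x)=P_a(x)\wedge\varphi_{p,s}^{<x}\wedge\varphi_{t,q}^{>x}$ with single free variable $x$; the boundary cases $i=1$ and $i=|u|$, where the prefix or suffix is empty, are covered by the extended semantics on $\varepsilon$ recorded after the relativization definition.

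I then set $\Psi_{p,q}(x)$ to be the cascade of if-then-else applications that, ranging over the finitely many transitions $\delta\in\Delta$, returns $\wgt(\delta)$ as soon as $\chi_\delta(x)$ holds, with an arbitrary fixed weight as default. For correctness, fix $u$. If $u\notin\Lang{\A_{p,q}}$ then $\varphi_{p,q}$ fails and $\usem{\Phi_{p,q}}(u)=\emptymultiset=\usem{\A_{p,q}}(u)$. If $u\in\Lang{\A_{p,q}}$ then $\varphi_{p,q}$ holds and $\Prod{x}\Psi_{p,q}$ yields the singleton $\multiset{r_1\cdots r_{|u|}}$ with $r_i=\stepsem{\Psi_{p,q}}(u,x\mapsto i)$, so it remains to check $r_i=\wgt(\delta_i)$. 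This is the crux, and the one place where unambiguity is \emph{essential}: if $\chi_{(s,a,t)}(i)$ holds, then concatenating a $p\to s$ run on $u[1,i-1]$, the transition $(s,a,t)$, and a $t\to q$ run on $u[i+1,|u|]$ produces a $p\to q$ run on $u$, which by unambiguity must be $\rho$; hence $(s,a,t)=\delta_i$. This simultaneously shows that exactly one $\chi_\delta(i)$ holds (so the cascade is well defined and its default immaterial) and that it selects $\wgt(\delta_i)$, giving $r_1\cdots r_{|u|}=\wgt(\rho)$. The main obstacle is precisely this reconstruction step: the purely local prefix/suffix conditions pin down a single global run only because of unambiguity, while aperiodicity is what makes those conditions first-order expressible via Theorem~\ref{thm:ap2fo}.
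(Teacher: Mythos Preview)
Your proposal is correct and follows essentially the same approach as the paper: the paper likewise takes $\varphi_{p,q}$ from Theorem~\ref{thm:ap2fo}, defines for each transition $\delta=(r,a,s)$ the formula $\varphi_\delta(x)=\varphi_{p,r}^{<x}\wedge P_a(x)\wedge\varphi_{s,q}^{>x}$, and builds $\Psi_{p,q}(x)$ as an if-then-else cascade over all transitions, using unambiguity exactly as you do to show that at most one $\varphi_\delta$ holds at each position of a word in $\Lang{\A_{p,q}}$.
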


\begin{figure}[tbp]
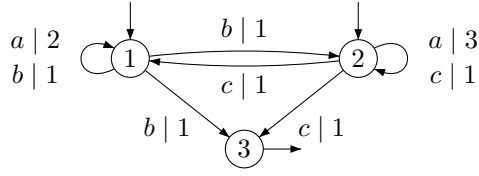

  \centerline{\gusepicture{WA2}}
  \caption{A weighted automaton, which is both aperiodic and unambiguous.}
  \protect\label{fig:WA-unambiguous}
\end{figure}

Before proving Theorem~\ref{thm:main1}, we start with an example.  The automaton
$\A$ of Figure~\ref{fig:WA-unambiguous} is unambiguous and it accepts the
language $\Lang{\A}=(a^*b+a^*c)^+=(a+b+c)^*(b+c)$.
We define a \wFO sentence
$\Phi_{1,3}=\Ifthenelse{\varphi_{1,3}}{\Prod{x}\Psi_{1,3}(x)}\zero$ as follows.
The \FO sentence $\varphi_{1,3}$ checks that $\A$ has a run from state $1$ to 
state $3$ on the input word $w$, i.e., that $w\in a^*b(a^*b+a^*c)^*$:
\begin{align*}
  \varphi_{1,3} &= \exists y\,(P_b(y) \wedge \forall z\,(z<y\implies P_a(z)))
  \wedge \exists y\,(\neg P_a(y) \wedge \forall z\,(z\leq y))
\end{align*}
When this is the case, the \stepFO formula $\Psi_{1,3}(x)$ computes the weight
of the transition taken at a position $x$ in the input word:
\begin{align*}
  \Psi_{1,3}(x) = & \Ifthenelse{(P_b(x) \vee P_c(x))}{1}{} 
  \Ifthenelse{\exists y\,(x<y \wedge P_b(y) \wedge \forall z\,(x<z<y \implies P_a(z)))}{2}{3} \,.
\end{align*}
Notice that the same formula $\Psi=\Psi_{2,3}=\Psi_{1,3}$ also allows to compute the
sequence of weights for the accepting runs starting in state 2.  Therefore, $\A$
is equivalent to the \wFO sentence
\begin{align*}
  \Phi & = \Ifthenelse{\exists y\,(\neg P_a(y) \wedge \forall z\,(z\leq y))}{\Prod{x}\Psi(x)}\zero \,.
\end{align*}

\begin{proof}[Proof of Theorem~\ref{thm:main1}]
  Let $\A=(Q,\Sigma,\Delta,\wgt)$ be the aperiodic weighted automaton.  By
  Theorem~\ref{thm:ap2fo}, for every pair of states $r,s\in Q$ there is a
  first-order sentence $\varphi_{r,s}$ such that
  $\Lang{\A_{r,s}}=\Lang{\varphi_{r,s}}$. This gives in particular the 
  first-order sentence $\varphi_{p,q}$ which is used in $\Phi_{p,q}$.
    
  \begin{claim}\label{claim:1}
    We can construct a \stepFO formula $\Psi_{p,q}(x)$ such that for each word
    $u\in\Lang{\A_{p,q}}$ and each position $1\leq i\leq|u|$ in the word $u$, we
    have $\stepsem{\Psi_{p,q}}(u,x\mapsto i)=\stepmultiset{\wgt(\delta)}$ where
    $\delta$ is the $i$th transition of the unique run $\rho$ of $\A$ from $p$
    to $q$ with label $u$.
  \end{claim}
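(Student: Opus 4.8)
The plan is to construct $\Psi_{p,q}(x)$ so that, at each position $i$ of a word $u \in \Lang{\A_{p,q}}$, it outputs the weight of the $i$th transition of the unique accepting run $\rho$ from $p$ to $q$. The key observation is that, because $\A$ is unambiguous from $p$ to $q$, the transition taken at position $i$ is completely determined by the label $u$ and the position $i$ --- so this transition is a \FO-definable property of $(u, i)$, and its weight can therefore be selected by nested if-then-else tests inside a \stepFO formula. First I would fix the run: for $u = a_1 \cdots a_n \in \Lang{\A_{p,q}}$, the unique run $\rho$ passes through states $q_0 = p, q_1, \ldots, q_n = q$, and the $i$th transition is $\delta_i = (q_{i-1}, a_i, q_i)$.

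The crucial step is to express, by a \FO-formula, the state reached after reading a prefix. For a state $s \in Q$, I claim there is a \FO-formula saying ``the unique run from $p$ to $q$ on $u$ is in state $s$ after position $x$''. Since the run is unique, being in state $s$ at position $x$ is equivalent to: the prefix $u[1,x]$ labels a run from $p$ to $s$, and the suffix $u[x+1,n]$ labels a run from $s$ to $q$. Both conditions are aperiodic languages, hence \FO-definable by Theorem~\ref{thm:ap2fo}, and they transfer to prefixes and suffixes via the relativizations $\varphi^{<x}$ and $\varphi^{>x}$ introduced in the preliminaries (applied to $\varphi_{p,s}$ and $\varphi_{s,q}$, with care at the position $x$ itself so that the state \emph{at} $x$ rather than strictly before/after is captured). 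Thus I would define $\mathsf{state}_s(x)$ as a \FO-formula with free variable $x$ characterizing that $q_x = s$; unambiguity guarantees exactly one such $s$ holds at each $x$ on a word in $\Lang{\A_{p,q}}$.

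With these state-predicates in hand, I would identify the transition at position $i$: it is $(s, a_i, t)$ where $q_{i-1} = s$ and $q_i = t$, i.e.\ where $\mathsf{state}_s(x-1)$ and $\mathsf{state}_t(x)$ both hold and $P_{a_i}(x)$ holds. For the first position, $q_0 = p$ is fixed. Enumerating the finitely many transitions $(s,a,t) \in \Delta$, I would build $\Psi_{p,q}(x)$ as a finite nested if-then-else cascade: for each transition $\delta = (s,a,t)$, test the \FO-condition ``position $x$ realizes $\delta$'' and, if so, output the constant $\wgt(\delta)$; the tests are mutually exclusive along the unique run, so any default branch is irrelevant for $u \in \Lang{\A_{p,q}}$. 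The resulting formula has the required shape $r \mid \Ifthenelse{\phi}{\Psi}{\Psi}$ of \stepFO, and its value at $(u, i)$ is exactly $\wgt(\delta_i)$.

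The main obstacle I anticipate is the careful handling of boundary positions and the precise definition of $\mathsf{state}_s(x)$: the relativizations as defined split a word \emph{strictly} before or after a marked position, so I must express ``state after reading $u[1,x]$'' rather than ``$u[1,x-1]$'', which means combining a relativized prefix-formula that includes position $x$ with a suffix-formula that excludes it, and verifying the edge cases $x=1$ and $x=|u|$ where one side becomes the empty word (using the extended semantics $\varepsilon \models \psi$ for sentences). One must also confirm that unambiguity is genuinely used --- it is what guarantees that $\mathsf{state}_s(x)$ holds for a \emph{unique} $s$, so that the transition, and hence the weight, is well-defined and \FO-selectable. Once the state-predicates are correctly pinned down, the assembly of $\Psi_{p,q}$ and the verification that $\stepsem{\Psi_{p,q}}(u, x \mapsto i) = \stepmultiset{\wgt(\delta_i)}$ is routine structural bookkeeping.
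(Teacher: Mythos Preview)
Your proposal is correct and follows essentially the same approach as the paper. The paper's version is slightly more streamlined: rather than defining separate state predicates $\mathsf{state}_s(x)$ and then combining $\mathsf{state}_s(x-1)$ with $\mathsf{state}_t(x)$, it directly defines, for each transition $\delta=(r,a,s)$, the single formula $\varphi_\delta(x)=\varphi_{p,r}^{<x}\wedge P_a(x)\wedge\varphi_{s,q}^{>x}$, which avoids the $x-1$ boundary issue you anticipate and uses the relativizations exactly as given; the remaining cascade construction is identical to yours.
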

  
  Before proving this claim, let us show how we can deduce the statement of 
  Theorem~\ref{thm:main1}. 
  Clearly, if a word $u\in\Sigma^+$ is not in $\Lang{\A_{p,q}}$ then we have 
  $\usem{\A_{p,q}}(u)=\emptymultiset=\usem{\Phi_{p,q}}(u)$.
  Consider now a word $u=a_1a_2\cdots a_n\in\Lang{\A_{p,q}}$ and the unique run
  $\rho=\delta_1\delta_2\cdots\delta_n$ of $\A$ from $p$ to $q$ with label $u$.
  We have 
  $\usem{\A_{p,q}}(u)
  =\multiset{\wgt(\delta_1)\wgt(\delta_2)\cdots\wgt(\delta_n)}
  =\usem{\Prod{x}\Psi_{p,q}}(u)$
  where the second equality follows from Claim~\ref{claim:1}.
  We deduce that $\usem{\A_{p,q}}=\usem{\Phi_{p,q}}$. 
  
  \smallskip %
  We turn now to the proof of Claim~\ref{claim:1}.
  Let $\delta=(r,a,s)\in\Delta$ be a transition of $\A$. We define the 
  \FO-formula with one free variable
  $
  \varphi_\delta(x)=\varphi_{p,r}^{<x}\wedge P_a(x)
  \wedge\varphi_{s,q}^{>x}\,.
  $
  \begin{claim}\label{claim:2}
    For each word $u\in\Sigma^+$ and position $1\leq i\leq|u|$, we have 
    $u,x\mapsto i\models\varphi_\delta$ iff $u\in\Lang{\A_{p,q}}$ and $\delta$
    is the $i$th transition of the unique run of $\A$ from $p$ to $q$ with label
    $u$.
  \end{claim}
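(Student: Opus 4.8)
The plan is to prove the biconditional directly by unfolding the definition $\varphi_\delta(x)=\varphi_{p,r}^{<x}\wedge P_a(x)\wedge\varphi_{s,q}^{>x}$ (where $\delta=(r,a,s)$), translating each conjunct through the semantics of relativization and then through Theorem~\ref{thm:ap2fo}. First I would fix $u=a_1\cdots a_n\in\Sigma^+$ and a position $1\le i\le n$ and spell out what each conjunct asserts at $x\mapsto i$. By the defining property of relativization, $u,x\mapsto i\models\varphi_{p,r}^{<x}$ iff $u[1,i-1]\models\varphi_{p,r}$, which by Theorem~\ref{thm:ap2fo} is equivalent to $u[1,i-1]\in\Lang{\A_{p,r}}$, i.e.\ there is a run of $\A$ from $p$ to $r$ reading the prefix of length $i-1$. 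Symmetrically, $u,x\mapsto i\models\varphi_{s,q}^{>x}$ iff $u[i+1,n]\in\Lang{\A_{s,q}}$, i.e.\ there is a run from $s$ to $q$ reading the suffix after position $i$. The middle conjunct $P_a(x)$ simply asserts $a_i=a$. Here I would note that the boundary cases $i=1$ and $i=n$ are covered because $\varphi_{p,r}$ and $\varphi_{s,q}$ are \emph{sentences}, for which the relativization equivalences remain valid on the empty factor (as remarked just after the definition of relativization), together with the convention that $\varepsilon\in\Lang{\A_{r',s'}}$ exactly when $r'=s'$.

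For the forward direction, assuming all three conjuncts hold at $x\mapsto i$, I would glue a run $\rho_1$ from $p$ to $r$ labeled $u[1,i-1]$, the transition $\delta=(r,a_i,s)$, and a run $\rho_2$ from $s$ to $q$ labeled $u[i+1,n]$ into a single run $\rho_1\,\delta\,\rho_2$ from $p$ to $q$ labeled $u$; hence $u\in\Lang{\A_{p,q}}$. Since $\A$ is unambiguous from $p$ to $q$, this is the unique such run, and by construction its $i$th transition is $\delta$. For the converse, assuming $u\in\Lang{\A_{p,q}}$ and that the $i$th transition of the unique run $\rho=\delta_1\cdots\delta_n$ equals $\delta=(r,a,s)$, I would simply read off that $\delta_1\cdots\delta_{i-1}$ is a run from $p$ to $r$ on the prefix $u[1,i-1]$, that $\delta_{i+1}\cdots\delta_n$ is a run from $s$ to $q$ on the suffix $u[i+1,n]$, and that $a_i=a$; these are exactly the three conjuncts of $\varphi_\delta$.

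The argument is in essence a bookkeeping exercise, so there is no single hard obstacle; the only point demanding genuine care is the treatment of the empty prefix and suffix at the endpoints $i=1$ and $i=n$, where one must check that the empty-word conventions for the logic and for the automaton agree, i.e.\ that when $\delta$ is the first (resp.\ last) transition one forces $r=p$ (resp.\ $s=q$), matching $\varepsilon\in\Lang{\A_{p,r}}$ (resp.\ $\varepsilon\in\Lang{\A_{s,q}}$). Unambiguity from $p$ to $q$ enters only to make ``the $i$th transition of the unique run'' well defined and to identify the run glued together in the forward direction as necessarily that unique run.
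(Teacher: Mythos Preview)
Your proposal is correct and follows essentially the same approach as the paper's own proof: both unfold $\varphi_\delta$ via the relativization semantics and Theorem~\ref{thm:ap2fo}, glue the prefix run, the transition $\delta$, and the suffix run in the forward direction (invoking unambiguity from $p$ to $q$ to identify the resulting run as the unique one), and read off the three conjuncts from the unique run in the converse direction. Your treatment of the boundary cases $i=1$ and $i=n$ is slightly more explicit than the paper's, but the argument is the same.
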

  
  Indeed, assume that $u,x\mapsto i\models\varphi_\delta$.  Then,
  $u[1,i-1]\models\varphi_{p,r}$ and there is a run $\rho'$ of $\A$ from $p$ to
  $r$ with label $u[1,i-1]$.  Notice that if $i=1$ then $p=r$ and $\rho'$ is 
  the empty run. Similarly, from $u[i+1,|u|]\models\varphi_{s,q}$
  we deduce that there is a run $\rho''$ of $\A$ from $s$ to $q$ with label
  $u[i+1,|u|]$.  Finally, $u,x\mapsto i\models P_a(x)$ means that the $i$th
  letter of $u$ is $a$.  We deduce that $\rho=\rho'\delta\rho''$ is a run of
  $\A$ from $p$ to $q$ with label $u$, hence $u\in\Lang{\A_{p,q}}$.  Moreover,
  $\rho$ is the unique such run since $\A$ is unambiguous from $p$ to $q$. Now, 
  $\delta$ is the $i$th transition of $\rho$, which concludes one direction of 
  the proof.
  Conversely, assume that $u\in\Lang{\A_{p,q}}$ and $\delta$ is the $i$th
  transition of the unique run of $\A$ from $p$ to $q$ with label $u$.  Then,
  $u[1,i-1]\models\varphi_{p,r}$, $u[i+1,|u|]\models\varphi_{s,q}$, and the 
  $i$th letter of $u$ is $a$. Therefore, $u,x\mapsto i\models\varphi_\delta$.
  This concludes the proof of Claim~\ref{claim:2}.
  
  \smallskip %
  Now, choose an arbitrary enumeration $\delta^1,\delta^2,\ldots,\delta^k$ of 
  the transitions in $\Delta$ and define the \stepFO formula with one free 
  variable
  \[
  \Psi_{p,q}(x) = 
  \Ifthenelse{\varphi_{\delta^1}(x)}{\wgt(\delta^1)}{
  \Ifthenelse{\varphi_{\delta^2}(x)}{\wgt(\delta^2)}{
  ~\cdots~
  \Ifthenelse{\varphi_{\delta^k}(x)}{\wgt(\delta^k)}{
  \wgt(\delta^k)}}} \,.
  \]
  We show that this formula satisfies the property of Claim~\ref{claim:1}.
  Consider a word $u\in\Lang{\A_{p,q}}$ and a position $1\leq i\leq|u|$.  Let
  $\delta$ be the $i$th transition of the unique run of $\A$ from $p$ to $q$
  with label $u$.  By Claim~\ref{claim:2}, we have $u,x\mapsto
  i\models\varphi_{\delta^j}$ iff $\delta^j=\delta$.  Therefore,
  $\stepsem{\Psi_{p,q}}(u,x\mapsto i)=\stepmultiset{\wgt(\delta)}$, which concludes the
  proof of Claim~\ref{claim:1}.
\end{proof}

\begin{corollary}\label{cor:aperiodic-unambiguous-to-logic} \mbox{ }
  \begin{enumerate}
    \item Let $\A$ be an aperiodic and unambiguous weighted automaton.  We can
    construct a \wFO sentence $\Phi$ which does not use any $\Sum{x}$ operator
    or $+$ operator, and such that $\usem{\A}=\usem{\Phi}$.
  
    \item Let $\A$ be an aperiodic and finitely ambiguous weighted automaton.
    We can construct a \wFO sentence $\Phi$ which does not use any $\Sum{x}$
    operator, and such that $\usem{\A}=\usem{\Phi}$.
  \end{enumerate}
\end{corollary}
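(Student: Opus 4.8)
The plan is to derive both parts from Theorem~\ref{thm:main1} by decomposing the accepting behaviour of \A into unambiguous pieces, one per initial--final pair, and then assembling the corresponding \wFO sentences with the connectives allowed in each case. First I would recall that by definition $\usem{\A}(u)=\biguplus_{p\in I,\,q\in F}\usem{\A_{p,q}}(u)$, so the semantics of \A is a multiset union of the pair-semantics $\usem{\A_{p,q}}$. The key observation for part~(1) is that if \A is aperiodic and \emph{unambiguous} (globally), then for each $(p,q)\in I\times F$ it is in particular unambiguous from $p$ to $q$, so Theorem~\ref{thm:main1} applies and yields a \wFO sentence $\Phi_{p,q}=\Ifthenelse{\varphi_{p,q}}{\Prod{x}\Psi_{p,q}}\zero$ with $\usem{\A_{p,q}}=\usem{\Phi_{p,q}}$.

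For part~(1), the remaining point is that global unambiguity forces the supports $\Lang{\A_{p,q}}$, for distinct accepting pairs, to be \emph{disjoint}: a word admitting accepting runs through two different pairs $(p,q)\ne(p',q')$ would witness at least two successful runs, contradicting unambiguity. Hence at most one $\Phi_{p,q}$ contributes a nonempty multiset on any given word, and the multiset union $\biguplus_{p,q}\usem{\Phi_{p,q}}$ can be realized without the binary sum operator ${+}$: I would nest the guards $\varphi_{p,q}$ into a single cascade of if-then-else applications, producing one sentence of the form $\Ifthenelse{\varphi_{p_1,q_1}}{(\Prod{x}\Psi_{p_1,q_1})}{(\Ifthenelse{\varphi_{p_2,q_2}}{\cdots}{\zero})}$. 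Since each branch is either a product of a \stepFO formula or $\zero$, and the guards are plain \FO sentences, the resulting $\Phi$ uses neither $\Sum{x}$ nor ${+}$, and $\usem{\A}=\usem{\Phi}$ by disjointness of supports.

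For part~(2), I would invoke Theorem~\ref{thm:finitely-ambiguous}: since \A is aperiodic and finitely ambiguous, it is $K$-ambiguous for some $K\ge1$, so we obtain aperiodic \emph{unambiguous} weighted automata $\B_1,\ldots,\B_K$ with $\usem{\A}=\usem{\B_1}\uplus\cdots\uplus\usem{\B_K}$. Applying part~(1) to each $\B_\ell$ gives a \wFO sentence $\Phi_\ell$ without $\Sum{x}$ or ${+}$ with $\usem{\B_\ell}=\usem{\Phi_\ell}$, and then $\Phi=\Phi_1+\cdots+\Phi_K$ satisfies $\usem{\Phi}=\biguplus_\ell\usem{\Phi_\ell}=\biguplus_\ell\usem{\B_\ell}=\usem{\A}$; this $\Phi$ uses the binary sum but no $\Sum{x}$, as required. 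The main obstacle is the disjointness argument in part~(1): one must be careful that global unambiguity, rather than mere pairwise unambiguity from $p$ to $q$, is what guarantees disjoint supports and hence lets us replace ${+}$ by nested if-then-else; the aperiodicity of the \FO-definable guards $\varphi_{p,q}$ (Theorem~\ref{thm:ap2fo}) is what keeps every ingredient inside \wFO.
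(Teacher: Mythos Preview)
Your proposal is correct and matches the paper's proof essentially line for line: for part~(1) the paper also applies Theorem~\ref{thm:main1} to each pair $(p,q)\in I\times F$, uses global unambiguity to conclude that at most one $\usem{\Phi_{p,q}}(u)$ is nonempty, and replaces the sum by a nested if-then-else cascade; for part~(2) it likewise invokes Theorem~\ref{thm:finitely-ambiguous} and then part~(1). The paper additionally remarks on an alternative for part~(1)---adjoining a fresh initial and final state to reduce to a single pair---but this is offered only as a variant of the same idea.
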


\begin{proof}
  1. Since $\A$ is unambiguous, it is also unambiguous from $p$ to $q$ for all
  $p\in I$ and $q\in F$.  Therefore, a first attempt is the formula
  $\Phi'=\Sum{p\in I,q\in F}\Phi_{p,q}$ where the \wFO sentences $\Phi_{p,q}$
  are given by Theorem~\ref{thm:main1}.  We have $\usem{\A}=\usem{\Phi'}$ and
  the formula $\Phi'$ does not use any $\Sum{x}$ operator, but it does use some
  $+$ operator.  One should notice that, since $\A$ is \emph{unambiguous},
  for any word $u\in\Sigma^{+}$ at most one of the $(\usem{\Phi_{p,q}}(u))_{p\in
  I,q\in F}$ is nonempty. Therefore, if $(p_1,q_1),(p_2,q_2),\ldots,(p_m,q_m)$ 
  is an enumeration of $I\times F$ then we define
  \[
  \Phi = 
  \Ifthenelse{\varphi_{p_1,q_1}}{\Phi_{p_1,q_1}}{
  \Ifthenelse{\varphi_{p_2,q_2}}{\Phi_{p_2,q_2}}{
  ~\cdots~
  \Ifthenelse{\varphi_{p_m,q_m}}{\Phi_{p_m,q_m}}{
  \zero}}} \,.
  \]
  We have $\usem{\A}=\usem{\Phi}$ and
  the formula $\Phi$ does not use any $\Sum{x}$ or $+$ operator. Notice that in 
  the formula above, we may replace $\Phi_{p_i,q_i}$ by 
  $\Prod{x}\Psi_{p_i,q_i}(x)$ as given by Theorem~\ref{thm:main1}.

  Alternatively, by a standard construction adding a new initial and a new final
  state and appropriate transitions, we can obtain an aperiodic weighted automaton
  $\A'$ with a single initial and a single final state such that $\usem{\A'}=\usem\A$
  and, moreover, $\A'$ becomes unambiguous because $\A$ is unambiguous. Then apply
  Theorem~\ref{thm:main1} to $\A'$.
  
  \medskip\noindent 2.\
  Immediate by Theorem~\ref{thm:finitely-ambiguous} and part 1 above.
\end{proof}

Let $\A=(Q,\Sigma,\Delta)$ be a non-deterministic automaton.
Two states $p,q\in Q$ \emph{are in the same strongly connected component} (SCC),
denoted $p\approx q$, if $p=q$ or there exist a run of \A from $p$ to $q$ and
also a run of \A from $q$ to $p$.  Notice that $\approx$ is an equivalence
relation on $Q$.  We denote by $[p]$ the strongly connected component of state
$p$, i.e., the equivalence class of $p$ under $\approx$.  

The automaton $\A$ is \emph{SCC-unambiguous} if it is unambiguous on each
strongly connected component, i.e., $\A$ is unambiguous from $p$ to $q$ for all
$p,q$ such that $p\approx q$.  Notice that a trimmed (all states are reachable
and co-reachable) and unambiguous automaton is SCC-unambiguous.

For instance, the automaton $\A$ of Figure~\ref{fig:WA-SCC-unambiguous} has
three strongly connected components: $\{1\}$, $\{2,3\}$ and $\{4\}$.  It is not
unambiguous from $1$ to $4$, but it is SCC-unambiguous.

\begin{proposition}[\cite{reutenauer1977,ibarra1986} and \cite{Weber_1991} Theorem 4.1]
  Let $\A=(Q,\Sigma,\Delta,I,F)$ be a trimmed non-deterministic automaton. Then
  $\A$ is polynomially ambiguous if and only if $\A$ is SCC-unambiguous.
\end{proposition}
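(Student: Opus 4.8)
The plan is to prove both implications, handling the polynomial-bound direction first and the converse by contraposition. For \emph{SCC-unambiguous $\Rightarrow$ polynomially ambiguous}, the key structural fact is that the condensation of $\A$ (the quotient of $Q$ by $\approx$) is a directed acyclic graph, so along any run the sequence of SCCs visited is strictly increasing in the condensation order: once a run leaves an SCC it never returns. Consequently, a run on a word $w$ of length $n$ partitions its positions into at most $c$ contiguous blocks, where $c$ is the number of SCCs, one block per visited SCC, separated by at most $c-1$ \emph{inter-SCC} transitions $\delta_k=(q_{k-1},a_k,q_k)$ with $[q_{k-1}]\neq[q_k]$. I claim such a run is completely determined by its initial state $q_0\in I$ together with the set of its inter-SCC transitions and the positions at which they occur. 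Indeed, between two consecutive inter-SCC transitions (and before the first, after the last) the run stays inside a single SCC; since $\A$ is unambiguous on that SCC, this sub-run is the \emph{unique} run from the already-determined entry state reading the corresponding factor of $w$. Thus the whole run is reconstructed from this switch data, and this is exactly the place where SCC-unambiguity, rather than mere unambiguity between fixed endpoints, is used.

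It then remains to count the switch data. There are at most $|I|$ choices for $q_0$; at most $c-1$ switches occur; their positions form a subset of $\{1,\dots,n\}$ of size at most $c-1$, contributing at most $\sum_{s=0}^{c-1}\binom{n}{s}=O(n^{c-1})$ choices; and each switch uses one of at most $|\Delta|$ transitions. Multiplying, the number of accepting runs of $\A$ on any $w$ with $|w|=n$ is bounded by $|I|\cdot|\Delta|^{c-1}\cdot c\cdot n^{c-1}$, a fixed polynomial in $n$ of degree at most $c-1$. Hence $\A$ is polynomially ambiguous.

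For the converse I would argue by contraposition, producing from a failure of SCC-unambiguity words whose number of accepting runs grows exponentially in their length. By definition there are states $p\approx q$ and a word $u\in\Sigma^+$ admitting two distinct runs from $p$ to $q$ with label $u$. Since $p\approx q$, fix a run from $q$ back to $p$ with some label $v$ (taking $v=\varepsilon$ and the empty run when $p=q$). Appending this $q$-to-$p$ run to each of the two $u$-runs yields two distinct runs from $p$ to $p$ with label $uv$; concatenating $k$ such blocks and choosing independently in each block gives at least $2^k$ distinct runs from $p$ to $p$ with label $(uv)^k$, since these runs differ in at least one block. Finally, because $\A$ is trimmed, $p$ is reachable from some initial state by a run with label $x$ and co-reachable to some final state by a run with label $y$; prefixing and suffixing then produces at least $2^k$ distinct accepting runs on $w_k=x(uv)^ky$. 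As $|w_k|=|x|+|y|+k(|u|+|v|)=O(k)$, the number of accepting runs on $w_k$ is $2^{\Omega(|w_k|)}$, which no polynomial can bound. Therefore $\A$ is not polynomially ambiguous, completing the contrapositive.

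I expect the main obstacle to be the reconstruction-and-counting step of the forward direction: one must argue precisely that the switch data uniquely recovers the run (invoking unambiguity \emph{within} each SCC on the relevant factors) and that the number of such data is polynomial, with degree controlled by the number of SCCs. The converse is a routine pumping argument; the only points requiring care are ensuring genuine distinctness of the $2^k$ constructed runs and using trimmedness to promote internal $p$-to-$p$ runs to accepting runs.
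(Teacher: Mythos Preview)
The paper does not prove this proposition; it is quoted from the literature with citations to \cite{reutenauer1977,ibarra1986,Weber_1991}, so there is no in-paper argument to compare against. Your proposal is essentially the standard proof and is correct in both directions, with one small oversight in the reconstruction step.

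In the forward direction you claim that an accepting run is determined by its initial state $q_0$ together with the positions and identities of the inter-SCC transitions. This is true for every block except the \emph{last} one: after the final switch you know the entry state (the target of the last inter-SCC transition) but not the exit state, and SCC-unambiguity gives uniqueness only once \emph{both} endpoints are fixed. Concretely, the last SCC could contain several final states, each reachable from the entry state by a distinct run on the remaining suffix, so two different accepting runs may share the same switch data. The fix is immediate: include the final state $f\in F$ in the data you count. This adds a factor of $|F|\leq|Q|$ to your bound, which remains a polynomial of degree at most $c-1$ in $|w|$. With this amendment the forward direction is complete, and your contrapositive argument for the converse is correct as written.
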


\begin{theorem}\label{thm:main2}
  Let $\A$ be an aperiodic weighted automaton which is SCC-unambiguous.  For
  each pair of states $p$ and $q$, we can construct a \wFO sentence
  $\Phi_{p,q}$ such that $\usem{\A_{p,q}}=\usem{\Phi_{p,q}}$.
  Moreover, we can construct a \wFO sentence $\Phi$ such that
  $\usem{\A}=\usem{\Phi}$.
\end{theorem}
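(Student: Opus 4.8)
The plan is to exploit the condensation of $\A$ into strongly connected components. Since the SCCs form a DAG, any run $\rho$ from $p$ to $q$ on a word $u$ visits a sequence of \emph{distinct} SCCs $C_0=[p],C_1,\ldots,C_m=[q]$ and so factors uniquely as
\[
  \rho=\rho_0\,\delta_1\,\rho_1\,\delta_2\cdots\delta_m\,\rho_m,
\]
where each \emph{bridge} $\delta_j=(r_j,a,s_j)$ has $r_j\in C_{j-1}$, $s_j\in C_j$ (so $r_j\not\approx s_j$), and each \emph{segment} $\rho_j$ runs inside the single SCC $C_j$. Two facts make this decomposition usable. First, because a DAG cannot be re-entered, \emph{any} run between two states of one SCC stays inside it; hence each segment is a run inside $C_j$, and by SCC-unambiguity it is the \emph{unique} such run between its entry and exit states reading the relevant factor of $u$. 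Second, distinctness of the $C_j$ forces $m\le|Q|-1$. I would therefore guess the bridge positions with first-order sums $\Sum{x_j}$ and the bounded combinatorial data — the number $m$, the SCCs $C_0,\ldots,C_m$, and the bridge states $r_j,s_j$ — with finitely many binary sums.

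Fix one compatible \emph{shape} $\gamma=(m,C_0,\ldots,C_m,r_1,s_1,\ldots,r_m,s_m)$ (so $C_0=[p]$, $C_m=[q]$, the $C_j$ distinct and consecutively linked in the DAG, $r_j\in C_{j-1}$, $s_j\in C_j$). By Theorem~\ref{thm:ap2fo} I take the \FO-sentences $\varphi_{r,s}$ with $\Lang{\varphi_{r,s}}=\Lang{\A_{r,s}}$ and relativize them to factors as in the preliminaries. A first-order formula $\chi_\gamma(x_1,\ldots,x_m)$ then checks validity: $x_1<\cdots<x_m$; each $x_j$ carries a letter $a$ with $(r_j,a,s_j)\in\Delta$; and each segment is realizable, via $\varphi_{p,r_1}^{<x_1}$, the middle conditions $\varphi_{s_j,r_{j+1}}^{(x_j,x_{j+1})}$, and $\varphi_{s_m,q}^{>x_m}$. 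When two consecutive positions coincide the segment is empty and the relativization is read on $\varepsilon$; this is exactly the case covered by the extended empty-word semantics, where $\varepsilon\models\varphi_{s,r}$ iff $s=r$. For the weights I would build a \stepFO formula $\Psi_\gamma(x)$ that dispatches on the location of $x$: at a bridge position $x_j$ it returns the weight of $(r_j,a,s_j)$ selected from the letter at $x$ (the states being fixed by $\gamma$); strictly inside segment $j$ it returns the Theorem~\ref{thm:main1} weight for the unambiguous SCC $C_j$, but with both \FO-sentences relativized to the factor between $x_j$ and $x_{j+1}$ — concretely, a transition $(r,a,s)$ of $C_j$ gets the guard $\varphi_{s_j,r}^{(x_j,x)}\wedge P_a(x)\wedge\varphi_{s,r_{j+1}}^{(x,x_{j+1})}$, the outermost segments using the one-sided relativizations. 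The contribution of $\gamma$ is then the \wFO formula
\[
  \Phi_\gamma=\Sum{x_1}\cdots\Sum{x_m}\Ifthenelse{\chi_\gamma}{\Prod{x}\Psi_\gamma(x)}{\zero}.
\]

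For correctness I would establish, for each $u$, a bijection between successful runs $\rho$ from $p$ to $q$ on $u$ and pairs (shape $\gamma$, positions $x_1<\cdots<x_m$) with $\chi_\gamma$ true: the factorization above reads a unique such pair off $\rho$, while conversely $\chi_\gamma$ makes each segment realizable and SCC-unambiguity makes it unique, so the pair reconstructs a unique $\rho$. For the matching pair, $\Psi_\gamma(x)$ returns at every $x$ the weight of the $x$-th transition of $\rho$, so $\Prod{x}\Psi_\gamma(x)$ evaluates to $\multiset{\wgt(\rho)}$; all other choices make $\chi_\gamma$ fail and contribute $\zero$. Summing over the finitely many shapes, $\Phi_{p,q}=\sum_\gamma\Phi_\gamma$ satisfies $\usem{\A_{p,q}}=\usem{\Phi_{p,q}}$, and the final claim follows from $\Phi=\sum_{p\in I,q\in F}\Phi_{p,q}$ together with $\usem{\A}=\biguplus_{p\in I,q\in F}\usem{\A_{p,q}}$.

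The hard part will be the single product $\Prod{x}$: it must recover, simultaneously at \emph{every} position, the weight of the correct transition of the reconstructed run. This forces $\Psi_\gamma$ to recognize which segment or bridge a position belongs to and, on segment positions, to pin down the unique intra-SCC run on the relevant factor purely by relativization — the delicate points being the degenerate empty-segment cases (handled by the empty-word semantics) and the verification that the shape-plus-positions bijection neither double-counts nor omits any run. The surrounding bookkeeping (distinctness of visited SCCs, the bound $m\le|Q|-1$, and compatibility of shapes) is routine once this decomposition is fixed.
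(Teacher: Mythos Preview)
Your proposal is correct and follows essentially the same route as the paper: decompose a run along its SCC-switching transitions, guess the switch positions with first-order sums, verify each intra-SCC segment with relativized $\varphi_{r,s}$ sentences, and recover the weight at each position with a \stepFO case distinction built from Theorem~\ref{thm:main1}. The only cosmetic difference is that the paper indexes the outer finite sum by complete sequences of switching \emph{transitions} $\bar\delta=(\delta_1,\ldots,\delta_m)$ (including the bridge letters), whereas you index by ``shapes'' that fix only the bridge states and dispatch on the letter inside $\chi_\gamma$ and $\Psi_\gamma$; this merely regroups the same summands.
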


Before starting the proof of Theorem~\ref{thm:main2}, we give for the weighted
automaton $\A$ of Figure~\ref{fig:WA-SCC-unambiguous} the equivalent \wFO
formula
$\Phi_{1,4}=\Sum{y_1}\Sum{y_2}\Ifthenelse{\varphi(y_1,y_2)}{\Prod{x}\Psi(x,y_1,y_2)}\zero$
where $\varphi$ and $\Psi$ are defined below.
When reading a word $w\in\dom{\A}$, the automaton makes two non-deterministic
choices corresponding to the positions $y_1$ and $y_2$ at which the transitions
\emph{switching} between the strongly connected components are taken, i.e.,
transition from state 1 to state 2 is taken at position $y_1$, and transition
from state 3 to state 4 is taken at position $y_2$.  Since the automaton is
SCC-unambiguous, given the input word and these two positions, the run is
uniquely determined.  We use the \FO formula $\varphi(y_1,y_2)$ to check that it
is possible to take the switching transitions at positions $y_1$ and $y_2$:
\begin{align*}
  \varphi(y_1,y_2) = y_1<y_2 & 
  \wedge \forall z\,(z\leq y_1\rightarrow P_a(z))
  \wedge P_a(y_1+1)
  \wedge \forall z\,(y_2\leq z\rightarrow P_b(z)) \,.
\end{align*}
When this is the case, the \stepFO formula $\Psi(x,y_1,y_2)$ computes the weight
of the transition taken at a position $x$ in the input word:
\begin{align*}
  \Psi(x,y_1,y_2) =~ & \Ifthenelse{(x<y_1 \vee y_2<x)}{2}{} 
  \Ifthenelse{(x=y_1 \vee x=y_2)}{1}{} 
  \Ifthenelse{P_a(x+1)}{3}{5} \,.
\end{align*}
With these definitions, we obtain $\usem{\A}=\usem{\Phi_{1,4}}$.

\begin{proof}[Proof of Theorem~\ref{thm:main2}]
  Let $\A=(Q,\Sigma,\Delta,\wgt)$ be the aperiodic weighted automaton which is
  SCC-unambiguous.  Let $p,q\in Q$ be a pair of states of $\A$.  Assume first
  that $p\approx q$ are in the same strongly connected component.  Then $\A$ is
  unambiguous from $p$ to $q$ and we obtain the formula $\Phi_{p,q}$ directly by
  Theorem~\ref{thm:main1}.  So we assume below that $p\not\approx q$ are not in
  the same SCC.
  
  Consider a word $u\in\Lang{\A_{p,q}}$.  Let $\rho$ be a run from $p$
  to $q$ with label $u$. This run starts in the SCC of $p$ and ends in the SCC 
  of $q$. So it uses some transitions linking different SCCs.
  More precisely, we can uniquely split the run as 
  $\rho=\rho_0\delta_1\rho_1\delta_2\rho_2\cdots\delta_m\rho_{m}$ with 
  $m\geq1$ such that each subrun $\rho_i$ stays in some SCC and each transition 
  $\delta_i=(p_i,a_i,q_i)$ switches to a different SCC: 
  \begin{equation}
    p\approx p_1\not\approx q_1\approx p_2\not\approx q_2\approx p_3
    \cdots
    \approx p_m\not\approx q_m\approx q \,.
    \label{eq:switching}
  \end{equation}
  This motivates the following definition.  A sequence of \emph{switching}
  transitions from $p$ to $q$ is a tuple
  $\bar{\delta}=(\delta_1,\ldots,\delta_m)$ with $m\geq1$ satisfying
  \eqref{eq:switching}, where $\delta_i=(p_i,a_i,q_i)$ for $1\leq i\leq m$.
  A $\bar{\delta}$-run from $p$ to $q$ is a run from $p$ to $q$ using exactly
  the sequence of switching transitions $\bar{\delta}$, i.e., a run of the form
  $\rho=\rho_0\delta_1\rho_1\cdots\delta_m\rho_{m}$.  Notice
  that each subrun $\rho_i$ must stay in some SCC of $\A$. 
  
  \begin{claim}\label{claim:3}
    For each sequence $\bar{\delta}$ of switching transitions from $p$ to $q$,
    we can construct a \wFO sentence $\Phi_{p,\bar{\delta},q}$ such that for all
    $u\in\Sigma^+$ we have
    \begin{equation}
      \usem{\Phi_{p,\bar{\delta},q}}(u)=\multiset{\wgt(\rho)\mid\rho 
      \text{ is a $\bar{\delta}$-run from $p$ to $q$ 
      with label } u}\,.
      \label{eq:c3}
    \end{equation}
  \end{claim}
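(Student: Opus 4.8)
The plan is to guess with first-order sums the positions $x_1<\cdots<x_m$ at which the fixed switching transitions $\delta_1,\ldots,\delta_m$ are taken, to certify with a single \FO formula that the factors lying between consecutive switches are labels of runs staying inside the relevant SCCs, and then to emit the weight of the resulting run with one product of a \stepFO formula. Concretely, I will define
\[
\Phi_{p,\bar{\delta},q}=\Sum{x_1}\cdots\Sum{x_m}\Ifthenelse{\varphi_{p,\bar{\delta},q}(x_1,\ldots,x_m)}{\Prod{x}\Psi(x,x_1,\ldots,x_m)}\zero\,.
\]

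The crucial structural fact is that whenever $r\approx s$ lie in the same SCC, every run of $\A$ from $r$ to $s$ stays inside that SCC, since leaving and re-entering is impossible in the acyclic structure of SCCs; hence $\Lang{\A_{r,s}}$ is precisely the set of labels of the SCC-internal subruns that can occur between switches, and by Theorem~\ref{thm:ap2fo} it is defined by a first-order sentence $\varphi_{r,s}$. Using the relativizations $\varphi^{<x}$, $\varphi^{(x,y)}$, $\varphi^{>x}$, I set $\varphi_{p,\bar{\delta},q}(x_1,\ldots,x_m)$ to be the conjunction of $x_1<\cdots<x_m$, of $P_{a_i}(x_i)$ for each $i$, and of the segment conditions $\varphi_{p,p_1}^{<x_1}$, $\varphi_{q_i,p_{i+1}}^{(x_i,x_{i+1})}$ for $1\le i<m$, and $\varphi_{q_m,q}^{>x_m}$, following the splitting \eqref{eq:switching}. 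This holds exactly when some $\bar{\delta}$-run has its switches at $x_1,\ldots,x_m$; because the switching transitions are fixed and $\A$ is unambiguous on each SCC, such a run is then unique, yielding a bijection between satisfying tuples $(i_1,\ldots,i_m)$ and $\bar{\delta}$-runs.

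For the weights I would adapt Claims~\ref{claim:1} and \ref{claim:2} of Theorem~\ref{thm:main1} to each SCC segment. The \stepFO formula $\Psi(x,x_1,\ldots,x_m)$ is a nested if-then-else that first tests whether $x$ equals some $x_i$, outputting $\wgt(\delta_i)$ in that case, and otherwise identifies the zone containing $x$ (before $x_1$, strictly between $x_i$ and $x_{i+1}$, or after $x_m$) and, within that zone, selects the transition $\delta=(r,a,s)$ of the corresponding unique SCC subrun occurring at $x$ by a relativized version of $\varphi_\delta(x)=\varphi_{p,r}^{<x}\wedge P_a(x)\wedge\varphi_{s,q}^{>x}$, returning $\wgt(\delta)$. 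For example, in the zone $x<x_1$ (subrun $\rho_0$ from $p$ to $p_1$) the selector is $\varphi_{p,r}^{<x}\wedge P_a(x)\wedge\varphi_{s,p_1}^{(x,x_1)}$, and analogously $\varphi_{q_i,r}^{(x_i,x)}\wedge P_a(x)\wedge\varphi_{s,p_{i+1}}^{(x,x_{i+1})}$ in an interior zone and $\varphi_{q_m,r}^{(x_m,x)}\wedge P_a(x)\wedge\varphi_{s,q}^{>x}$ in the final zone. As in Theorem~\ref{thm:main1}, an arbitrary default weight closes the if-then-else; this is harmless because the product is evaluated only under the guard $\varphi_{p,\bar{\delta},q}$, where every position of the unique run matches exactly one branch.

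Finally I would verify the semantics: for a valid tuple $(i_1,\ldots,i_m)$ the product $\Prod{x}\Psi$ evaluates to the singleton $\multiset{\wgt(\rho)}$ of the unique corresponding $\bar{\delta}$-run $\rho$, while every other tuple fails the guard and contributes $\emptymultiset$, so the multiset union over all tuples delivered by $\Sum{x_1}\cdots\Sum{x_m}$ is exactly the right-hand side of \eqref{eq:c3}. I expect the main obstacle to be the correctness of $\Psi$, namely checking that the relativized selectors pick out, at each position of each zone, precisely the transition used by the unique SCC subrun on the relevant factor; this is where SCC-unambiguity and the factor-relativization underlying Theorem~\ref{thm:main1} carry the weight, and where the degenerate cases of empty factors (two consecutive switches, or a switch at the first or last position) must be absorbed by the conventions already fixed for $\varepsilon\models\psi$ and for the relativizations of sentences.
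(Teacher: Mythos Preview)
Your proposal is correct and follows essentially the same approach as the paper: the paper uses variables $y_1,\ldots,y_m$ for the switching positions, defines the identical guard formula $\varphi$ as your $\varphi_{p,\bar{\delta},q}$, establishes the same bijection (stated as a separate Claim~\ref{claim:4}), and builds $\Psi$ from exactly the relativized selectors you wrote down for the three kinds of zones together with $x=y_i$ for the switching positions. The only cosmetic difference is that the paper presents $\Psi$ as a flat if-then-else chain over an enumeration of all transitions of $\Delta$ (with $\varphi_\delta=\mathsf{false}$ for irrelevant transitions), whereas you describe a zone-first nesting; both yield equivalent \stepFO formulas.
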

  
  \begin{claimproof}
  During the proof of Claim~\ref{claim:3}, we fix the sequence
  $\bar{\delta}=(\delta_1,\ldots,\delta_m)$ of switching transitions from $p$ to
  $q$, with $m\geq1$ and $\delta_i=(p_i,a_i,q_i)$ for $1\leq i\leq m$.  
  
  \medskip %
  By Theorem~\ref{thm:ap2fo}, for every pair of states $r,s\in Q$ there is a
  first-order sentence $\varphi_{r,s}$ such that
  $\Lang{\A_{r,s}}=\Lang{\varphi_{r,s}}$.  We will use these formulas and also 
  their relativizations $\varphi_{r,s}^{<y}$, $\varphi_{r,s}^{(y,z)}$ and 
  $\varphi_{r,s}^{>z}$.
  
  \medskip %
  We define the \FO formula $\varphi$ 
  with free variables $\Variables=\{y_1,\ldots,y_m\}$ by
  \[
  \varphi = y_1<y_2<\cdots<y_m 
  \wedge \bigwedge_{1\leq i\leq m} P_{a_i}(y_i)
  \wedge \varphi_{p,p_1}^{<y_1}
  \wedge \bigwedge_{1\leq i<m} \varphi_{q_i,p_{i+1}}^{(y_i,y_{i+1})}
  \wedge \varphi_{q_m,q}^{>y_m} \,.
  \]
  Now, we fix a word $u\in\Sigma^+$.
  
  \begin{claim}\label{claim:4}
    There is a bijection between the valuations
    $\sigma\colon\Variables\to\pos{u}=\{1,\ldots,|u|\}$ such that $u,\sigma\models\varphi$ and
    the $\bar{\delta}$-runs $\rho$ from $p$ to $q$ with label $u$.
  \end{claim}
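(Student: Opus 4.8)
The plan is to exhibit the bijection concretely in both directions and then verify that the two maps are mutually inverse; the single genuine use of the hypotheses is that SCC-unambiguity pins down the subruns uniquely.

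First I would map a $\bar{\delta}$-run to a valuation. Given a $\bar{\delta}$-run $\rho=\rho_0\delta_1\rho_1\cdots\delta_m\rho_m$ from $p$ to $q$ with label $u$, each switching transition $\delta_i$ is read at exactly one position of $u$; calling these positions $k_1<\cdots<k_m$ (they are strictly increasing since $\rho_i$ is read strictly between $\delta_i$ and $\delta_{i+1}$), I set $\sigma(y_i)=k_i$. To see $u,\sigma\models\varphi$ I check each conjunct: $y_1<\cdots<y_m$ holds by construction; $P_{a_i}(y_i)$ holds since $\delta_i=(p_i,a_i,q_i)$ reads $a_i$; and each relativized conjunct holds because the corresponding subrun witnesses membership of a factor of $u$ in the relevant language. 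Concretely, $\rho_0$ is a run from $p$ to $p_1$ with label $u[1,k_1-1]$, so $u[1,k_1-1]\in\Lang{\A_{p,p_1}}=\Lang{\varphi_{p,p_1}}$ by Theorem~\ref{thm:ap2fo}, whence $u,\sigma\models\varphi_{p,p_1}^{<y_1}$; the conjuncts $\varphi_{q_i,p_{i+1}}^{(y_i,y_{i+1})}$ and $\varphi_{q_m,q}^{>y_m}$ follow identically from $\rho_i$ and $\rho_m$.

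Conversely I would reconstruct a run from a satisfying valuation. Given $\sigma$ with $u,\sigma\models\varphi$, put $k_i=\sigma(y_i)$, so $k_1<\cdots<k_m$ and the $k_i$-th letter of $u$ is $a_i$, which permits $\delta_i$ to be read at position $k_i$. The relativized conjuncts say that each intervening factor of $u$ lies in a language $\Lang{\A_{r,s}}$ with $r\approx s$ in a common SCC (using $p\approx p_1$, $q_i\approx p_{i+1}$, $q_m\approx q$ from \eqref{eq:switching}). Since $\A$ is SCC-unambiguous there is a \emph{unique} run realizing each such factor, and moreover this run stays inside the SCC: any intermediate state $r'$ is reachable from $r$ and co-reachable to $s$, and $s$ reaches $r$ because $r\approx s$, so $r'\approx r$. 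Concatenating these unique subruns with the $\delta_i$ yields a $\bar{\delta}$-run from $p$ to $q$ with label $u$.

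Finally I would check that the two maps are inverse, which is where injectivity really lives and which I expect to be the main (if modest) obstacle. Starting from a valuation, building the run, and reading off its switching positions clearly returns the $k_i$. Starting from a $\bar{\delta}$-run, reading off $k_1<\cdots<k_m$, and reconstructing: the switching transitions are fixed by $\bar{\delta}$ and sit at the $k_i$, while each subrun has prescribed endpoints, prescribed label, and lives in one SCC, so SCC-unambiguity forces it to coincide with the original $\rho_i$. The only delicate point is thus this uniqueness of subruns under SCC-unambiguity together with the closure-in-SCC argument; everything else is routine bookkeeping on the positions of the switching transitions.
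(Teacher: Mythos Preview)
Your proposal is correct and follows essentially the same approach as the paper: map a $\bar{\delta}$-run to the valuation given by the positions of its switching transitions, and conversely use the relativized conjuncts together with SCC-unambiguity to reconstruct the unique subruns $\rho_i$ between the prescribed positions. You are in fact slightly more explicit than the paper in arguing that each reconstructed subrun stays inside its SCC (so that the assembled run really is a $\bar{\delta}$-run), a point the paper leaves implicit.
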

  
  \begin{claimproof}
    First, let $\sigma\colon\Variables\to\pos{u}$ be such that
    $u,\sigma\models\varphi$.  We have
    $\sigma(y_1)<\sigma(y_2)<\cdots<\sigma(y_m)$.
    Since $u,\sigma\models\varphi_{p,p_1}^{<y_1}$, there is a (possibly empty) run $\rho_0(\sigma)$
    from $p$ to $p_1$ reading the prefix $u_0=u[1,\sigma(y_{1})-1]$ of $u$.
    Notice that such a run is unique since $p\approx p_1$ and $\A$ is
    SCC-unambiguous.
    Similarly, for all $1\leq i<m$,
    $u,\sigma\models\varphi_{q_i,p_{i+1}}^{(y_i,y_{i+1})}$ implies that there is a
    \emph{unique} run $\rho_i(\sigma)$ from $q_i$ to $p_{i+1}$ reading the factor
    $u_i=u[\sigma(y_i)+1,\sigma(y_{i+1})-1]$ of $u$.
    Also, $u,\sigma\models\varphi_{q_m,q}^{>y_m}$ implies that there is a
    \emph{unique} run $\rho_m(\sigma)$ from $q_m$ to $q$ reading the suffix
    $u_m=u[\sigma(y_m)+1,|u|]$ of $u$.
    Now, since $u,\sigma\models\bigwedge_{1\leq i\leq m} P_{a_i}(y_i)$, we deduce
    that $u=u_0a_1u_1a_2\cdots a_mu_m$ and that
    $\rho(\sigma)=\rho_0(\sigma)\delta_1\rho_1(\sigma)\cdots\delta_m\rho_{m}(\sigma)$
    is a $\bar{\delta}$-run of $\A$ from $p$ to $q$ with label $u$.
    
    Conversely, let $\rho=\rho_0\delta_1\rho_1\cdots\delta_m\rho_{m}$ be a
    $\bar{\delta}$-run of $\A$ from $p$ to $q$ with label $u$.  Define the
    valuation $\sigma\colon\Variables\to\pos{u}$ so that the switching
    transitions $\bar{\delta}$ along this run are taken at positions
    $\sigma(y_1)<\sigma(y_2)<\cdots<\sigma(y_m)$. We can easily check that 
    $u,\sigma\models\varphi$ and that $\rho=\rho(\sigma)$.
    This concludes the proof of Claim~\ref{claim:4}.
  \end{claimproof}
  
  \medskip %
  Let $\delta=(r,a,s)\in\Delta$ be a transition such that 
  $q_i\approx r\approx s\approx p_{i+1}$ for some $1\leq i<m$. Define the \FO 
  formula
  \[
  \varphi_{\delta} = y_i<x<y_{i+1}
  \wedge \varphi_{q_i,r}^{(y_i,x)}
  \wedge P_a(x)
  \wedge \varphi_{s,p_{i+1}}^{(x,y_{i+1})} \,.
  \]
  It is not difficult to see that for all valuations 
  $\sigma\colon\Variables\cup\{x\}\to\pos{u}$ we have
  $u,\sigma\models\varphi_{\delta}$ iff
  the factor $v=u[\sigma(y_i)+1,\sigma(y_{i+1})-1]$ of $u$ is such that
  $v\in\Lang{\A_{q_i,p_{i+1}}}$ and the unique run of $\A$ from $q_i$ to
  $p_{i+1}$ with label $v$ takes transition $\delta$ on position 
  $\sigma(x)-\sigma(y_i)$. This is similar to Claim~\ref{claim:2}.
  
  Now, if $\delta=(r,a,s)\in\Delta$ is a transition such that 
  $p\approx r\approx s\approx p_{1}$, then we define the \FO 
  formula
  \[
  \varphi_{\delta} = x<y_{1}
  \wedge \varphi_{p,r}^{<x}
  \wedge P_a(x)
  \wedge \varphi_{s,p_{1}}^{(x,y_{1})} \,.
  \]
  Then, $u,\sigma\models\varphi_{\delta}$ iff the prefix
  $v=u[1,\sigma(y_{1})-1]$ of $u$ is such that
  $v\in\Lang{\A_{p,p_{1}}}$ and the unique run of $\A$ from $p$ to
  $p_{1}$ with label $v$ takes transition $\delta$ on position
  $\sigma(x)$.
  
  Next, if $\delta=(r,a,s)\in\Delta$ is a transition such that 
  $q_m\approx r\approx s\approx q$, then we define the \FO 
  formula
  \[
  \varphi_{\delta} = y_m<x
  \wedge \varphi_{q_m,r}^{(y_m,x)}
  \wedge P_a(x)
  \wedge \varphi_{s,q}^{>x} \,.
  \]
  Then, $u,\sigma\models\varphi_{\delta}$ iff the suffix
  $v=u[\sigma(y_{m})+1,|u|]$ of $u$ is such that
  $v\in\Lang{\A_{q_m,q}}$ and the unique run of $\A$ from $q_m$ to
  $q$ with label $v$ takes transition $\delta$ on position
  $\sigma(x)-\sigma(y_m)$.
  
  Finally, for a switching transition $\delta_i$ of $\bar{\delta}$ we let
  $\varphi_{\delta_i}=(x=y_i)$ and for all other transitions
  $\delta=(r,a,s)\in\Delta\setminus\{\delta_1,\ldots,\delta_m\}$ such that $r,s$
  are not both in the strongly connected component of one of the states
  $p_1,p_2,\ldots,p_m,q$ then we let $\varphi_{\delta}=\mathsf{false}$.
  
  As in the proof of Theorem~\ref{thm:main1}, we choose an arbitrary enumeration
  $\delta^1,\delta^2,\ldots,\delta^k$ of the transitions in $\Delta$ and define
  the \stepFO formula with free variables $\Variables\cup\{x\}$
  \[
  \Psi = 
  \Ifthenelse{\varphi_{\delta^1}}{\wgt(\delta^1)}{
  \Ifthenelse{\varphi_{\delta^2}}{\wgt(\delta^2)}{
  ~\cdots~
  \Ifthenelse{\varphi_{\delta^k}}{\wgt(\delta^k)}{
  \wgt(\delta^k)}}} \,.
  \]
  Finally, the \wFO sentence for Claim~\ref{claim:3} is defined by
  \[\textstyle
  \Phi_{p,\bar{\delta},q} = \sum_{y_1}\sum_{y_2}\cdots\sum_{y_m}
  \big( \Ifthenelse{\varphi}{\Prod{x}\Psi}\zero \big) \,.
  \]
  We prove now that Equation~\eqref{eq:c3} holds.
  By definition, $\usem{\Phi_{p,\bar{\delta},q}}(u)$ is the (multiset) union 
  over all valuations $\sigma\colon\Variables\to\pos{u}$
  of $\usem{\Ifthenelse{\varphi}{\Prod{x}\Psi}\zero}(u,\sigma)$.
  By Claim~\ref{claim:4}, there is a bijection between the valuations 
  $\sigma\colon\Variables\to\pos{u}$ such that $u,\sigma\models\varphi$ and the 
  $\bar{\delta}$-runs from $p$ to $q$ with label $u$. Therefore, it remains to 
  show that for all valuations $\sigma\colon\Variables\to\pos{u}$ such that
  $u,\sigma\models\varphi$ with associated $\bar{\delta}$-run $\rho$ we have 
  \[
  \multiset{\wgt(\rho)} = \usem{\Prod{x}\Psi}(u,\sigma) \,.
  \]
  Let $i\in\pos{u}$ and let $\delta$ be the $i$th transition of $\rho$.  From
  the definitions above, we deduce easily that $u,\sigma[x\mapsto i]
  \models\varphi_{\delta^j}$ iff $\delta^j=\delta$.  Therefore,
  $\stepsem{\Psi}(u,\sigma[x\mapsto i])=\stepmultiset{\wgt(\delta)}$.  The announced
  equality $\multiset{\wgt(\rho)} = \usem{\Prod{x}\Psi}(u,\sigma)$ follows.
  This concludes the proof of Claim~\ref{claim:3}.
  \end{claimproof}
    
  To conclude the proof of the first part of Theorem~\ref{thm:main2}, we define
  \[\textstyle
  \Phi_{p,q}=\sum_{\bar{\delta}}\Phi_{p,\bar{\delta},q}
  \]
  where the sum ranges over all sequences $\bar{\delta}$ of switching 
  transitions from $p$ to $q$.
  Recall that we have assumed that $p\not\approx q$ are not in the same SCC of 
  $\A$. Therefore, each run from $p$ to $q$ should go through some sequence of 
  switching transitions. More precisely, given a word $u\in\Sigma^+$, the runs 
  of $\A$ from $p$ to $q$ with label $u$ can be partitionned according to the 
  sequence $\bar{\delta}$ of switching transitions that they use. Therefore, 
  $\usem{\A_{p,q}}(u)$ is the multiset union over all sequences $\bar{\delta}$ 
  of switching transitions from $p$ to $q$ of the multisets
  $\multiset{\wgt(\rho)\mid\rho \text{ is a $\bar{\delta}$-run from $p$ to $q$ 
  with label } u}$. Using Claim~\ref{claim:3}, we deduce that 
  $\usem{\A_{p,q}}(u)=\usem{\Phi_{p,q}}(u)$.
  
  \medskip %
  Finally, consider a weighted automaton with acceptance conditions
  $\A=(Q,\Sigma,\Delta,I,F)$ which is aperiodic and SCC-unambiguous.
  We set $\Phi=\Sum{p\in I,q\in F}\Phi_{p,q}$ where for each pair of states
  $(p,q)\in I\times F$, the formula $\Phi_{p,q}$ is defined as above.
\end{proof}

%%%%%%%%%%%%%%%%%%%%%%%%%%%
\section{From Weighted FO to Weighted Automata}\label{sec:wfo2wa}

Let $\A=(Q,\Sigma,\Delta)$ and $\A'=(Q',\Sigma,\Delta')$ be two
non-deterministic automata over the same alphabet $\Sigma$.  Assuming that
$Q\cap Q'=\emptyset$, we define their disjoint union as $\A\uplus\A'=(Q\uplus
Q',\Sigma,\Delta\uplus\Delta')$ and their product as
$\A\times\A'=(Q\times Q',\Sigma,\Delta'')$ where
$\Delta''=\{((p,p'),a,(q,q'))\mid (p,a,p')\in\Delta \text{ and }
(p',a,q')\in\Delta'\}$.

\begin{lemma}\label{lem:union-product-WA}
  The following holds.
  \begin{enumerate}
    \item If $\A$ and $\A'$ are aperiodic, then $\A\uplus\A'$ and $\A\times\A'$ 
    are also aperiodic.
  
    \item If $\A$ and $\A'$ are SCC-unambiguous, then $\A\uplus\A'$ and
    $\A\times\A'$ are also SCC-unambiguous.
  \end{enumerate}
\end{lemma}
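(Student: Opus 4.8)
The plan is to derive both parts of the lemma from two elementary observations about the runs of $\A\uplus\A'$ and $\A\times\A'$, and then to argue each preservation property coordinatewise. For the disjoint union, since $\Delta\uplus\Delta'$ contains no transition between $Q$ and $Q'$, every run of $\A\uplus\A'$ with a nonempty label lies entirely in $\A$ or entirely in $\A'$. Hence $\Lang{(\A\uplus\A')_{p,q}}$ equals $\Lang{\A_{p,q}}$ if $p,q\in Q$, equals $\Lang{\A'_{p,q}}$ if $p,q\in Q'$, and is empty otherwise, and in the two non-trivial cases the runs of $\A\uplus\A'$ and of the corresponding factor coincide. For the product, a run of $\A\times\A'$ from $(p,p')$ to $(q,q')$ reading $u$ is exactly a pair formed by a run of $\A$ from $p$ to $q$ reading $u$ and a run of $\A'$ from $p'$ to $q'$ reading $u$ (the two being synchronized letter by letter). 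Consequently $\Lang{(\A\times\A')_{(p,p'),(q,q')}}=\Lang{\A_{p,q}}\cap\Lang{\A'_{p',q'}}$, and the number of runs of the product equals the product of the numbers of runs in the two factors.

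For part~1 I would work directly with the language characterization of aperiodicity. Let $m,m'$ be aperiodicity indices of $\A,\A'$ and set $M=\max(m,m')$. I first record the routine fact that an index $m$ also works for all larger powers, i.e.\ $u^{k}\in\Lang{\A_{p,q}}$ iff $u^{k+1}\in\Lang{\A_{p,q}}$ for every $k\ge m$; this follows by a short induction, or at once from the aperiodicity of the transition monoid $\Tr(\A)$ noted in the preliminaries. Now the case split above gives, for the disjoint union, that $u^{M}\in\Lang{(\A\uplus\A')_{p,q}}$ iff $u^{M+1}\in\Lang{(\A\uplus\A')_{p,q}}$, using the index of the relevant factor, the mixed case being vacuous; and for the product, $u^{M}\in\Lang{(\A\times\A')_{(p,p'),(q,q')}}$ holds iff $u^{M}\in\Lang{\A_{p,q}}$ and $u^{M}\in\Lang{\A'_{p',q'}}$, which by applying the two indices separately is equivalent to the same statement for $u^{M+1}$. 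Thus both automata are aperiodic with index $M$. Equivalently, one can observe that $\Tr(\A\uplus\A')$ and $\Tr(\A\times\A')$ are quotients of the submonoid of $\Tr(\A)\times\Tr(\A')$ generated by the letters, and that products, submonoids and quotients of aperiodic monoids are aperiodic.

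For part~2 the disjoint union is immediate: the $\approx$-classes of $\A\uplus\A'$ are precisely those of $\A$ together with those of $\A'$, so $p\approx q$ forces $p$ and $q$ into the same factor, where unambiguity from $p$ to $q$ is inherited. The product is the only case needing a genuine argument. Suppose $(p,p')\approx(q,q')$ in $\A\times\A'$. Projecting the two witnessing runs---one from $(p,p')$ to $(q,q')$ and one back---onto each coordinate, and using the run-factorization above, produces runs showing $p\approx q$ in $\A$ and $p'\approx q'$ in $\A'$. Since $\A$ and $\A'$ are SCC-unambiguous, $\A$ is unambiguous from $p$ to $q$ and $\A'$ from $p'$ to $q'$, so each admits at most one run for any given label. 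As the number of product runs is the product of these two counts, $\A\times\A'$ has at most one run from $(p,p')$ to $(q,q')$ for every label, i.e.\ it is unambiguous on this SCC.

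None of the steps is genuinely hard; the only point demanding care is the product case of part~2, namely the observation that strong connectivity in $\A\times\A'$ projects to strong connectivity in each factor. The converse implication is false, but it is not needed---only the projection direction is required in order to apply the SCC-unambiguity hypothesis coordinatewise.
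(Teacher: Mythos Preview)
Your proof is correct and complete. The paper actually states this lemma without proof, treating it as routine; your write-up supplies precisely the elementary coordinatewise arguments one would expect, including the only point that merits a sentence of justification, namely that strong connectivity in $\A\times\A'$ projects to strong connectivity in each factor.
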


Now let $\varphi$ be an \FO-formula with free variables contained in the finite
set $\Variables$, and let $\A_{\varphi,\Variables}=(Q,\Sigma_\Variables,\Delta,\iota,F,G)$ be
the deterministic, complete, trim and aperiodic automaton given by
Theorem~\ref{thm:fo2ap}.
For $i=1,2$, let $\A_i=(Q_i,\Sigma_\Variables,\Delta_i,\wgt_i,I_i,F_i)$ be two
weighted automata over $\Sigma_\Variables$ with $Q_1\cap Q_2=\emptyset$.
We define the weighted automaton $\A'=(Q',\Sigma_\Variables,\Delta',\wgt',I',F')$
by letting
\begin{itemize}
  \item $Q'=Q\times Q_1 \uplus Q\times Q_2$, 
  $I'=\{\iota\}\times I_1 \uplus \{\iota\}\times I_2$,
  $F'=F\times F_1\uplus G\times F_2$,
  
  \item $\Delta'=\{\big((p,p'),a,(q,q')\big) \mid (p,a,q)\in\Delta\text{ and }
  (p',a,q')\in\Delta_1\cup\Delta_2\}$, and 
      
  $\wgt'\big((p,p'),a,(q,q')\big)=\wgt_i(p',a,q')$ if $(p',a,q')\in\Delta_i$ for $i=1,2$.
\end{itemize}
Then we have:

\begin{lemma}\label{lem:tmp8}
  For each $\bar u\in\Sigma_\Variables^+ $, we have
  \[\usem{\A'}(\bar u)=\begin{cases}
  \usem{\A_1}(\bar u), &\text{if }\bar u\text{ is valid and }\bar u\models\varphi,\\
  \usem{\A_2}(\bar u), &\text{if }\bar u\text{ is valid and }\bar u\not\models\varphi,\\
  \emptyset, &\text{if }\bar u\text{ is not valid.}
  \end{cases}\]
  Moreover, if $\A_1$ and $\A_2$ are aperiodic (resp.\ unambiguous,
  SCC-unambiguous) then so is $\A'$.
\end{lemma}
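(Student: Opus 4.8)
The plan is to exhibit $\A'$ as (the underlying non-deterministic automaton of) the disjoint union of the two products $\A_{\varphi,\Variables}\times\A_1$ and $\A_{\varphi,\Variables}\times\A_2$, and then to read off the semantics by exploiting that the first component is deterministic and complete. First I would observe that, since $Q_1\cap Q_2=\emptyset$ and a transition of $\Delta_1$ (resp.\ $\Delta_2$) keeps the second component inside $Q_1$ (resp.\ $Q_2$), every run of $\A'$ has its second component lying entirely in $Q_1$ or entirely in $Q_2$; which of the two is fixed by the initial state chosen from $\{\iota\}\times I_1$ or $\{\iota\}\times I_2$. Thus the underlying automaton of $\A'$ is exactly $(\A_{\varphi,\Variables}\times\A_1)\uplus(\A_{\varphi,\Variables}\times\A_2)$, and $\wgt'$ simply copies $\wgt_1$ on the first copy and $\wgt_2$ on the second.

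Next I would set up a weight-preserving bijection between accepting runs. Because $\A_{\varphi,\Variables}$ is deterministic and complete, for every word $\bar u$ there is a unique run of $\A_{\varphi,\Variables}$ from $\iota$, reaching the state $\Delta(\iota,\bar u)$. Hence every run of $\A_1$ (resp.\ $\A_2$) on $\bar u$ lifts uniquely to a run of $\A'$ whose first component is this forced run, and conversely projecting onto the second component recovers the $\A_1$- (resp.\ $\A_2$-) run; this correspondence preserves labels and, by the definition of $\wgt'$, weight sequences. An accepting run in the first copy requires the final state to lie in $F\times F_1$, i.e.\ $\Delta(\iota,\bar u)\in F$ together with an accepting run of $\A_1$; by Theorem~\ref{thm:fo2ap} the condition $\Delta(\iota,\bar u)\in F$ is equivalent to $\bar u$ being valid with $\bar u\models\varphi$. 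Symmetrically, the second copy contributes accepting runs exactly when $\Delta(\iota,\bar u)\in G$, i.e.\ $\bar u$ valid with $\bar u\not\models\varphi$. Since $F$ and $G$ are disjoint on the reachable (trim) part, at most one of the two copies ever contributes accepting runs, and when $\bar u$ is not valid we have $\Delta(\iota,\bar u)\notin F\cup G$ so neither copy does. Reading this through the bijection yields the three cases of the displayed semantics, giving $\usem{\A'}(\bar u)$ equal to $\usem{\A_1}(\bar u)$, $\usem{\A_2}(\bar u)$, or $\emptyset$ respectively.

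For the preservation statements I would split according to the nature of the property. Aperiodicity and SCC-unambiguity are structural properties of the underlying automaton, so they follow from Lemma~\ref{lem:union-product-WA}: the automaton $\A_{\varphi,\Variables}$ is aperiodic by Theorem~\ref{thm:fo2ap} and, being deterministic, is unambiguous from any state to any other and hence SCC-unambiguous; therefore, if $\A_1,\A_2$ are aperiodic (resp.\ SCC-unambiguous), then so are the two products $\A_{\varphi,\Variables}\times\A_i$ and consequently their disjoint union $\A'$. Unambiguity, by contrast, is a property of accepting runs that Lemma~\ref{lem:union-product-WA} does not address directly, so here I would reuse the bijection: for valid $\bar u$ the accepting runs of $\A'$ are in bijection with those of the relevant $\A_i$, and for invalid $\bar u$ there are none, so if $\A_1$ and $\A_2$ each have at most one accepting run per word then so does $\A'$.

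The step I expect to require the most care is the bijection bookkeeping: one must check that completeness and determinism of the first component make the lift of an $\A_i$-run simultaneously well-defined and unique, and that the disjointness $F\cap G=\emptyset$ on reachable states prevents the two copies from accepting at the same time. This is precisely what collapses the apparently non-deterministic product back into a clean case distinction driven by the satisfaction of $\varphi$, and hence what makes the three-way semantics and the unambiguity claim fall out cleanly.
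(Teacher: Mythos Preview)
Your proposal is correct and follows essentially the same approach as the paper's own proof, which is a terse two-sentence sketch invoking the construction together with Theorem~\ref{thm:fo2ap} for the semantics, and Lemma~\ref{lem:union-product-WA} plus the disjointness of $F$ and $G$ for the preservation claims. Your write-up simply unpacks these hints in detail: the identification of $\A'$ as $(\A_{\varphi,\Variables}\times\A_1)\uplus(\A_{\varphi,\Variables}\times\A_2)$, the bijection via determinism and completeness of the first factor, and the separate treatment of unambiguity using $F\cap G=\emptyset$ are exactly what the paper's sketch is pointing at.
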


\begin{proof}
  The first part is immediate by the construction of $\A'$ and Theorem~\ref{thm:fo2ap}.
  For the final statement, we can argue as for Lemma~\ref{lem:union-product-WA};
  for the unambiguity part observe that the sets $F$ and $G$ of $\A_{\varphi,\Variables}$
  are disjoint.
\end{proof}

Let $\Variables$ be a finite set of first-order variables and let
$\Variables'=\Variables\cup\{y\}$ where $y\notin\Variables$. 
Given a word $\bar{w}\in\Sigma_{\Variables}^+$ and a position $i\in\pos{w}$, we 
denote by $(\bar{w},y\mapsto i)$ the word over $\Sigma_{\Variables'}$ whose 
projection on $\Sigma_{\Variables}$ is $\bar{w}$ and projection on the 
$y$-component is $0^{i-1}10^{|w|-i}$, i.e., has a unique 1 on position $i$.
Given a function $A\colon\Sigma_{\Variables'}^+\to\Multiset{X}$, we define the
function $\Sum{y}A\colon\Sigma_{\Variables}^+\to\Multiset{X}$ for 
$\bar{w}\in\Sigma_{\Variables}^+$ by
\[
(\Sum{y}A)(\bar{w})=\biguplus_{i\in\pos{w}}A(\bar{w},y\mapsto i) \,.
\]

\begin{lemma}\label{lem:projection}
  Let $\A$ be a weighted automaton over $\Sigma_{\Variables'}$. We can 
  construct a weighted automaton $\A'$ over $\Sigma_{\Variables}$ such that 
  $\usem{\A'}=\Sum{y}\usem{\A}$. Moreover,
  \begin{enumerate}
    \item If $\A$ is aperiodic then $\A'$ is also aperiodic.
  
    \item If $\A$ is SCC-unambiguous then $\A'$ is also SCC-unambiguous.
  \end{enumerate}  
\end{lemma}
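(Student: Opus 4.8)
The plan is to realize $\A'$ as a first-order style projection: $\A'$ simulates $\A$ while nondeterministically guessing the unique position carrying the $1$ in the $y$-component, and it enforces that exactly one such position occurs. Concretely, I would take $Q' = Q \times \{0,1\}$, the second component being a flag that records whether the guessed $y=1$ position has already been read. Writing a letter of $\Sigma_\Variables$ as $\bar a$ and $(\bar a, c)$ for the letter of $\Sigma_{\Variables'}$ with $y$-component $c$, the transitions of $\A'$ are $((p,0),\bar a,(q,0))$ and $((p,1),\bar a,(q,1))$ whenever $(p,(\bar a,0),q)\in\Delta$, and $((p,0),\bar a,(q,1))$ whenever $(p,(\bar a,1),q)\in\Delta$; each transition inherits the weight of the corresponding transition of $\A$. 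I would set $I'=I\times\{0\}$ and $F'=F\times\{1\}$, so that an accepting run is forced to flip the flag from $0$ to $1$ exactly once.

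For correctness I would exhibit, for each $\bar w\in\Sigma_\Variables^+$, a weight-preserving bijection between accepting runs of $\A'$ on $\bar w$ and pairs $(i,\rho)$ where $i\in\pos{w}$ and $\rho$ is an accepting run of $\A$ on $(\bar w,y\mapsto i)$. Indeed, a run of $\A'$ must read $(\bar a,1)$ at exactly the position $i$ where the flag flips and $(\bar a,0)$ elsewhere, so erasing the flag yields precisely a run of $\A$ on $(\bar w,y\mapsto i)$ with the same weight sequence. Summing over accepting runs gives $\usem{\A'}(\bar w)=\biguplus_{i\in\pos w}\usem{\A}(\bar w,y\mapsto i)=(\Sum{y}\usem{\A})(\bar w)$.

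The main obstacle is aperiodicity, since ordinary projection does not preserve it; the point is that this projection has a \emph{unique} witness, which I would control via the transition monoid. For $\bar w\in\Sigma_\Variables^+$ let $\bar w_0\in\Sigma_{\Variables'}^+$ be the word with $\Sigma_\Variables$-projection $\bar w$ and $y$-component $0^{|w|}$, and let $A\in\Tr(\A)$ be its Boolean transition matrix. Because the flag is nondecreasing and both flag-layers simulate $\A$ on $y=0$ letters, every generator matrix of $\A'$ — and hence, by the product formula, the matrix of $\A'$ on any $\bar w$ — has the block-triangular form $\left(\begin{smallmatrix} A & B \\ 0 & A\end{smallmatrix}\right)$ over $\Tr(\A)$, where $B$ records the single $y=1$ crossing, i.e. $B_{r,s}=1$ iff $(\bar w,y\mapsto i)\in\Lang{\A_{r,s}}$ for some $i\in\pos w$. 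Its $n$-th power then has diagonal blocks $A^n$ and upper-right block $\bigvee_{k=0}^{n-1}A^{k}BA^{n-1-k}$. If $\A$ is aperiodic with index $m$, then $A^m=A^{m+1}$, hence $A^k=A^m$ for all $k\ge m$; splitting the disjunction according to whether $k\ge m$ and whether $n-1-k\ge m$, one checks that for every exponent $n\ge 2m+1$ it collapses to the $n$-independent expression $\left(\bigvee_{k=0}^{m-1}A^{k}BA^{m}\right)\vee\left(\bigvee_{d=0}^{m-1}A^{m}BA^{d}\right)\vee A^{m}BA^{m}$. Thus both blocks are stationary from exponent $2m+1$ on, so $\A'$ is aperiodic with index at most $2m+1$. (This is just the statement that the ``dual-number'' monoid of pairs $(A,B)$ with product $(A_1,B_1)(A_2,B_2)=(A_1A_2,\,A_1B_2\vee B_1A_2)$ over an aperiodic monoid is again aperiodic.)

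Finally, for SCC-unambiguity I would again use that the flag never decreases: there is no run from a state $(p,1)$ back to a state $(q,0)$, so every strongly connected component of $\A'$ lies entirely within $Q\times\{0\}$ or within $Q\times\{1\}$. Any run between two states of the same flag-layer stays in that layer and so, after erasing the flag, projects to a run of $\A$ reading only $y=0$ letters; in particular $(p,b)\approx(q,b)$ in $\A'$ forces $p\approx q$ in $\A$. Since $\A$ is SCC-unambiguous it is unambiguous from $p$ to $q$, so the projected run is unique, whence its lift in $\A'$ is unique as well. Therefore $\A'$ is unambiguous on each of its components, i.e. SCC-unambiguous, which completes the plan.
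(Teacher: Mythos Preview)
Your construction of $\A'$, the correctness bijection, and the SCC-unambiguity argument are identical to those in the paper. The only genuine difference is the aperiodicity proof: the paper argues by direct run surgery (in any run on $\bar w^{2m}$ the flag flip lies either outside the first $m$ copies or outside the last $m$ copies, so one can use aperiodicity of $\A$ on the corresponding all-$0$ block to insert or delete a copy of $\bar w$), obtaining index $2m$; you argue via the transition monoid, observing that the matrix of $\bar w$ in $\A'$ has the block upper-triangular ``dual number'' shape $\left(\begin{smallmatrix}A&B\\0&A\end{smallmatrix}\right)$ and that the off-diagonal block $\bigvee_{k}A^{k}BA^{n-1-k}$ stabilises once $n\geq 2m+1$. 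Both arguments are correct; yours is a bit more algebraic and reusable (it makes explicit that such single-marker projections always preserve aperiodicity), while the paper's gives the slightly sharper index $2m$.
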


\begin{proof}
  Let $\A=(Q,\Sigma_{\Variables'},\Delta,\wgt,I,F)$. We construct 
  $\A'=(Q',\Sigma_{\Variables},\Delta',\wgt',I',F')$ as follows: 
  $Q'=Q\times\{0,1\}$, $I'=I\times\{0\}$, $F'=F\times\{1\}$ and for 
  $\bar{a}\in\Sigma_{\Variables}$ the transitions and weights are given by:
  \begin{itemize}
    \item If $\delta=(p,(\bar{a},0),q)\in\Delta$ then 
    $\delta^0=((p,0),\bar{a},(q,0))\in\Delta'$, 
    $\delta^1=((p,1),\bar{a},(q,1))\in\Delta'$ and
    $\wgt'(\delta^0)=\wgt'(\delta^1)=\wgt(\delta)$.
  
    \item If $\delta=(p,(\bar{a},1),q)\in\Delta$ then 
    $\delta'=((p,0),\bar{a},(q,1))\in\Delta'$ and
    $\wgt'(\delta')=\wgt(\delta)$.
  \end{itemize}
  
  \begin{claim}\label{claim:proj-correct}
    We have $\usem{\A'}=\Sum{y}\usem{\A}$.
  \end{claim}
  
  \begin{claimproof}
    Consider a word $\bar{w}\in\Sigma_{\Variables}^+$ and let $i\in\pos{w}$.  It is
    easy to see that there is a bijection between the accepting runs $\rho$ of
    $\A$ on $(\bar{w},y\mapsto i)$ and the accepting runs $\rho'$ of $\A'$ on
    $\bar{w}$ and switching from $Q\times\{0\}$ to $Q\times\{1\}$ on the $i$th
    transition.  Moreover, this bijection preserves the weight sequences:
    $\wgt'(\rho')=\wgt(\rho)$.  We deduce easily that
    $\usem{\A'}(\bar{w})=(\Sum{y}\usem{\A})(\bar{w})$.
  \end{claimproof}

  \begin{claim}\label{claim:proj-aperiodic}
    If $\A$ is aperiodic then $\A'$ is also aperiodic.
  \end{claim}
  
  \begin{claimproof}
    Assume that $m$ is an aperiodicity index of $\A$.  We claim that $m'=2m$ is an
    aperiodicity index of $\A'$.  Let $\bar{w}\in\Sigma_{\Variables}^+$, let $k\geq
    m'$ and let $\rho'$ be a run of $\A'$ reading $\bar{w}^k$ from some state
    $(p,b)$ to some state $(r,c)$.  We distinguish two cases.  Either there is a
    prefix $\rho'_1$ of $\rho'$ reading $\bar{w}^m$ and staying in $Q\times\{0\}$,
    i.e., $\rho'_1$ goes from $(p,b)=(p,0)$ to some $(q,0)$.  We deduce that there
    is a run $\rho_1$ of $\A$ from $p$ to $q$ and reading $(\bar{w},0)^m$ (recall
    that we denote by $(\bar{w},0)$ the word over $\Sigma_{\Variables'}$ whose
    projection on $\Sigma_{\Variables}$ is $\bar{w}$ and projection on the last
    component belongs to $0^+$).  Since $m$ is an aperiodicity index of $\A$ there
    is another run $\rho_2$ of $\A$ from $p$ to $q$ reading $(\bar{w},0)^{m+1}$.
    We obtain a run $\rho'_2$ of $\A'$ from $(p,0)$ to $(q,0)$ reading
    $\bar{w}^{m+1}$.  Now, replacing the prefix $\rho'_1$ of $\rho'$ by $\rho'_2$
    we obtain a new run $\rho''$ of $\A'$ reading $\bar{w}^{k+1}$ from state
    $(p,0)=(p,b)$ to $(r,c)$.  In the second case, there is a suffix $\rho'_1$ of
    $\rho'$ reading $\bar{w}^m$ from some state $(q,1)$ to $(r,c)=(r,1)$.  We
    construct as above another run $\rho'_2$ from $(q,1)$ to $(r,1)$ reading
    $\bar{w}^{m+1}$.  Replacing the suffix $\rho'_1$ of $\rho'$ by $\rho'_2$, we 
    obtain the run $\rho''$ from $(p,b)$ to $(r,c)$ reading $\bar{w}^{k+1}$. 
    Finally, when $k>m'=2m$, a similar argument allows to construct a run 
    $\rho''$ from $(p,b)$ to $(r,c)$ reading $\bar{w}^{k-1}$.    
  \end{claimproof}
  
  \begin{claim}\label{claim:proj-unambiguous}
    If $\A$ is SCC-unambiguous then $\A'$ is also SCC-unambiguous.
  \end{claim}
  
  \begin{claimproof}
    Let $\bar{w}\in\Sigma_{\Variables}^+$ and let $(p,b)\approx'(q,c)$ be two
    states of $Q'$ which are in the same SCC of $\A'$.  Then, $b=c$ and $p\approx
    q$ are in the same SCC of $\A$.  Since $b=c$, there is a bijection between the
    runs of $\A'$ from $(p,b)$ to $(q,c)$ reading $\bar{w}$ and the runs of $\A$
    from $p$ to $q$ reading $(\bar{w},0)$.  Since $\A$ is SCC-unambiguous and
    $p\approx q$, there is at most one run of $\A$ from $p$ to $q$ reading
    $(\bar{w},0)$.  Hence, there is at most one run of $\A'$ from $(p,b)$
    to $(q,c)$ reading $\bar{w}$.
  \end{claimproof}
\end{proof}

We turn now to one of our main results: given a \stepFO formula $\Psi$, we can
construct a weighted automaton for $\Prod{x}\Psi$ which is both aperiodic and
unambiguous.

When weights are uninterpreted, a weighted automaton
$\A=(Q,\Sigma,\Delta,\wgt,I,F)$ is a letter-to-letter transducer from its input
alphabet $\Sigma$ to the output alphabet $\Weights$.  If in addition the input
automaton is unambiguous, then we have a functional transducer. In the 
following lemma, we will construct such functional transducers using the 
boolean output alphabet $\mathbb{B}=\{0,1\}$.

\begin{lemma}\label{lem:fo2transducer}
  Let $\Variables=\{y_1,\ldots,y_m\}$.  Given an \FO formula $\varphi$ with free
  variables contained in $\Variables'=\Variables\cup\{x\}$, we can construct a
  transducer $\BphiV$ from $\Sigma_\Variables$ to $\mathbb{B}$
  which is aperiodic and unambiguous and such that for all words
  $\bar{w}\in\Sigma_\Variables^+$
  \begin{enumerate}
    \item  there is a (unique)
    accepting run of $\BphiV$ on the input word $\bar{w}$ iff
    it is a valid encoding of a pair $(w,\sigma)$ where
    $w\in\Sigma^+$ and $\sigma\colon\Variables\to\pos{w}$ is a valuation, 
  
    \item and in this case, for all $1\leq i\leq|w|$, the $i$th bit of the
    output is 1 iff $w,\sigma[x\mapsto i]\models\varphi$.
  \end{enumerate}
\end{lemma}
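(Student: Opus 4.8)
The plan is to reduce everything to the deterministic automaton supplied by Theorem~\ref{thm:fo2ap} and to realize the required lookahead by a determinized \emph{backward} reading, whose aperiodicity will be the heart of the argument. First I would apply Theorem~\ref{thm:fo2ap} to $\varphi$ over $\Variables'=\Variables\cup\{x\}$, obtaining a deterministic, complete and aperiodic automaton $\A_{\varphi,\Variables'}=(Q,\Sigma_{\Variables'},\Delta,\iota,F,G)$. The key observation is that for a valid encoding $(\bar{w},x\mapsto i)$ over $\Sigma_{\Variables'}$ the unique run of $\A_{\varphi,\Variables'}$ splits into: reading the prefix of length $i-1$ with $x$-component $0$, reaching a state $p_{i-1}$; reading the $i$th letter with $x$-component $1$, reaching $p_i'=\Delta(p_{i-1},(\bar{a}_i,1))$; then reading the suffix with $x$-component $0$. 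For each $\bar{a}\in\Sigma_\Variables$ set $\theta_{\bar{a}}(s)=\Delta(s,(\bar{a},0))$, and for a position $i$ let $B_i=\{s\in Q\mid\text{reading }\bar{w}[i+1,|w|]\text{ with }x=0\text{ from }s\text{ reaches }F\}$. By Theorem~\ref{thm:fo2ap}, $w,\sigma[x\mapsto i]\models\varphi$ iff the run reaches $F$, i.e.\ iff $p_i'\in B_i$, which is exactly the bit to be output at position $i$ (part~(2)).

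The transducer $\BphiV$ will run $\A_{\varphi,\Variables'}$ forwards on the $x=0$ letters, computing the states $p_j$ deterministically, while \emph{guessing} the backward sets $B_j$. These satisfy $B_{|w|}=F$ together with the deterministic right-to-left recursion $B_{j-1}=\theta_{\bar{a}_j}^{-1}(B_j)$; I would package this recursion as a deterministic automaton $\mathcal D$ over the state set $\mathcal P(Q)$ reading $\bar{w}$ from right to left. The transducer then has states $(p_j,B_j)$, with $p_j$ updated by $\theta$, and a transition reading $\bar{a}_j$ from $(p_{j-1},B_{j-1})$ to $(p_j,B_j)$ allowed precisely when $B_{j-1}=\theta_{\bar{a}_j}^{-1}(B_j)$; the emitted bit in $\mathbb{B}$ is $1$ iff $\Delta(p_{j-1},(\bar{a}_j,1))\in B_j$. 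Initial states are $(\iota,B)$ for arbitrary $B\subseteq Q$, and acceptance requires $B_{|w|}=F$. To enforce part~(1) I would take the product with a deterministic, complete and aperiodic validity automaton $\mathcal V$ over $\Sigma_\Variables$ accepting exactly the valid encodings of $\Variables$ (for instance $\A_{\True,\Variables}$ from Theorem~\ref{thm:fo2ap}); then $\BphiV$ has an accepting run iff $\bar{w}$ is valid, and in that case $(\bar{w},x\mapsto i)$ is valid over $\Variables'$ for every $i$, so the output bits are exactly as required.

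Unambiguity is then immediate: the forward component and $\mathcal V$ are deterministic, while the backward component is the reverse of the \emph{deterministic} automaton $\mathcal D$. Once $B_{|w|}=F$ is pinned down by the acceptance condition, the recursion $B_{j-1}=\theta_{\bar{a}_j}^{-1}(B_j)$ determines $B_{|w|-1},\dots,B_0$ uniquely, so there is at most one accepting run on each input and $\BphiV$ is $1$-ambiguous, with the run existing iff $\bar{w}$ is valid.

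The main obstacle, and the step I would treat most carefully, is aperiodicity of the backward component. Since $\A_{\varphi,\Variables'}$ is aperiodic with index $m$, every element of $\Tr(\A_{\varphi,\Variables'})$ is aperiodic, so the maps $\theta_u\colon Q\to Q$ satisfy $\theta_{u^m}=\theta_{u^{m+1}}$ for all $u\in\Sigma_\Variables^+$. The action of $u$ in $\mathcal D$ is the preimage map $B\mapsto\theta_u^{-1}(B)$ on $\mathcal P(Q)$, and taking preimages is compatible with composition, so $\theta_{u^m}^{-1}=\theta_{u^{m+1}}^{-1}$; hence $\mathcal D$ is aperiodic with the same index $m$. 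Reversing an automaton transposes its transition matrices and so replaces its transition monoid by the opposite monoid, which is aperiodic iff the original is, since powers of a single element do not depend on the order of multiplication; therefore the left-to-right backward-guessing automaton is aperiodic. Finally, the forward component, being $\A_{\varphi,\Variables'}$ restricted to the $x=0$ letters, is deterministic with transition monoid a submonoid of the aperiodic $\Tr(\A_{\varphi,\Variables'})$, hence aperiodic; and $\BphiV$ is the product of these three aperiodic automata, so it is aperiodic by Lemma~\ref{lem:union-product-WA}(1). This yields all the properties claimed in Lemma~\ref{lem:fo2transducer}.
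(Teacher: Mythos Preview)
Your argument is correct, and it takes a genuinely different route from the paper's proof. The paper builds $\BphiV$ with state space $Q\times 2^{Q}\times 2^{Q}\times\{0,1\}$: at every position it non-deterministically \emph{guesses the output bit}, stores the state $\Delta(p,(\bar a,1))$ in a ``positive'' set $X$ or a ``negative'' set $Y$ according to the guess, and propagates both sets forward with the $x{=}0$ transitions; acceptance requires $X\subseteq F$ and $Y\subseteq G$. Aperiodicity is then established by a fairly delicate combinatorial analysis of how the sets $X,Y$ evolve under powers of a word, yielding index $2m+2|Q|$. Your construction is leaner: you carry a single subset $B_j=\theta_{\bar w[j+1,|w|]}^{-1}(F)$, the non-determinism is in guessing the \emph{next} $B_j$ (the component is co-deterministic rather than deterministic), and the output bit is a deterministic function of the transition. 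The payoff is the aperiodicity argument: since a run of the backward component from $B$ to $B'$ on $u$ exists iff $B=\theta_u^{-1}(B')$, aperiodicity of this component with the \emph{same} index $m$ is immediate from $\theta_{u^{m}}=\theta_{u^{m+1}}$, and the whole transducer is aperiodic by Lemma~\ref{lem:union-product-WA}. Your appeal to ``the opposite monoid'' is correct (transposition is an anti-isomorphism of Boolean matrix monoids, and aperiodicity only concerns powers of single elements), though the direct computation $B=\theta_u^{-1}(B')$ makes the point even more transparently. The paper's approach avoids the reversal viewpoint and keeps the output as the guessed bit, at the price of a larger state space and a longer aperiodicity proof; your approach trades a slightly less obvious output rule for a much shorter aperiodicity argument and a smaller automaton.
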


\begin{proof}
  Notice that $\Sigma_{\Variables'}=\Sigma_\Variables\times\mathbb{B}$ so 
  letters in $\Sigma_{\Variables'}$ are of the form $(\bar{a},0)$ or $(\bar{a},1)$ where 
  $\bar{a}\in\Sigma_{\Variables}$.  Abusing the notations, when
  $\bar{v}\in\Sigma_\Variables^*$, we write $(\bar{v},0)$ to denote the word over 
  $\Sigma_{\Variables'}$ whose projection on $\Sigma_\Variables$ is $\bar{v}$ and 
  projection on the $x$-component consists of $0$'s only.

  Consider the deterministic, complete and aperiodic automaton
  $\AphiVp=(Q,\Sigma_{\Variables'},\Delta,\iota,F,G)$ associated
  with $\varphi$ by Theorem~\ref{thm:fo2ap}.  
  We also denote by $\Delta$ the extension of the transition function to subsets
  of $Q$. So we see the deterministic and complete transition relation both as 
  a total function $\Delta\colon Q\times\Sigma_{\Variables'}\to Q$ and
  $\Delta\colon2^Q\times\Sigma_{\Variables'}\to2^Q$.

  We construct now the transducer
  $\BphiV=(Q',\Sigma_\Variables,\Delta',\wgt,I',F')$.  The set
  of states is $Q'=Q\times2^Q\times2^Q\times\mathbb{B}$.  The unique initial
  state is $\iota'=(\iota,\emptyset,\emptyset,0)$.
  The set of final states is
  $F'=(Q\times2^F\times2^G\times\mathbb{B})\setminus\{\iota'\}$.
  Then, we define the following transitions:
  \begin{itemize}
    \item $\delta=((p,X,Y,b),\bar{a},(p',X',Y',1))\in\Delta'$ is a transition with 
    weight $\wgt(\delta)=1$ if \\
    $p'=\Delta(p,(\bar{a},0))$, 
    $X'=\Delta(X,(\bar{a},0))\cup\{\Delta(p,(\bar{a},1))\}$ and
    $Y'=\Delta(Y,(\bar{a},0))$,
  
    \item $\delta=((p,X,Y,b),\bar{a},(p',X',Y',0))\in\Delta'$ is a transition with 
    weight $\wgt(\delta)=0$ if \\
    $p'=\Delta(p,(\bar{a},0))$, 
    $X'=\Delta(X,(\bar{a},0))$ and
    $Y'=\Delta(Y,(\bar{a},0))\cup\{\Delta(p,(\bar{a},1))\}$.
  \end{itemize}
  Notice that, whenever we read a new input letter $\bar{a}\in\Sigma_\Variables$,
  there is a non-deterministic choice.  In the first case above, we guess that
  formula $\varphi$ will hold on the input word when the valuation is extended by
  assigning $x$ to the current position, whereas in the second case we guess
  that $\varphi$ will not hold.  The guess corresponds to the output of the
  transition, as required by the second condition of
  Lemma~\ref{lem:fo2transducer}.  Now, we have to check that the guess is
  correct.  For this, the first component of $\BphiV$ computes
  the state $p=\Delta(\iota,(\bar{u},0))$ reached by $\AphiVp$
  after reading $(\bar{u},0)$ where $\bar{u}\in\Sigma_\Variables^*$ is the
  current prefix of the input word.  When reading the current letter
  $\bar{a}\in\Sigma_\Variables$, the transducer adds the state
  $\Delta(p,(\bar{a},1))=\Delta(\iota,(\bar{u},0)(\bar{a},1))$ either to the
  ``positive'' $X$-component or to the ``negative'' $Y$-component of its state,
  depending on its guess as explained above.  Then, the transducer continues
  reading the suffix $\bar{v}\in\Sigma_\Variables^*$ of the input word.  It
  updates the $X$ (resp.\ $Y$)-component so that it contains the state 
  $q=\Delta(\iota,(\bar{u},0)(\bar{a},1)(\bar{v},0))$ at the end of the run. 
  Now, the acceptance condition allows us to check that the guess was correct.
  \begin{enumerate}
    \item If $\bar{w}=\bar{u}\bar{a}\bar{v}$ is not a valid encoding of a pair 
    $(w,\sigma)$ with $w\in\Sigma^+$ and $\sigma\colon\Variables\to\pos{w}$ 
    then $q\notin F\cup G$ and the run of the transducer is not accepting.
    Otherwise, let $i\in\pos{w}$ be the position where the guess was made.
    
    \item
    If the guess was positive then $q$ belongs to the $X$-component and the
    accepting condition implies $q\in F$, which means by definition of
    $\AphiVp$ that $w,\sigma[x\mapsto i]\models\varphi$.
  
    \item If the guess was negative then $q$ belongs to the $Y$-component and
    the accepting condition implies $q\in G$, which means by definition of
    $\AphiVp$ that $w,\sigma[x\mapsto i]\not\models\varphi$.
  \end{enumerate}
  We continue the proof with several remarks.  
  
  First, since the automaton $\AphiVp$ is
  complete, after reading a nonempty input word $\bar{w}\in\Sigma_\Variables^+$
  the transducer cannot be back in its initial state
  $\iota'=(\iota,\emptyset,\emptyset,0)$. This is because the second and third 
  components of the state cannot both be empty. Since $\iota'\notin F'$, the 
  support of the transducer consists of nonempty words only. 
  
  Second, consider a run of the transducer on some input word
  $\bar{w}\in\Sigma_\Variables^+$ from its initial state $\iota'$ to some state 
  $(p,X,Y,b)$. As explained above, one can check that $X\cup Y\subseteq F\cup 
  G$ iff $\bar{w}$ is a valid encoding of a pair $(w,\sigma)$. Therefore, the 
  support of the transducer consists of valid encodings only.
  
  Now, consider a valid encoding $\bar{w}$ of a pair $(w,\sigma)$ and consider a
  run $\rho$ of $\BphiV$ on $\bar{w}$ from $\iota'$ to some
  state $(p,X,Y,b)$.  This run is entirely determined by the sequence of guesses
  made at every position of the input word.  As explained above, one can check
  that all guesses are correct iff $X\subseteq F$ and $Y\subseteq G$.
  Therefore, $\BphiV$ admits a unique accepting run on
  $\bar{w}$.  This shows that the support of $\BphiV$ is exactly
  the set of valid encodings, that this transducer is unambiguous, and that the
  last condition of the lemma holds, i.e., the $i$th bit of the output is 1 iff
  $w,\sigma[x\mapsto i]\models\varphi$.
  
  \medskip %
  To complete the proof, it remains to show that $\BphiV$ is 
  aperiodic. Let $m\geq1$ be an aperiodicity index of 
  $\AphiVp$. We claim that $m'=2m+2|Q|$ is an aperiodicity index 
  of $\BphiV$.
  Let $\alpha=(p,X,Y,b)$ and $\alpha'=(p',X',Y',b')$ be two states of
  $\BphiV$ and let $\bar{w}\in\Sigma_\Variables^+$ be a
  nonempty word.  
  
  Assume first that there is a run $\rho$ of $\BphiV$ from
  $\alpha$ to $\alpha'$ reading the input word $\bar{w}^{k}$ with $k\geq 2m+1$.
  We show that there is another run of $\BphiV$ from $\alpha$
  to $\alpha'$ reading the input word $\bar{w}^{k+1}$.  We split $\rho$ in three
  parts: $\rho=\rho_1\rho_2\rho_3$ where $\rho_1$ reads the prefix $\bar{w}^m$,
  $\rho_2$ reads $\bar{w}$ and $\rho_3$ reads the suffix $\bar{w}^{k-m-1}$.
  Consider the intermediary states $\alpha_i=(q_i,X_i,Y_i,b_i)$ reached
  after $\rho_i$ ($1\leq i\leq 3$):
  $\alpha\xrightarrow{\rho_1}\alpha_1\xrightarrow{\rho_2}\alpha_2\xrightarrow{\rho_3}\alpha_3=\alpha'$.
  Since $\AphiVp$ is deterministic with aperiodicity index $m$
  we obtain
  $\Delta(p,(\bar{w},0)^m)=\Delta(p,(\bar{w},0)^{m+1})=\Delta(p,(\bar{w},0)^k)$.
  Therefore, $q_1=q_2=q_3=p'$.
  
  Notice that, by definition of the transitions of $\BphiV$, a
  run is entirely determined by its starting state, its input word, and the
  sequence of choices which is indicated in the fourth component of the states.
  Let $\rho'_2$ be the run starting from $\alpha_2$, reading $\bar{w}$ and
  following the same sequence of choices as $\rho_2$.  Let
  $\alpha'_2=(q'_2,X'_2,Y'_2,b'_2)$ be the state reached after $\rho'_2$.  
  Let also $\rho'_3$ be the run starting from $\alpha'_2$, reading
  $\bar{w}^{k-m-1}$ and following the same sequence of choices as $\rho_3$.  Let
  $\alpha'_3=(q'_3,X'_3,Y'_3,b'_3)$ be the state reached after $\rho'_3$.  Thus,
  we obtain a run
  $\rho'=\alpha\xrightarrow{\rho_1}\alpha_1\xrightarrow{\rho_2}\alpha_2
  \xrightarrow{\rho'_2}\alpha'_2\xrightarrow{\rho'_3}\alpha'_3$
  reading the input word $\bar{w}^{k+1}$.  It remains to show that
  $\alpha'_3=\alpha_3$.  As above, we have
  $q'_3=\Delta(p,(\bar{w},0)^{k+1})=\Delta(p,(\bar{w},0)^k)=q_3$.  Also, $b'_3$
  stores the last choice of $\rho'_3$, which is the same as the last choice of
  $\rho_3$ stored in $b_3$ and we get $b'_3=b_3$.  It remains to show that
  $X'_3=X_3$ and $Y'_3=Y_3$.  
  To this end, we introduce yet another variant of the runs $\rho_2$ and
  $\rho_3$.  Let $\rho''_2$ be the run starting from
  $(p',\emptyset,\emptyset,0)$, reading $\bar{w}$ and following the same
  sequence of choices as $\rho_2$.  Let $\alpha''_2=(q''_2,X''_2,Y''_2,b''_2)$
  be the state reached after $\rho''_2$.  It is easy to see that $q''_2=q_2=p'$
  and $b''_2=b_2$.  Moreover, we have 
  \begin{align*}
    X_2 &= X''_2\cup\Delta(X_1,(\bar{w},0)) & 
    X'_2 &= X''_2\cup\Delta(X_2,(\bar{w},0)) 
    \\
    Y_2 &= Y''_2\cup\Delta(Y_1,(\bar{w},0)) &
    Y'_2 &= Y''_2\cup\Delta(Y_2,(\bar{w},0)) \,.
  \end{align*}
  Similarly, let $\rho''_3$ be the run starting from
  $(p',\emptyset,\emptyset,0)$, reading $\bar{w}^{k-m-1}$ and following the same
  sequence of choices as $\rho_3$.  Let $\alpha''_3=(p',X''_3,Y''_3,b_3)$
  be the state reached after $\rho''_3$.  
  We have 
  \begin{align*}
    X_3 &= X''_3\cup\Delta(X_2,(\bar{w},0)^{k-m-1}) & 
    X'_3 &= X''_3\cup\Delta(X'_2,(\bar{w},0)^{k-m-1}) 
    \\
    Y_3 &= Y''_3\cup\Delta(Y_2,(\bar{w},0)^{k-m-1}) &
    Y'_3 &= Y''_3\cup\Delta(Y'_2,(\bar{w},0)^{k-m-1}) \,.
  \end{align*}
  Notice that $k-m-1\geq m$, hence we get
  $\Delta(X_2,(\bar{w},0)^{k-m-1})=\Delta(X_2,(\bar{w},0)^{k-m})$ from the
  aperiodicity of $\AphiVp$.  Finally, using $X''_2\subseteq
  X_2$, we obtain
  $\Delta(X'_2,(\bar{w},0)^{k-m-1})=\Delta(X_2,(\bar{w},0)^{k-m-1})$ and
  $X'_3=X_3$.  Similarly, we prove that $Y'_3=Y_3$.
  
  \medskip%
  Conversely, we assume that there is a run $\rho$ of $\BphiV$ from $\alpha$ to
  $\alpha'$ reading the input word $\bar{w}^{k}$ with $k>m'=2m+2|Q|$.  We show that
  there is another run $\rho'$ of $\BphiV$ from $\alpha$ to $\alpha'$ reading the input
  word $\bar{w}^{k-1}$.  We split $\rho$ in $2|Q|+3$ parts:
  $\rho=\rho_0\rho_1\cdots\rho_{2|Q|+1}\rho_{2|Q|+2}$ where $\rho_0$ reads the
  prefix $\bar{w}^{k-2|Q|-m-1}$, each $\rho_i$ with $1\leq i\leq 2|Q|+1$ reads
  $\bar{w}$, and $\rho_{2|Q|+2}$ reads the suffix $\bar{w}^{m}$.
  Consider the intermediary states $\alpha_i=(q_i,X_i,Y_i,b_i)$ reached after
  $\rho_i$ ($0\leq i\leq 2|Q|+2$). We have
  \[
  \alpha\xrightarrow{\rho_0}\alpha_0\xrightarrow{\rho_1}\alpha_1\cdots
  \alpha_{2|Q|+1}\xrightarrow{\rho_{2|Q|+2}}\alpha_{2|Q|+2}=\alpha' \,.
  \]
  Since $k-2|Q|-m-1\geq m$ and $\AphiVp$ is deterministic with aperiodicity
  index $m$, we deduce that $q_0=q_1=\cdots=q_{2|Q|+1}=q_{2|Q|+2}=p'$.  As in
  the previous part of the aperiodicity proof, for each $1\leq i\leq2|Q|+2$, we
  consider the run $\rho'_i$ starting form $(p',\emptyset,\emptyset,0)$, reading
  the same input word as $\rho_i$ and making the same sequence of choices as
  $\rho_i$.  Let $\alpha'_i=(p',X'_i,Y'_i,b_i)$ be the state reached after
  $\rho'_i$ ($1\leq i\leq 2|Q|+2$).  We have, for all $1\leq i\leq2|Q|+1$:
  \begin{align*}
    X_i &= X'_i\cup\Delta(X_{i-1},(\bar{w},0)) & 
    X_{2|Q|+2} &= X'_{2|Q|+2}\cup\Delta(X_{2|Q|+1},(\bar{w},0)^m) 
    \\
    Y_i &= Y'_i\cup\Delta(Y_{i-1},(\bar{w},0)) &
    Y_{2|Q|+2} &= Y'_{2|Q|+2}\cup\Delta(Y_{2|Q|+1},(\bar{w},0)^m) \,.
  \end{align*}
  The states in $X'=X_{2|Q|+2}$ and $Y'=Y_{2|Q|+2}$ originate from the initial
  sets $X_0$ and $Y_0$ and from the sets $X'_i$ and $Y'_i$ created by the
  subruns $\rho_i$ ($1\leq i\leq 2|Q|+2$).  Intuitively, there is at least one
  index $1\leq i\leq 2|Q|+1$ such that the contribution of $\rho_i$ is subsumed
  by other subruns (formal proof below).  Removing the subrun $\rho_i$
  yields the desired run $\rho'$ of $\BphiV$ from $\alpha$ to $\alpha'$ reading
  the input word $\bar{w}^{k-1}$ (formal proof below).
  
  For $0\leq i\leq 2|Q|+1$, we let $k_i=2|Q|+1-i+m$.  For $1\leq i\leq 2|Q|+2$,
  we define by descending induction on $i$ the contributions $X''_i$ and $Y''_i$
  to $X'=X_{2|Q|+2}$ and $Y'=Y_{2|Q|+2}$ which originate from subruns $\rho_j$
  with $j\geq i$:
  \begin{align*}
    X''_{2|Q|+2} & =X'_{2|Q|+2} & 
    X''_i & = X''_{i+1} \cup \Delta(X'_i,(\bar{w},0)^{k_i}) 
    \\
    Y''_{2|Q|+2} & =Y'_{2|Q|+2} & 
    Y''_i & = Y''_{i+1} \cup \Delta(Y'_i,(\bar{w},0)^{k_i}) \,.
  \end{align*}
  We deduce easily that for all $1\leq i\leq 2|Q|+2$ we have
  \begin{align*}
    X_{2|Q|+2} & = X''_{i}\cup\Delta(X_{i-1},(\bar{w},0)^{k_{i-1}})
    &
    Y_{2|Q|+2} & = Y''_{i}\cup\Delta(Y_{i-1},(\bar{w},0)^{k_{i-1}}) \,.
  \end{align*}
  Let $1\leq i\leq 2|Q|+1$ be such that $X''_i=X''_{i+1}$ and $Y''_i=Y''_{i+1}$.
  Using the monotonicity of the sequences, it is easy to see that such an index $i$
  must exist. We show that we can remove the subrun $\rho_i$. Let $\rho''$ be 
  the run from $\alpha_{i-1}$ (and not $\alpha_i$) which reads $\bar{w}^{k_i}$ 
  and makes the same sequence of choices as $\rho_{i+1}\cdots\rho_{2|Q|+2}$.
  Let $\alpha''=(q'',X'',Y'',b'')$ be the state reached after $\rho''$. It is 
  easy to see that $q''=q_{2|Q|+2}=p'$ and $b''=b_{2|Q|+2}=b'$. We show that 
  $X''=X_{2|Q|+2}=X'$. Since $\rho''$ makes the same sequence of choices as 
  $\rho_{i+1}\cdots\rho_{2|Q|+2}$, we see that the contribution to $X''$ coming 
  from $\rho''$ is exactly $X''_{i+1}$. Therefore, 
  \[
  X''=X''_{i+1}\cup\Delta(X_{i-1},(\bar{w},0)^{k_i})
  =X''_{i}\cup\Delta(X_{i-1},(\bar{w},0)^{k_{i-1}})
  =X_{2|Q|+2}=X' 
  \]
  where the second equality follows from the hypothesis $X''_i=X''_{i+1}$ and
  the aperiodicity of \AphiVp with index $m$ since $k_{i-1}=k_i+1>m$.
  Similarly, we can prove that $Y''=Y'$ and we obtain $\alpha''=\alpha'$.
  Therefore, $\rho'=\rho_0\cdots\rho_{i-1}\rho''$ is the desired run of $\BphiV$
  from $\alpha$ to $\alpha'$ reading the input word $\bar{w}^{k-1}$. This 
  concludes the proof of aperiodicity of \BphiV with index $m'=2|Q|+2m$.
\end{proof}

\begin{theorem}\label{thm:stepwFO2wA}
  Let $\Variables=\{y_1,\ldots,y_m\}$.  Given a \stepFO formula $\Psi$ with free
  variables contained in $\Variables'=\Variables\cup\{x\}$, we can construct a
  weighted automaton $\ApsiV$ over $\Sigma_\Variables$ which is aperiodic and
  unambiguous and which is equivalent to $\Prod{x}\Psi$, i.e., such that
  $\usem{\ApsiV}(\bar{w})=\usem{\Prod{x}\Psi}_\Variables(\bar{w})$ for all words
  $\bar{w}\in\Sigma_\Variables^+$.
\end{theorem}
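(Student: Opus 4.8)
The plan is to realise $\Prod{x}\Psi$ as a single product of the transducers supplied by Lemma~\ref{lem:fo2transducer}, one for each boolean guard occurring in the decision tree $\Psi$. First I would collect the finite set $\varphi_1,\ldots,\varphi_n$ of \FO-formulas that appear as conditions in the if-then-else structure of $\Psi$; each has free variables contained in $\Variables'=\Variables\cup\{x\}$. Applying Lemma~\ref{lem:fo2transducer} to each $\varphi_j$ yields an aperiodic and unambiguous transducer $\mathcal{B}_{\varphi_j,\Variables}$ over $\Sigma_\Variables$ whose support is exactly the set of valid encodings and which, on a valid encoding $(w,\sigma)$, outputs at position $i$ the bit $1$ iff $w,\sigma[x\mapsto i]\models\varphi_j$.

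Next I would form the product $\mathcal{B}=\mathcal{B}_{\varphi_1,\Variables}\times\cdots\times\mathcal{B}_{\varphi_n,\Variables}$ of the underlying automata, with componentwise initial and final sets. By Lemma~\ref{lem:union-product-WA}(1) this product is aperiodic. Moreover, since every accepting run of $\mathcal{B}$ projects to an accepting run in each factor, and each factor has support the valid encodings with a unique accepting run on each of them, the product has support the valid encodings and admits a \emph{unique} accepting run on every valid encoding; hence $\mathcal{B}$ is unambiguous. On a valid encoding $(w,\sigma)$, reading the $i$th letter along this unique run, the factor transitions determine a bit vector $\bar{b}=(b_1,\ldots,b_n)$ with $b_j=1$ iff $w,\sigma[x\mapsto i]\models\varphi_j$ (each bit is recoverable from the factor transition, e.g.\ from the last component of its target state).

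It then remains to install the weights. I would define a map $\mathsf{eval}_\Psi\colon\{0,1\}^n\to\Weights$ by traversing the decision tree $\Psi$: at a node $\Ifthenelse{\varphi_j}{\Psi_1}{\Psi_2}$ descend into $\Psi_1$ if $b_j=1$ and into $\Psi_2$ otherwise, and at a leaf $r$ return $r$. Relabelling each product transition by the weight $\mathsf{eval}_\Psi(\bar{b})$ determined by its factors turns $\mathcal{B}$ into the desired $\ApsiV$; since aperiodicity and unambiguity depend only on the underlying automaton, they survive this relabelling. A straightforward induction on $\Psi$ matching the definition of $\stepsem{\cdot}$ shows that whenever the bits are correct one has $\mathsf{eval}_\Psi(\bar{b})=\stepsem{\Psi}(w,\sigma[x\mapsto i])$. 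Thus on a valid encoding the unique accepting run of $\ApsiV$ has weight sequence $r_1r_2\cdots r_{|w|}$ with $r_i=\stepsem{\Psi}(w,\sigma[x\mapsto i])$, giving $\usem{\ApsiV}(w,\sigma)=\multiset{r_1\cdots r_{|w|}}=\usem{\Prod{x}\Psi}_\Variables(w,\sigma)$, while on an invalid encoding $\bar{w}$ at least one factor rejects, so $\ApsiV$ has no accepting run and $\usem{\ApsiV}(\bar{w})=\emptymultiset=\usem{\Prod{x}\Psi}_\Variables(\bar{w})$.

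The genuinely difficult step---building an aperiodic \emph{and} unambiguous transducer that decides an \FO-guard simultaneously at every position---is already encapsulated in Lemma~\ref{lem:fo2transducer}, and preservation of aperiodicity under products is Lemma~\ref{lem:union-product-WA}. Consequently I expect no deep obstacle here: the only real care needed is the bookkeeping of carrying the per-position bit vectors through the product cleanly enough to define $\mathsf{eval}_\Psi$ and to verify, by induction on the structure of $\Psi$, that the relabelled weights faithfully reproduce $\stepsem{\Psi}$.
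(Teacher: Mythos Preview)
Your proposal is correct and follows essentially the same route as the paper: collect the \FO guards $\varphi_1,\ldots,\varphi_k$ of $\Psi$, take the product of the transducers from Lemma~\ref{lem:fo2transducer}, and relabel each product transition by the weight obtained from evaluating the decision tree $\Psi$ on the bit vector carried by the factors. The only point you gloss over is the degenerate case $\Psi=r$ with no guards at all (so $n=0$ and the empty product is ill-defined); the paper dispatches this by first rewriting $r$ as $\Ifthenelse{\True}{r}{r}$, which you should mention for completeness.
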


\begin{proof}
  In case $\Psi=r$ is an atomic \stepFO formula, we replace it with the
  equivalent $\Ifthenelse{\True}{r}r$ \stepFO formula.  Let
  $\varphi_1,\ldots,\varphi_k$ be the \FO formulas occurring in $\Psi$.  By the
  above remark, we have $k\geq1$.
  Consider the aperiodic and unambiguous transducers $\B_1,\ldots,\B_k$ given by
  Lemma~\ref{lem:fo2transducer}.  For $1\leq i\leq k$, we let
  $\B_i=(Q_i,\Sigma_{\Variables},\Delta_i,\wgt_i,I_i,F_i)$.  The weighted
  automaton $\ApsiV=(Q,\Sigma_{\Variables},\Delta,\wgt,I,F)$ is essentially a
  cartesian product of the transducers $\B_i$.  More precisely, we let
  $Q=\prod_{i=1}^{k}Q_i$, $I=\prod_{i=1}^{k}I_i$, $F=\prod_{i=1}^{k}F_i$, and
  \[
  \Delta=\{((p_1,\ldots,p_k),\bar{a},(q_1,\ldots,q_k)) \mid
  (p_i,\bar{a},q_i)\in\Delta_i \text{ for all } 1\leq i\leq k\} \,.
  \]
  Since the transducers $\B_i$ are all aperiodic and unambiguous, we deduce by
  Lemma~\ref{lem:union-product-WA} that $\ApsiV$ is also aperiodic and
  unambiguous.  It remains to define the weight function $\wgt$.
  
  Given a bit vector $\bar{b}=(b_1,\ldots,b_k)\in\mathbb{B}^k$ of size $k$, we
  define $\Psi(\bar{b})$ as the weight from $\Weights$ resulting from the
  \stepFO formula $\Psi$ when the \FO conditions $\varphi_1,\ldots,\varphi_k$
  evaluate to $\bar{b}$.  Formally, the definition is by structural induction on
  the \stepFO formula:
  \begin{align*}
    r(\bar{b}) & = r & 
    (\Ifthenelse{\varphi_i}{\Psi_1}{\Psi_2})(\bar{b}) & =
    \begin{cases}
      \Psi_1(\bar{b}) & \text{if } b_i=1 \\
      \Psi_2(\bar{b}) & \text{if } b_i=0 \,.
    \end{cases}
  \end{align*}
  Consider a transition
  $\delta=((p_1,\ldots,p_k),\bar{a},(q_1,\ldots,q_k))\in\Delta$ and let
  $\delta_i=(p_i,\bar{a},q_i)$ for $1\leq i\leq k$.  Let
  $\bar{b}=(b_1,\ldots,b_k)\in\mathbb{B}^k$ where
  $b_i=\wgt(\delta_i)\in\mathbb{B}$ for all $1\leq i\leq k$. We define
  $\wgt(\delta)=\Psi(\bar{b})$.
  
  Let $\bar{w}\in\Sigma_\Variables^+$.  If $\bar{w}$ is not a valid encoding of
  a pair $(w,\sigma)$ then
  $\usem{\Prod{x}\Psi}_\Variables(\bar{w})=\emptymultiset$ by definition.
  Moreover, $\usem{\ApsiV}(\bar{w})=\emptymultiset$ since by
  Lemma~\ref{lem:fo2transducer}, $\bar{w}$ is not in the support of $\B_1$.  We
  assume below that $\bar{w}$ is a valid encoding of a pair $(w,\sigma)$ where
  $w\in\Sigma^+$ and $\sigma\colon\Variables\to\pos{w}$ is a valuation.  Then,
  each transducer $\B_i$ admits a unique accepting run $\rho_i$ reading the
  input word $\bar{w}$.  These result in the unique accepting run $\rho$ of
  \ApsiV reading $\bar{w}$.  The projections of $\rho$ on $\B_1,\ldots,\B_k$ are
  $\rho_1,\ldots,\rho_k$.  Let $j\in\pos{w}=\{1,\ldots,|w|\}$ be a position in
  $\bar{w}$ and let $\delta^{j}$ be the $j$-th transition of $\rho$.  For $1\leq
  i\leq k$, we denote by $\delta_i^{j}$ the projection of $\delta^{j}$ on $\B_i$ and we
  let $b_i^{j}=\wgt(\delta_i^{j})$.  By Lemma~\ref{lem:fo2transducer}, we get
  $b_i^{j}=1$ iff $w,\sigma[x\mapsto j]\models\varphi_i$.
  Finally, let $\bar{b}^{j}=(b_1^{j},\ldots,b_k^{j})$. From the above, we deduce that 
  $\stepsem{\Psi}_{\Variables\cup\{x\}}(w,\sigma[x\mapsto 
  j])=\stepmultiset{\Psi(\bar{b}^{j})}=\stepmultiset{\wgt(\delta^{j})}$.
  Putting things together, we have 
  \begin{align*}
    \usem{\ApsiV}(w,\sigma)=\multiset{\wgt(\rho)}
    &= \multiset{\wgt(\delta^{1})\cdots\wgt(\delta^{|w|}}
    = \usem{\Prod{x}\Psi}_\Variables(w,\sigma) \,.
    \qedhere
  \end{align*}
\end{proof}

\begin{theorem}\label{thm:tmp12}
  Let $\Phi$ be a \wFO sentence. We can construct an aperiodic SCC-unambiguous
  weighted automaton $\A$ such that $\usem\A=\usem\Phi$. Moreover, if $\Phi$ does
  not contain the sum operations $+$ and $\Sum{x}$, then $\A$ can be chosen to be
  unambiguous. If $\Phi$ does not contain the sum operation $\Sum{x}$, we can
  construct $\A$ as a finite union of unambiguous weighted automata.
\end{theorem}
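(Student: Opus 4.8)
The plan is to prove, by structural induction on the grammar \eqref{eq:wfo}, a statement slightly stronger than the one stated, so that free variables can be carried through the induction: \emph{for every \wFO formula $\Phi$ whose free variables are contained in a finite set $\Variables$, we can construct a weighted automaton $\A$ over $\Sigma_\Variables$ with $\usem{\A}=\usem{\Phi}_\Variables$ that is aperiodic and SCC-unambiguous; moreover $\A$ is unambiguous whenever $\Phi$ contains neither $+$ nor $\Sum{x}$, and $\A$ is a finite disjoint union of unambiguous weighted automata whenever $\Phi$ contains no $\Sum{x}$.} Specializing to a sentence (i.e.\ $\Variables=\emptyset$) gives the theorem. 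Each production is handled by exactly one of the constructions already established, and the point is only to check that it lands in the right alphabet and preserves the right invariant.

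For the base case $\Phi=\zero$, any automaton over $\Sigma_\Variables$ with no final states has constant semantics $\emptymultiset=\usem{\zero}_\Variables$ and is trivially aperiodic and unambiguous. For $\Phi=\Prod{x}\Psi$, whose free variables lie in $\Variables\cup\{x\}$, Theorem~\ref{thm:stepwFO2wA} directly produces $\ApsiV$ over $\Sigma_\Variables$, which is aperiodic, unambiguous and equivalent to $\Prod{x}\Psi$. For $\Phi=\Sum{x}\Phi'$, the induction hypothesis applied to $\Phi'$ with variable set $\Variables\cup\{x\}$ yields an aperiodic SCC-unambiguous automaton $\A$ over $\Sigma_{\Variables\cup\{x\}}$; Lemma~\ref{lem:projection} (with $y=x$) then gives $\A'$ over $\Sigma_\Variables$ with $\usem{\A'}=\Sum{x}\usem{\A}=\usem{\Phi}_\Variables$, preserving aperiodicity and SCC-unambiguity. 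Since projection introduces non-determinism, this case only asserts SCC-unambiguity, as allowed.

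For $\Phi=\Phi_1+\Phi_2$, take the disjoint union $\A_1\uplus\A_2$ of the two automata supplied by the induction hypothesis; by Lemma~\ref{lem:union-product-WA} it is aperiodic and SCC-unambiguous, and if the $\A_i$ are finite disjoint unions of unambiguous automata then so is $\A_1\uplus\A_2$. (Disjoint union does not preserve unambiguity, which is why the sum cases only claim SCC-unambiguity or a finite union.) For $\Phi=\Ifthenelse{\varphi}{\Phi_1}{\Phi_2}$, with $\varphi$ an \FO formula over $\Variables$, Lemma~\ref{lem:tmp8} combines $\A_1,\A_2$ with the deterministic aperiodic automaton $\A_{\varphi,\Variables}$ of Theorem~\ref{thm:fo2ap} into an automaton realising the if-then-else semantics, preserving aperiodicity, SCC-unambiguity and unambiguity; this settles both the general and the unambiguous regimes directly. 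The only delicate point is the finite-union invariant: if $\A_1=\biguplus_j\B_j$ and $\A_2=\biguplus_l\mathcal{C}_l$ are finite unions of unambiguous automata, I would instead form the disjoint union of the automata $\Ifthenelse{\varphi}{\B_j}{\zero}$ and $\Ifthenelse{\varphi}{\zero}{\mathcal{C}_l}$ obtained from Lemma~\ref{lem:tmp8} (the zero automaton being unambiguous, each summand is unambiguous); its semantics equals $\usem{\A_1}$ on valid words satisfying $\varphi$ and $\usem{\A_2}$ on valid words satisfying $\neg\varphi$, as required.

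All the heavy lifting has been done in the preceding lemmas---above all in Theorem~\ref{thm:stepwFO2wA}, which treats $\Prod{x}$ while preserving aperiodicity---so this final proof is essentially an assembly. The main obstacle is therefore bookkeeping rather than a new idea: one must strengthen the claim to formulas with free variables so that the $\Prod{x}$ and $\Sum{x}$ cases operate over the correct extended alphabet $\Sigma_\Variables$, and must maintain the finite-disjoint-union-of-unambiguous invariant across the if-then-else case via the padding-with-$\zero$ argument above.
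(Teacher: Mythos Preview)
Your proposal is correct and follows essentially the same route as the paper: structural induction on the \wFO grammar, dispatching $\zero$ trivially, $\Prod{x}\Psi$ via Theorem~\ref{thm:stepwFO2wA}, $\Ifthenelse{\varphi}{\Phi_1}{\Phi_2}$ via Lemma~\ref{lem:tmp8}, $\Phi_1+\Phi_2$ via Lemma~\ref{lem:union-product-WA}, and $\Sum{x}\Phi$ via Lemma~\ref{lem:projection}. Your explicit strengthening of the induction hypothesis to formulas with free variables over $\Sigma_\Variables$ is exactly what the paper leaves implicit, and your padding-with-$\zero$ argument for propagating the finite-disjoint-union-of-unambiguous invariant through the if-then-else case is precisely the content of the paper's post-proof remark (rewriting $\Ifthenelse{\varphi}{\Phi_1}{\Phi_2}$ as $\Ifthenelse{\varphi}{\Phi_1}{\zero}+\Ifthenelse{\neg\varphi}{\Phi_2}{\zero}$ and distributing $+$ outward).
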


\begin{proof}
  We proceed by structural induction on $\Phi$.
  For $\Phi=\zero$ this is trivial. For $\Phi=\Prod{x}\Psi$ with a \stepFO formula
  $\Psi$, we obtain an aperiodic unambiguous weighted automaton $\A$ by
  Theorem~\ref{thm:stepwFO2wA}.
  For formulas $\Ifthenelse\varphi{\Phi_1}{\Phi_2}$, $\Phi_1+\Phi_2$ and $\Sum{x}\Phi$, we apply
  Lemmas~\ref{lem:tmp8}, \ref{lem:union-product-WA} and \ref{lem:projection},
  respectively.
\end{proof}

In the proof of Theorem~\ref{thm:tmp12}, we may obtain the final statement also as
a consequence of the preceding one by the following observations which could be of
independent interest.
Let $\varphi$ be an \FO-formula and $\Phi_1$, $\Phi_2$ two \wFO formulas, each
with free variables contained in $\Variables$. Then,
\begin{align*}
  &\usem{\Ifthenelse{\varphi}{\Phi_1}{\Phi_2}}_\Variables =
      \usem{\Ifthenelse{\varphi}{\Phi_1}{\zero}+
        \lnot\Ifthenelse{\varphi}{\Phi_2}{\zero}}_\Variables\;,\\
  &\usem{\Ifthenelse{\varphi}{\Phi_1+\Phi_2}\zero}_\Variables =
      \usem{\Ifthenelse{\varphi}{\Phi_1}{\zero}+
        \Ifthenelse{\varphi}{\Phi_2}{\zero}}_\Variables\;.
\end{align*}
Hence, given a \wFO sentence $\Phi$ not containing the sum operation $\Sum{x}$,
we can rewrite $\Phi$ as a sum of $\zero$, $\Prod{x}\Psi$ and if-then-else sentences
of the form $\Ifthenelse\varphi{\Phi'}\zero$ where $\Phi'$ does not contain the sum
operations $+$ or $\Sum{x}$.

\begin{proof}[Proof of Thm~\ref{thm:main}]
  Immediate by Theorem~\ref{thm:main2}, Theorem~\ref{thm:finitely-ambiguous},
  Corollary~\ref{cor:aperiodic-unambiguous-to-logic} and 
  Theorem~\ref{thm:tmp12}.
\end{proof}

%%%%%%%%%%%%%%%%%%%%%%%%%%%
\section{Examples}\label{sec:examples}
\begin{gpicture}[name=A3,ignore]%
  \node[Nmarks=if,iangle=-90,fangle=90](1)(0,0){}
  \node[Nmarks=if,iangle=-90,fangle=90](2)(20,0){}
  \drawloop[loopangle=180,ELdist=-1](1){$\begin{array}{c} a \mid 1 \\ b \mid 0 \\ c \mid 0 \end{array}$}
  \drawloop[loopangle=0,ELdist=-1](2){$\begin{array}{c} a \mid 0 \\ b \mid 1 \\ c \mid 0 \end{array}$}
  \drawedge[curvedepth=2](1,2){$c \mid 0$}
  \drawedge[curvedepth=2](2,1){$c \mid 0$}
\end{gpicture}%
\begin{gpicture}[name=A,ignore]
  \node[Nmarks=i](1)(40,0){}
  \node[Nmarks=f](2)(60,0){}
  \drawloop(1){$a \mid 1$}
  \drawedge[curvedepth=2](1,2){$a \mid 1$}
  \drawedge[curvedepth=2](2,1){$a \mid 1$}
\end{gpicture}
\begin{gpicture}[name=A5,ignore]
  \node[Nmarks=if,iangle=-90,fangle=90](1)(0,0){}
  \node[Nmarks=f,iangle=-90,fangle=90](2)(20,0){}
  \drawloop[loopangle=180,ELdist=-1](1){$\begin{array}{c} a \mid 1 \\ b \mid 0 \end{array}$}
  \drawloop[loopangle=0,ELdist=-1](2){$\begin{array}{c} a \mid 0 \\ b \mid 1 \end{array}$}
  \drawedge(1,2){$\begin{array}{c} a \mid 1 \\ b \mid 1 \end{array}$}
\end{gpicture}
\begin{gpicture}[name=A12,ignore]
  \node[Nmarks=i](1)(0,0){}
  \node[Nmarks=f](2)(20,0){}
  \drawloop(1){$a\mid 1$}
  \drawloop(2){$a\mid 1$}
  \drawedge(1,2){$a\mid 1$}
\end{gpicture}
\begin{gpicture}[name=A34,ignore]
  \node[Nmarks=if](1)(0,0){}
  \node[Nmarks=if](2)(30,0){}
  \drawloop(1){$a\mid 2$}
  \drawloop[loopangle=-90](1){$b\mid 1$}
  \drawloop(2){$a\mid 1$}
  \drawloop[loopangle=-90](2){$b\mid 3$}
\end{gpicture}
\begin{gpicture}[name=A35,ignore]
  \node[Nmarks=if](1)(0,0){}
  \node[Nmarks=if](2)(30,0){}
  \drawloop(1){$a\mid 1$}
  \drawloop[loopangle=-90](1){$b\mid 0$}
  \drawloop(2){$a\mid 0$}
  \drawloop[loopangle=-90](2){$b\mid 1$}
\end{gpicture}
In this section, we give examples separating the classes of finitely,
polynomially and exponentially ambiguous aperiodic weighted automata for several
weight structures including the semiring of natural numbers $\N_{+,\times}$, the
max-plus semiring $\N_{\max,+}$ and the min-plus semiring $\N_{\min,+}$.

\begin{example}\label{ex:7.2}
  Let $\Sigma$ be any alphabet, $\Weights$ a set of weights, and
  $\A=(Q,\Sigma,\Delta,\wgt,I,F)$ any (possibly aperiodic) weighted automaton
  over $\Sigma$ and $\Weights$ which is not polynomially ambiguous.
  \begin{enumerate}
    \item Since the size of the multisets $\usem\A(w)$ is not polynomially
    bounded with respect to $|w|$, there can be no polynomially ambiguous
    weighted automaton $\B$ with $\usem\A=\usem\B$.
    
    \item Assume that $|\Delta|\leq|\Weights|$ and all transitions of $\A$ have
    different weights, and consider $\A$ as a weighted automaton over the semiring
    $(\mathcal{P}_\text{fin}(\Weights^*),\cup,\cdot,\emptyset,\{\varepsilon\})$,
    or, equivalently, as a non-deterministic transducer outputting the weights
    of the transitions.  Again, there can be no polynomially ambiguous weighted
    automaton $\B$ with $\sem\A=\sem\B$.
    
    \item For each $q\in Q$ and $a\in\Sigma$, the
    transitions $\delta=(q,a,p)\in\Delta$ ($p\in Q$) are enumerated as
    $\delta_1,\dots,\delta_m$ where $m$ is the degree of non-determinism for
    $q\in Q$ and $a\in \Sigma$.  Then put $\wgt(\delta_i)=i$, and let $\Weights$
    comprise all these numbers.  In comparison to 2., $|\Weights|$ might be
    considerably smaller than $|\Delta|$.  But, again, over the semiring
    $(\mathcal{P}_\text{fin}(\Weights^*),\cup,\cdot,\emptyset,\{\varepsilon\})$
    there is no polynomially ambiguous weighted automaton equivalent to $\A$.
    \qed
  \end{enumerate}
\end{example}

This shows that for suitable idempotent semirings and also for non-deterministic
transducers, there are aperiodic weighted automata for which there is no
equivalent polynomially ambiguous weighted automaton.   
Next we show that this is also the case for the semiring of natural numbers
$\N_{+,\times}$, the max-plus semiring $\N_{\max,+}$ and the min-plus semiring
$\N_{\min,+}$.

\begin{example}\label{ex:plustimes-exponentially-ambiguous}
  Let $\Sigma=\{a\}$ and consider the automaton $\A$ below over the 
  semiring $\N_{+,\times}$ of natural numbers.
  \begin{center}
    \gusepicture{A}
  \end{center}
  Note that the weighted automaton computes the sequence $(F_n)_{n\geq0}$
  of Fibonacci numbers $0,1,1,2,3,5,\cdots$.  More precisely, for any $n\in\N$,
  we have $\sem\A(a^n)=F_n$.

  \smallskip %
  Clearly, $\A$ is exponentially ambiguous and aperiodic with index 2.
  In \cite{Mazowiecki_2019}, it was shown that the Fibonacci numbers cannot be
  computed by copyless cost-register automata.
  Here, we prove that there is no aperiodic polynomially ambiguous weighted
  automaton $\B=(Q,\Sigma,\Delta,\wgt,I,F)$ with $\sem\A=\sem\B$.
  Suppose there was such a trimmed automaton $\B$.  
  
  First, consider any loop $q\xrightarrow{a^k}q$ with $k\geq 1$ of $\B$.
  Since $\B$ is aperiodic and SCC-unambiguous, hence unambiguous on the
  component containing $q$, as in
  Example~\ref{ex:maxplus-exponentially-ambiguous}, it follows that
  $(q,a,q)\in\Delta$.  Next, we
  claim $\alpha=\wgt(q,a,q)=1$. Indeed, suppose that
  $\alpha\geq 2$. Choose $m,\ell\geq 2$ minimal such that there is a path
  reading $a^m$ from $I$ to $q$ and a path for $a^\ell$ from $q$ to $F$.
  Considering, for $n\geq m+\ell$, the path $\rho_n\colon
  I\xrightarrow{a^m}q\xrightarrow{a^{n-m-\ell}}q\xrightarrow{a^\ell}F$, we
  obtain $\sem\B(a^n)\geq\wgt(\rho_n)\geq2^{n-m-\ell}$.  Since $F_n=o(2^{n})$,
  for $n$ large enough, we get $F_n<2^{-m-\ell}\cdot 2^{n}$, a contradition.

  So, in $\B$ all loops have weight 1. Hence there exists $K\in\N$ such that
  $\wgt(\rho)\leq K$ for all paths $\rho$ in $\B$. Consequently, if $\B$ is
  polynomially ambiguous of degree $d$, we have $\sem\B(a^n)\leq O(n^d)$
  for $n\in\N$. This yields a contradiction since 
  $F_n\sim\frac{1}{\sqrt{5}}\Big(\frac{1+\sqrt{5}}{2}\Big)^{n}$ grows exponentially.
  \qed
\end{example}

\begin{example}\label{ex:maxplus-exponentially-ambiguous}
  Let $\Sigma=\{a,b,c\}$ and consider the function $f_{\max}\colon\Sigma^{*}\to\N$
  defined as follows.  For a word $w=w_0 c w_1 c\dots c w_n$ with
  $w_0,\dots,w_n\in\{a,b\}^*$, we let
  $f_{\max}(w)=\sum_{i=0}^n\max\{|w_i|_a,|w_i|_b\}$.
  Over the max-plus semiring $\N_{\max,+}$, this function is realized by the 
  automaton $\A$ below.
  \begin{center}
    \gusepicture{A3}
  \end{center}
  Notice that $\A$ is aperiodic and not polynomially ambiguous.  We show that
  $f_{\max}$ cannot be realized over the max-plus semiring by a polynomially
  ambiguous and aperiodic weighted automaton.
  
  Notice that a similar automaton was considered in \cite{KlimannLMP04}, the
  only difference being that $c$-transitions have weight 1.  It was shown that
  the corresponding series cannot be realized over $\N_{\max,+}$ by a finitely
  ambiguous weighted automaton, be it aperiodic or not.  Here we want to
  separate exponentially ambiguous from polynomially ambiguous.  We prove this
  separation for aperiodic automata which makes some of the arguments in the 
  proof simpler (essentially we have self-loops instead of cycles). The 
  separation also holds if we drop aperiodicity.
  
  Towards a contradiction, assume that there was a polynomially ambiguous and
  aperiodic weighted automaton $\B=(Q,\Sigma,\Delta,\wgt,I,F)$ which
  realizes the function $f_{\max}$.  We assume $\B$ to be trimmed.  We start with
  some easy remarks.
  \begin{enumerate}
    \item  If there is a cycle $p\xrightarrow{u^{k}}p$ in $\B$ with 
    $u\in\Sigma^{+}$ and $k\geq1$ then $p\xrightarrow{u}p$. 
    
    Let $m\geq1$ be the aperiodicity index of $\B$.  For $\ell k\geq m$ we have
    $u^{\ell k},u^{\ell k+1}\in\Lang{\B_{p,p}}$.  Since $\B$ is polynomially
    ambiguous, these cycles lie in some SCC which is unambiguous. If the cycle 
    around $p$ reading $u^{\ell k}$ is not a prefix of the cycle reading $u^{\ell k+1}$ 
    then we have two different cycles reading $u^{\ell k(\ell k+1)}$, a contradiction. 
    Therefore, the cycle reading $u^{\ell k+1}$ is 
    $p\xrightarrow{u^{\ell k}}p\xrightarrow{u}p$.
  
    \item  Consider a looping transition $\delta=(p,v,p)$ in $\B$ with 
    $v\in\Sigma$. Then, $\wgt(\delta)\in\{0,1\}$.
    
    Since $\B$ is trimmed, there is an accepting run
    $p_1\xrightarrow{u}p\xrightarrow{w}p_2$ with $|uw|\leq 2|Q|$.  We deduce
    that for all $\ell\geq0$ there is an accepting run reading $uv^{\ell}w$ with
    weight at least $\wgt(\delta)\cdot\ell$. Since 
    $f_{\max}(uv^{\ell}w)\leq\ell+|uw|$, we deduce that $\wgt(\delta)\in\{0,1\}$.
  
    \item  If there is a path 
    $p\xrightarrow{a}p\xrightarrow{v}q\xrightarrow{b}q$ in $\B$ with 
    $v\in\{a,b\}^{*}$, then one of the two looping transitions has weight zero:
    $\wgt(p,a,p)=0$ or $\wgt(q,b,q)=0$. 
    
    Since $\B$ is trimmed, there are two runs $p_1\xrightarrow{u}p$ and
    $q\xrightarrow{w}p_2$ with $p_1\in I$ initial, $p_2\in F$ final and
    $|uw|\leq 2|Q|$.  We deduce that for all $\ell\geq0$ there is an accepting
    run reading $ua^{\ell}vb^{\ell}w$ with weight at least
    $\ell\cdot(\wgt(p,a,p)+\wgt(q,b,q))$.  Since
    $f_{\max}(ua^{\ell}vb^{\ell}w)\leq\ell+|uvw|$, we deduce that
    $\wgt(p,a,p)+\wgt(q,b,q)\leq1$.
  \end{enumerate}
  Let $n=|Q|$ be the number of states in $B$.  We show below that for each
  $k\geq1$, the word $w_k=a^{n}n^{n}(ca^{n}b^{n})^{k-1}$ admits at least $2^{k}$
  accepting runs in $\B$.  This implies that $\B$ is not polynomially ambiguous,
  a contradiction.
  
  Let $M=\max(\wgt(\Delta))$ be the maximal weight used in $\B$.  Notice that
  $M\geq1$.  Fix $k\geq1$ and let $N\geq 2knM$.  Define $u_0=a^{N}b^{n}$ and
  $u_1=a^{n}b^{N}$.  For each word $x=x_1\cdots x_k\in\{0,1\}^{k}$, define
  $w_x=u_{x_1}cu_{x_2}c\cdots cu_{x_k}$ and consider an accepting run $\rho_x$
  of $\B$ reading $w_x$ and realizing $f_{\max}(w_x)=kN$.  For each $1\leq j\leq
  k$, we focus on the subrun $\rho^{j}_x$ of $\rho_x$ reading $u_{x_j}$.  
  
  Assume that $x_j=0$.  Using the remarks above, we deduce that the prefix of
  $\rho^{j}_x$ reading $a^N$ is of the form
  \begin{equation}
    p_1\xrightarrow{a^{\ell_1}}p_1\xrightarrow{a}p_2
    \xrightarrow{a^{\ell_2}}p_2\xrightarrow{a} \cdots
    \xrightarrow{a}p_{m}\xrightarrow{a^{\ell_{m}}}p_{m}
    \label{eq:prefix-run}
  \end{equation}
  where $p_1,\ldots,p_{m}$ are pairwise distinct and
  $N=m-1+\ell_1+\cdots+\ell_{m}$.  Since looping $a$-transitions have weights in
  $\{0,1\}$, we deduce that $\wgt(\rho^{j}_x)\leq N+(2n-1)M$.  We claim that in
  $\rho^{j}_x$, some $a$-loop has weight 1.  If this is not the case, then
  $\wgt(\rho^{j}_x)\leq (2n-1)M$.  We deduce that $\wgt(\rho_x)\leq
  (k-1)(N+(2n-1)M)+(2n-1)M+(k-1)M=(k-1)N+(2nk-1)M$, but 
  $\wgt(\rho_x)=kN=f_{\max}(w_x)$, a contradiction with $N\geq 2knM$.  Let
  $(p_i,a,p_i)$ be some $a$-loop of weight 1 in $\rho^{j}_x$.  We replace the
  prefix of $\rho^{j}_x$ reading $a^N$ with
  $$
  p_1\xrightarrow{a^{i-1}}p_i\xrightarrow{a^{n-m+1}}p_i
  \xrightarrow{a^{m-i}}p_m
  $$
  to obtain a run $\hat{\rho}^{j}_x$ reading $a^{n}b^{n}$.
  The suffix of $\rho^{j}_x$ reading $b^{n}$ has a form similar to
  \eqref{eq:prefix-run}, having at least one $b$-loop since $n=|Q|$.  From the
  third remark above, all $b$-loops in $\rho^{j}_x$ have weight 0. We deduce 
  that $\hat{\rho}^{j}_x$ has one $a$-loop with weight 1 but all its $b$-loops 
  have weight 0.
  
  We proceed similarly when $x_j=1$ defining a run $\hat{\rho}^{j}_x$ reading 
  $a^{n}b^{n}$ where all $a$-loops have weight 0 and one $b$-loop has weight 1.
  Now, consider the run $\hat{\rho}_x$ obtained from $\rho_x$ by replacing 
  $\rho^{j}_x$ with $\hat{\rho}^{j}_x$ for each $1\leq j\leq k$. We see that 
  $\hat{\rho}_x$ is an accepting run for $w_k$. Also, if $x,y\in\{0,1\}^{k}$ 
  are different then $\hat{\rho}_x\neq\hat{\rho}_y$. Therefore, $\B$ has at 
  least $2^{k}$ accepting runs reading $w_k$, which concludes the proof.
  \qed
\end{example}

\begin{example}\label{ex:minplus-exponentially-ambiguous}
  Let $\Sigma=\{a,b,c\}$ and consider the function $f_{\min}\colon\Sigma^{*}\to\N$
  defined as follows.  For a word $w=w_0 c w_1 c\dots c w_n$ with
  $w_0,\dots,w_n\in\{a,b\}^*$, we let
  $f_{\min}(w)=\sum_{i=0}^n\min\{|w_i|_a,|w_i|_b\}$.
  Over the min-plus semiring $\N_{\min,+}$, this function is realized by the 
  automaton $\A$ depicted in Example~\ref{ex:maxplus-exponentially-ambiguous}
  which is aperiodic and not polynomially ambiguous.  
  It was shown in~\cite{mazowieckiR2018}, that in the min-plus semiring
  there is no polynomially ambiguous weighted automaton $\B$ with $\sem\A=\sem\B$.
  \qed
\end{example}

Next we wish to show that aperiodic polynomially ambiguous weighted automata
are strictly more expressive than aperiodic finitely ambiguous weighted
automata.

\begin{example}\label{ex:7.5}
  Let $\Sigma$ be any alphabet, $\Weights$ a set of weights and $\A$ an
  aperiodic polynomially ambiguous weighted automaton which is not finitely
  ambiguous.  We may argue as in Example~\ref{ex:7.2} to show that there is no
  finitely ambiguous weighted automaton $\B$ with $\usem\A=\usem\B$,
  respectively, under the assumptions of Example~\ref{ex:7.2}, with
  $\sem\A=\sem\B$ for the idempotent semiring
  $(\mathcal{P}_\text{fin}(\Weights^*),\cup,\cdot,\emptyset,\{\varepsilon\})$.
  \qed
\end{example}

We show that this is also the case for the semiring of natural numbers
$\N_{+,\times}$, the max-plus semiring $\N_{\max,+}$ and the min-plus semiring
$\N_{\min,+}$.

\begin{example}\label{ex:plustimes-polynomially-ambiguous}
  Consider the following automaton $\A$ over $\Sigma=\{a\}$ and the semiring
  $\N_{+,\times}$.
  \begin{center}
    \gusepicture{A12}
  \end{center}
  Clearly, $\sem\A(a^n)=n$ for each $n>0$, and $\A$ is aperiodic and
  polynomially (even linearly) ambiguous. But $\A$ is not equivalent to
  any finitely ambiguous weighted automaton.

  \smallskip %
  Towards a contradiction, suppose there was a trimmed finitely ambiguous
  weighted automaton $\B$ with $\sem\B=\sem{\A}$.

  \begin{remark*}
    Let $q\xrightarrow{a^m}q$ be a loop in $\B$ with weight $\alpha$, where
    $m\geq 1$. Then $\alpha=1$.
  \end{remark*}

  Indeed, choose a path in $\B$ from $I$ to $q$ with label $u$ and a path from
  $q$ to $F$ with label $v$. Then $\sem\B(u a^{m n}v)\geq\alpha^n$, for each $n\in\N$.
  On the other hand, $f(u a^{m n}v)=|u v|+m\cdot n$. Hence $\alpha\geq 2$ is
  impossible, showing $\alpha=1$.

  Consequently, in paths of $\B$ we may remove all loops without changing the
  weight.  Hence there is $C\in\N$ such that $\wgt(\rho)\leq C$ for each run
  $\rho$ of $\B$.  Since $\B$ is finitely ambiguous, it follows that
  $\{\sem\B(w)\mid w\in\Sigma^*\}$ is bounded.  This contradicts $\sem\B=\sem{\A}$.
  \qed
\end{example}

\begin{example}\label{ex:maxplus-polynomially-ambiguous}
  Consider the following automaton $\A$ over $\Sigma=\{a,b\}$ and $\N_{\max,+}$.
  \begin{center}
    \gusepicture{A5}
  \end{center}
  Note that $\A$ is almost identical to the automaton of
  Example~\ref{ex:minplus-polynomially-ambiguous}-2, used for $\N_{\min,+}$ in
  \cite{mazowieckiR2018}.
  Now for $f=\sem\A$ we have $f(w)=\max\{|u|_a+|v|_b\mid w=uv\}$ for
  each $w\in\Sigma^+$. Clearly, $\A$ is aperiodic and polynomially ambiguous.
  Now, we show that no aperiodic finitely ambiguous weighted
  automaton is equivalent to $\A$ over $\N_{\max,+}$.
  
  \smallskip %
  Suppose there was a trimmed weighted automaton $\B=(Q,\Sigma,\Delta,\wgt,I,F)$
  both aperiodic and finitely ambiguous, and with $\sem\B=f$.  We make
  the following observations on the structure of $\B$.
  
  \begin{myremark}\label{remark:1}
    If $\B$ contains a loop $q\xrightarrow{a^k}q$ for some $q\in Q$ and $k\geq 1$,
    then $t=(q,a,q)\in\Delta$, and the loop is a sequence of this transition $t$.
    
    This follows from the fact that $\B$ is aperiodic and unambiguous on the
    strong component containing $q$ (as in
    Example~\ref{ex:maxplus-exponentially-ambiguous}).
  \end{myremark}
  
  \begin{myremark}\label{remark:2}
    $\B$ cannot contain a path of the form
    $p\xrightarrow{a}p\xrightarrow{a^k}q\xrightarrow{a}q$ with $p\neq q$.
    
    Indeed, otherwise the word $a^{n+k}$ would have at
    least $n+1$ different paths from $p$ to $q$. Since $\B$ is trimmed, this
    contradicts the finite ambiguity of $\B$.
  \end{myremark}
  
  \begin{myremark}\label{remark:3}
    If $(q,a,q)\in T$ and $\alpha=\wgt(q,a,q)$, then $\alpha\in\{0,1\}$.
    
    Indeed,
    let $u$ be the label of a path from $I$ to $q$ and $v$ the label of a path
    from $q$ to $F$. Let $w_n=u a^n v$. Then $f(w_n)\leq |u v|+n$, and
    $\sem\B(w_n)\geq \alpha\cdot n$ for each $n\in\N$. This shows that $\alpha\leq 1$.
  \end{myremark}
  
  \begin{myremark}\label{remark:4}
    $\B$ cannot contain a path of the form
    $p\xrightarrow{b\mid 1}p\xrightarrow{v}q\xrightarrow{a\mid 1}q$ with 
    $v\in\Sigma^{*}$.
    
    Indeed, otherwise let $u$ be a label of
    a path from $I$ to $p$ and $w$ the label of a path from $q$ to $F$.
    Consider $w_n=u b^n v a^n w$ ($n\in\N$). Then $f(w_n)\leq |u v w|+n$ but
    $\sem\B(w_n)\geq 2n$, a contradiction for $n> |u v w|$.
  \end{myremark}
  
  \begin{lemma}\label{lem:uambmv}
    Let $m\geq |Q|$ and $u,v\in\Sigma^*$. Then $\B$ contains an accepting path
    for the word $u a^m b^m v$ of the form
    
    \begin{gpicture}[name=A8,ignore]
      \node[Nmarks=i](1)(0,0){$i$}
      \node(2)(20,0){$p$}
      \node(3)(40,0){$q$}
      \node[Nmarks=f](4)(60,0){$f$}
      \drawloop(2){$a\mid 1$}
      \drawloop(3){$b\mid 1$}
      \drawedge(1,2){$u a^{k_1}$}
      \drawedge(2,3){$a^{k_2} b^{k_3}$}
      \drawedge(3,4){$b^{k_4} v$}
    \end{gpicture}
    \centerline{\gusepicture{A8}}
    \noindent with $k_1,k_2,k_3,k_4<|Q|$.
  \end{lemma}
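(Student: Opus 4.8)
The plan is to extract the required structure from a maximum-weight accepting run on a long word $u a^N b^N v$ and then transplant it to the prescribed exponent $m$. So I would first fix some $N\geq|Q|$ which is large (how large is quantified below) and choose a maximum-weight accepting run $\rho_N$ of $\B$ for $u a^N b^N v$; such a run exists since $\B$ is finitely ambiguous and $f(u a^N b^N v)\in\N$, and since $\sem\B=f$ it has weight exactly $f(u a^N b^N v)$. Reading $a^N$ visits $N+1>|Q|$ states, so by the pigeonhole principle $\rho_N$ repeats a state inside the $a$-block, producing a loop reading a power of $a$; by Remark~\ref{remark:1} this loop is a sequence of a single self-loop transition, so some state $p$ on the $a$-block carries a self-loop $(p,a,p)$. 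By Remark~\ref{remark:2} no two distinct states on an $a$-path can both carry $a$-self-loops, hence $p$ is the unique such state in the $a$-block. Moreover, $\B$ is finitely ambiguous and trimmed, hence SCC-unambiguous, and this forbids $\rho_N$ from leaving $p$ and returning: a detour $p\to p'\to\cdots\to p$ reading $a^j$ with $p'\neq p$ would give, together with the $j$-fold self-loop, two distinct runs from $p$ to $p$ (which lie in one SCC) on the same label $a^j$. Thus the visits to $p$ are contiguous and the $a$-block is traversed as a simple $a$-path to $p$, then self-loops at $p$, then a simple $a$-path from $p$; the two simple parts are disjoint off $p$ (a shared state would again yield a second $a$-self-loop state), so their lengths $k_1,k_2$ satisfy $k_1+k_2<|Q|$. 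Symmetrically, the $b$-block is traversed through a unique self-loop state $q$ with $(q,b,q)\in\Delta$ and simple $b$-parts of lengths $k_3,k_4$ with $k_3+k_4<|Q|$. Writing $i,f$ for the initial and final states of $\rho_N$, this yields the decomposition $i\xrightarrow{u a^{k_1}}p\xrightarrow{a^{k_2}b^{k_3}}q\xrightarrow{b^{k_4}v}f$ with self-loops at $p$ and $q$.

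Next I would pin down the weights of these two self-loops. By Remark~\ref{remark:3} we have $\alpha:=\wgt(p,a,p)\in\{0,1\}$ and $\beta:=\wgt(q,b,q)\in\{0,1\}$. All transitions of $\rho_N$ other than these self-loops number at most $|u|+|v|+k_1+k_2+k_3+k_4<|u|+|v|+2|Q|$, so their total weight is bounded by a constant $C_0$ independent of $N$, while the self-loops contribute $\alpha(N-k_1-k_2)+\beta(N-k_3-k_4)\leq(\alpha+\beta)N$. Hence $f(u a^N b^N v)=\wgt(\rho_N)\leq C_0+(\alpha+\beta)N$. On the other hand, splitting $u a^N b^N v$ at the $a/b$ boundary gives $f(u a^N b^N v)\geq(|u|_a+N)+(N+|v|_b)=2N+|u|_a+|v|_b$. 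Choosing $N$ with $N>C_0-|u|_a-|v|_b$ then forces $\alpha+\beta=2$, that is $\alpha=\beta=1$: both self-loops have weight $1$.

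Finally I would transplant this structure to the prescribed $m$. The three connecting segments $i\xrightarrow{u a^{k_1}}p$, $p\xrightarrow{a^{k_2}b^{k_3}}q$ and $q\xrightarrow{b^{k_4}v}f$ are genuine runs of $\B$ (they are sub-runs of $\rho_N$) and the weight-$1$ self-loops $(p,a,p),(q,b,q)$ are transitions of $\B$; none of this depends on $m$. Since $m\geq|Q|>k_1+k_2$ and $m\geq|Q|>k_3+k_4$, the counts $m-k_1-k_2$ and $m-k_3-k_4$ are strictly positive, so inserting $m-k_1-k_2$ copies of the self-loop at $p$ and $m-k_3-k_4$ copies of the self-loop at $q$ produces an accepting run for $u a^{k_1}a^{m-k_1-k_2}a^{k_2}b^{k_3}b^{m-k_3-k_4}b^{k_4}v=u a^m b^m v$ of exactly the required shape, with $k_1,k_2,k_3,k_4<|Q|$.

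The main obstacle is that the weight-$1$ conclusion for the two self-loops is only available once $N$ is large enough to beat the constant $C_0$, whereas the lemma is asserted for every $m\geq|Q|$, including small ones; the crux of the argument is precisely that the relevant data (the states $p,q$, their weight-$1$ self-loops, and the short connecting segments) is $m$-independent, so it can be harvested from one large $N$ and then re-parametrised by the loop multiplicities to serve any $m\geq|Q|$. A secondary delicate point is the contiguity of the visits to $p$ (and to $q$), for which SCC-unambiguity, rather than mere aperiodicity, is essential.
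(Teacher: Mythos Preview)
Your proof is correct and follows essentially the same strategy as the paper's: pick a maximum-weight accepting run on $ua^Nb^Nv$ for large $N$, extract the single-loop structure on each block via Remarks~\ref{remark:1}--\ref{remark:3}, force both loop weights to be $1$ by comparing $\wgt(\rho_N)$ against $f(w_N)\geq 2N$, and then reduce the loop multiplicities to hit the target exponent $m$. Your write-up is in fact more careful than the paper's in two places---you argue contiguity of the visits to the self-loop state directly via SCC-unambiguity (the paper folds this into Remark~\ref{remark:1}), and you establish the sharper bound $k_1+k_2<|Q|$ (not just $k_i<|Q|$), which is exactly what makes the final reduction to any $m\geq|Q|$ go through.
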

  \begin{proof}
    Let $n\geq m$ and $w_n=u a^n b^n v$. Then $f(w_n)\geq 2n$.
    Consider a path $\rho$ for $w_n$ in $\B$ with $\wgt(\rho)=f(w_n)$. The subpath
    of $\rho$ realizing $a^n$ must contain at least one $a$-loop, and by
    Remarks~\ref{remark:1} and~\ref{remark:2} it contains exactly one $a$-loop
    which is a power of a single transition.
    
    Hence $\rho$ has the form
    
    \begin{gpicture}[name=A9,ignore]
      \node[Nmarks=i](1)(0,0){$i$}
      \node(2)(20,0){$p$}
      \node(3)(40,0){$q$}
      \node[Nmarks=f](4)(60,0){$f$}
      \drawloop(2){$a\mid \alpha$}
      \drawloop(3){$b\mid \beta$}
      \drawedge(1,2){$u a^{k_1}$}
      \drawedge(2,3){$a^{k_2} b^{k_3}$}
      \drawedge(3,4){$b^{k_4} v$}
    \end{gpicture}
    \centerline{\gusepicture{A9}}
    
    \noindent with $k_1,k_2,k_3,k_4<|Q|$, and where the transition
    $(p,a,p)$ is taken $n-k_1-k_2$ times and the transition
    $(q,b,q)$ is taken $n-k_3-k_4$ times in $\rho$.
    
    By Remark~\ref{remark:3}, we have $\alpha,\beta\in\{0,1\}$.
    Let $\rho_1 \rho_2 \rho_3$ be the path obtained from $\rho$ by deleting the
    loops at $p$ and at $q$: $\rho_1=i\xrightarrow{ua^{k_1}}p$,
    $\rho_2=p\xrightarrow{a^{k_2}b^{k_3}}q$, and
    $\rho_3=q\xrightarrow{b^{k_4}v}f$.
    Let $c=\wgt(\rho_1 \rho_2 \rho_3)$. 
    Then $\wgt(\rho)\leq c+n\cdot\alpha+n\cdot\beta$.
    
    But $\wgt(\rho)=f(w_n)\geq 2n$.  Since $u,v\in\Sigma^*$ are fixed, there are
    only finitely many values $c=\wgt(\rho_1 \rho_2 \rho_3)\in\N$ which can
    arise in $\B$ as above with $i,p,q,f\in Q$ and $k_1,k_2,k_3,k_4<|Q|$,.  By
    choosing $n$ larger than their maximum, we obtain a path for $w_n=u a^n b^n
    v$ as above and now for this path it follows that $\alpha=\beta=1$.  By
    reducing the number of loops taken at $p$ and at $q$, we obtain an accepting
    path of the prescribed form for $w_m=u a^m b^m v$, proving the lemma.
  \end{proof}

  Now, let $m\geq|Q|$ and consider the word $w_K=(b^m a^m)^K$ ($K\in\N$).  For
  all $0<k<K$ we can write $w_K=u_k a^m b^m v_k$ with $u_k=(b^m a^m)^{k-1}b^m$
  and $v_k=a^m (b^m a^m)^{K-k-1}$.  We apply Lemma~\ref{lem:uambmv} to the word
  $u_k a^m b^m v_k$ and obtain a path $\rho_k$ of the form

  \begin{gpicture}[name=A10,ignore]
    \node[Nmarks=i](1)(0,0){}
    \node(2)(20,0){}
    \node(3)(40,0){}
    \node[Nmarks=f](4)(60,0){}
    \drawloop(2){$a\mid 1$}
    \drawloop(3){$b\mid 1$}
    \drawedge(1,2){$u_k a^{k_1}$}
    \drawedge(2,3){$a^{k_2} b^{k_3}$}
    \drawedge(3,4){$b^{k_4} v_k$}
  \end{gpicture}
  \centerline{\gusepicture{A10}}

  \noindent
  We claim that if $0<k<k'<K$, then $\rho_k\neq \rho_{k'}$.
  Indeed, if $\rho_k=\rho_{k'}$, we see that the path $\rho_{k'}$ must have the form
  
  \begin{gpicture}[name=A11,ignore]
    \node[Nmarks=i](1)(0,0){}
    \node(2)(20,0){}
    \node(3)(40,0){}
    \node(4)(60,0){}
    \node(5)(80,0){}
    \node[Nmarks=f](6)(100,0){}
    \drawloop(2){$a\mid 1$}
    \drawloop(3){$b\mid 1$}
    \drawloop(4){$a\mid 1$}
    \drawloop(5){$b\mid 1$}
    \drawedge(1,2){$u_k a^{k_1}$}
    \drawedge(2,3){$a^{k_2} b^{k_3}$}
    \drawedge[dash={1.5}0](3,4){}
    \drawedge(4,5){$a^{k_2'} b^{k_3'}$}
    \drawedge(5,6){$b^{k_4'} v_k'$}
  \end{gpicture}
  \centerline{\gusepicture{A11}}
  \noindent contradicting Remark~\ref{remark:4}. Therefore $\B$ contains at
  least $K-1$ accepting paths for $w_K$ ($K\in\N$). This contradicts $\B$
  being finitely ambiguous.
  
  We just note that by similar arguments and further analysing the weights of
  loops, it can be shown that $\A$ is not equivalent to any finitely ambiguous
  weighted automaton, even if it is not aperiodic.
  \qed
\end{example}

\begin{example}\label{ex:minplus-polynomially-ambiguous}
  Let $\Sigma=\{a,b\}$. 
  \begin{enumerate}
    \item Consider the following weighted automaton $\A$ over $\Sigma$ and
    $\N_{\min,+}$ from~\cite[p.558]{Kirsten_2008}:

    \begin{gpicture}[name=A2,ignore]
      \node[Nmarks=i](1)(0,0){}
      \node(2)(20,0){}
      \node[Nmarks=f](3)(40,0){}
      \drawloop(1){$\begin{array}{c} a \mid 0 \\ b \mid 0 \end{array}$}
      \drawloop(2){$a \mid 1$}
      \drawloop(3){$\begin{array}{c} a \mid 0 \\ b \mid 0 \end{array}$}
      \drawedge(1,2){$b \mid 0$}
      \drawedge(2,3){$b \mid 0$}
    \end{gpicture}

    \centerline{\gusepicture{A2}}

    Here $\sem\A(w)$ is the least $\ell\geq 0$ such that $b a^\ell b$ is a
    factor of $w$.  If $w$ does not admit a factor of this form, than
    $\sem\A(w)=\infty$.  Clearly, $\A$ is SCC-unambiguous and aperiodic, but, as
    shown in~\cite[Proposition 3.2]{Kirsten_2008}, $\A$ is not equivalent to any
    finitely ambiguous weighted automaton.
    
    \item Consider the following weighted automaton $\A$ over $\Sigma$ and
    $\N_{\min,+}$ from~\cite{mazowieckiR2018}:

    \begin{gpicture}[name=A4,ignore]
      \node[Nmarks=if,iangle=-90,fangle=90](1)(0,0){}
      \node[Nmarks=f,iangle=-90,fangle=90](2)(20,0){}
      \drawloop[loopangle=180,ELdist=-1](1){$\begin{array}{c} a \mid 1 \\ b \mid 0 \end{array}$}
      \drawloop[loopangle=0,ELdist=-1](2){$\begin{array}{c} a \mid 0 \\ b \mid 1 \end{array}$}
      \drawedge(1,2){$\begin{array}{c} a \mid 0 \\ b \mid 0 \end{array}$}
    \end{gpicture}

    \centerline{\gusepicture{A4}}

    Then $\sem\A(w)=\min\{|u|_a+|v|_b \mid w=u v\}$. Clearly, $\A$ is aperiodic and
    polynomially ambiguous. As shown in~\cite[Example 15]{mazowieckiR2018},
    as a consequence of a pumping lemma, $\A$ is not equivalent to any
    finitely ambiguous weighted automaton.
    \qed
  \end{enumerate}
\end{example}

Finally, we wish to show that aperiodic finitely ambiguous weighted automata are
strictly more expressive than aperiodic unambiguous weighted automata.
Clearly, this can be derived for the idempotent semiring
$(\mathcal{P}_\text{fin}(\Weights^*),\cup,\cdot,\emptyset,\{\varepsilon\})$
as in Examples~\ref{ex:7.2} and \ref{ex:7.5}.  We show that this is also the
case for the semirings $\N_{+,\times}$, $\N_{\max,+}$ and $\N_{\min,+}$.

\begin{example}\label{ex:plustimes-finitely-ambiguous}
  Let  $\Sigma = \{a,b\}$  and consider the automaton $\A$ below
  over the semiring  $\N_{+,\times}$ of natural numbers.
  
  \centerline{\gusepicture{A34}}
  
  Clearly, $\A$  is aperiodic and 2-ambiguous, and
  $\sem\A(w) = 2^{|w|_a} + 3^{|w|_b}$  for each  $w \in \Sigma^*$.
  We show that no unambiguous weighted automaton is equivalent to $\A$.
  
  Suppose there was an unambiguous weighted automaton  $\B$  with $n$ states
  and with  $\sem\B = \sem\A$. Consider  $w = a^{n+2}b$.
  There is a unique successful path in $\B$  for $w$, having weight
  $\sem\B(w) = \sem\A(w) = 2^{n+2} + 3$. Then this path contains an
  $a$-loop $\rho$ of length $m \leq n$ and with  $\wgt(\rho) = C \in \N$.
  We have  $\sem\A(a^{n+m+2}b) = 2^{n+m+2} + 3$  and
  $\sem\B(a^{n+m+2}b) = C \cdot \sem\B(w) = C \cdot (2^{n+2} + 3)$.
  So  $2^{n+m+2} + 3 = C \cdot (2^{n+2} + 3)$. Then  $C < 2^m$.
  But  $(2^m - 1)\cdot(2^{n+2} + 3) < 2^{n+m+2} + 3$  as  $m \leq n$, a contradiction.  
\end{example}

\begin{example}\label{ex:minmaxplus-finitely-ambiguous}
  Let  $\Sigma = \{a,b\}$  and consider the automaton $\A$ below
  with weights in $\N$.
  
  \centerline{\gusepicture{A35}}
  
  Clearly, $\A$  is aperiodic and 2-ambiguous. Over the semiring $\N_{\max,+}$ we have
  $\sem{\A}_{\max}(w) = \max\{|w|_a, |w|_b\}$, and over the semiring $\N_{\min,+}$ we have
  $\sem{\A}_{\min}(w) = \min\{|w|_a, |w|_b\}$.
  As shown in \cite[p.255]{KlimannLMP04}, resp.\ \cite[Example 8]{mazowieckiR2018}, in both cases there is no
  unambiguous weighted automaton equivalent to  $\A$.  
\end{example}

%%%%%%%%%%%%%%%%%%%%%%%%%%%
\section{Conclusion}

We introduced a model of aperiodic weighted automata and showed that a suitable
concept of weighted first order logic and two natural sublogics have the same
expressive power as polynomially ambiguous, finitely ambiguous, resp.\
unambigous aperiodic weighted automata.  For the three semirings
$\N_{+,\times}$, $\N_{\max,+}$ and $\N_{\min,+}$ we showed that the hierarchies
of these automata classes and thereby of the corresponding logics are strict.

Our main theorem generalizes to the weighted setting a classical result of
automata theory.  A challenging open problem is to obtain similar results for
suitable weighted linear temporal logics.  Another interesting problem is to
characterize \wFO with unrestricted weighted products, possibly using
\emph{aperiodic} restrictions of the pebble weighted automata studied in
\cite{BGMZ-ICALP2010,KreutzerR2013,BolligGMZ14}.

Decidability problems for \wFO or equivalently for weighted aperiodic automata
are also open and very interesting.  For instance, given a \wMSO sentence, is
there an equivalent \wFO sentence?  Decidability may indeed depend on the
specific semiring.

% \clearpage

% \bibliography{biblio}

% \end{document}

\end{document}